 \newif\ifnotes\notesfalse
\definecolor{mygrey}{gray}{0.50}
\newcommand{\notename}[2]{{\textcolor{mygrey}{\footnotesize{\bf (#1:} {#2}{\bf ) }}}}
\newcommand{\notename}[2]{{}}
\newcommand{\anote}[1]{{\notename{Aravindan}{#1}}}
\newcommand{\bnote}[1]{{\notename{Aditya}{#1}}}
\newcommand{\pnote}[1]{{\notename{Aidan}{#1}}}
\newcommand{\cnote}[1]{{\notename{Aidao}{#1}}}
\newtheorem{theorem}{Theorem}[section]
\newtheorem*{theorem*}{Theorem}
\newtheorem{proposition}[theorem]{Proposition}
\newtheorem*{proposition*}{Proposition}
\newtheorem{lemma}[theorem]{Lemma}
\newtheorem*{lemma*}{Lemma}
\newtheorem{corollary}[theorem]{Corollary}
\newtheorem*{conjecture*}{Conjecture}
\newtheorem*{fact*}{Fact}
\newtheorem*{hypothesis*}{Hypothesis}
\newtheorem{itheorem}[theorem]{Informal Theorem}
\newtheorem*{claim*}{Claim}
\theoremstyle{definition}
\newtheorem{definition}[theorem]{Definition}
\theoremstyle{remark}
\newtheorem{remark}[theorem]{Remark}
\newtheorem*{remark*}{Remark}
\newtheorem{observation}[theorem]{Observation}
\newtheorem*{observation*}{Observation}
\newcommand{\eat}[1]{}
\newcommand{\R}{\mathbb{R}}
\newcommand{\Z}{\mathbb{Z}}
\newcommand{\calD}{\mathcal{D}}
\newcommand{\calN}{\mathcal{N}}
\newcommand{\calS}{{S}}
\newcommand{\nvgap}{\vspace{-6pt}}
\newcommand{\poly}{\mathrm{poly}}
 \newcommand{\set}[1]{\{ #1 \} }
\newcommand{\norm}[1]{\lVert #1 \rVert}
\newcommand{\iprod}[1]{\langle#1\rangle}
\newcommand{\Iprod}[1]{\left\langle#1\right\rangle}
\newcommand{\Esymb}{\mathbb{E}}
\newcommand{\Psymb}{\mathbb{P}}
\DeclareMathOperator*{\E}{\Esymb}
\DeclareMathOperator*{\ProbOp}{\Psymb}
\renewcommand{\Pr}{\ProbOp}
\newcommand{\diag}{\text{diag}}
\newcommand{\tr}{\text{tr}}
\newcommand{\tA}{\tilde{A}}
\newcommand{\eps}{\varepsilon}
\renewcommand{\epsilon}{\varepsilon}
\newcommand{\bu}{\mathbf{u}}
\newcommand{\bD}{\mathbf{D}}
\newcommand{\wh}[1]{\widehat{#1}}
\newcommand{\ot}{\otimes}
\newcommand{\ta}{\widetilde{a}}
\newcommand{\tu}{\widetilde{u}}
\newcommand{\Proj}{\mathrm{Proj}}
\newcommand{\tOcal}{\widetilde{\mathcal{O}}}
\newcommand{\Ocal}{\mathcal{O}}
 \newcommand{\calU}{\mathcal{U}}
 \newcommand{\calV}{\mathcal{V}}
 \newcommand{\calW}{\mathcal{W}}
 \newcommand{\abs}[1]{\lvert #1 \rvert }
\title{Smoothed Analysis in Unsupervised Learning via Decoupling}
\author{Aditya Bhaskara\thanks{School of Computing, University of Utah. Email: \textsf{bhaskaraaditya@gmail.com}.} \and Aidao Chen\thanks{Northwestern University. Email: \textsf{aidao@u.northwestern.edu}.} \and Aidan Perreault\thanks{Northwestern University. Email: \textsf{aperreault@u.northwestern.edu}.} \and Aravindan Vijayaraghavan\thanks{Northwestern University. Email: \textsf{aravindv@northwestern.edu}. The second, third and last authors were supported by the National Science Foundation (NSF) under Grant No.~CCF-1652491 and CCF-1637585. Additionally, the third author was supported by an undergraduate research grant from Northwestern University. }}
\date{}
\begin{document}
\maketitle

\begin{abstract}

Smoothed analysis is a powerful paradigm in overcoming worst-case intractability in unsupervised learning and high-dimensional data analysis. While polynomial time smoothed analysis guarantees have been obtained for worst-case intractable problems like tensor decompositions and learning mixtures of Gaussians, such guarantees have been hard to obtain for several other important problems in unsupervised learning. A core technical challenge in analyzing algorithms is obtaining lower bounds on the least singular value for random matrix ensembles with dependent entries, that are given by low-degree polynomials of a few base underlying random variables.

In this work, we address this challenge by obtaining high-confidence lower bounds on the least singular value of new classes of structured random matrix ensembles of the above kind. 
We then use these bounds to design algorithms with polynomial time smoothed analysis guarantees for the following three important problems in unsupervised learning:

\begin{itemize}
\item {\em Robust subspace recovery}, when the fraction $\alpha$ of inliers in the $d$-dimensional subspace $T \subset \R^n$ is at least $\alpha > (d/n)^\ell$ for any constant $\ell \in \Z_+$. This contrasts with the known worst-case intractability when $\alpha< d/n$, and the previous smoothed analysis result which needed $\alpha > d/n$ (Hardt and Moitra, 2013).   
\item {\em Learning overcomplete hidden markov models}, where the size of the state space is any polynomial in the dimension of the observations. This gives the first polynomial time guarantees for learning overcomplete HMMs in the smoothed analysis model.
\item {\em Higher order tensor decompositions}, where we generalize and analyze the so-called FOOBI algorithm of Cardoso to find order-$\ell$ rank-one tensors in a subspace. This allows us to obtain polynomially robust decomposition algorithms for $2\ell$'th order tensors with rank $O(n^{\ell})$. 
  
\end{itemize}   
\anote{4/4: changed order of HMM and FOOBI}
\end{abstract}

\section{Introduction}


Several basic computational problems in unsupervised learning like learning probabilistic models, clustering and representation learning are intractable in the worst-case. Yet practitioners have had remarkable success in designing heuristics that work well on real-world instances. Bridging this disconnect between theory and practice is a major challenge for many problems in unsupervised learning and high-dimensional data analysis. 


\anote{4/4: shortened. maybe more?}
The paradigm of Smoothed Analysis~\cite{ST00} has proven to be a promising avenue when the algorithm has only a few isolated bad instances.  
Given {\em any instance} from the whole problem space (potentially the worst input), smoothed analysis gives good guarantees for most instances in a small neighborhood around it; this is formalized by small random perturbations of worst-case inputs.  This powerful beyond worst-case paradigm has 
been used to analyze the simplex algorithm for solving linear programs~\cite{ST00}, linear binary optimization problems like knapsack and bin packing~\cite{BV06}, multi-objective optimization~\cite{MO11}, local max-cut~\cite{lmc1,lmc2}, and supervised learning~\cite{KST09}.
Smoothed analysis gives an elegant way of interpolating between traditional average-case analysis and worst-case analysis by varying the size of the random perturbations. 


In recent years, smoothed analysis has been particularly useful in unsupervised learning and high-dimensional data analysis, where the hard instances often correspond to adversarial degenerate configurations. For instance, consider the problem of finding a low-rank decomposition of an order-$\ell$ tensor that can be expressed as $T \approx \sum_{i=1}^k a_i \otimes a_i \otimes \dots \otimes a_i$. It is NP-hard to find a rank-$k$ decomposition in the worst-case when the rank $k \ge 6n$\cite{Has90} (this setting where the rank $k \ge n$ is called the \emph{overcomplete} setting). On the other hand, when the factors of the tensor $\set{a_i}_{i \in [k]}$ are perturbed with some small amount of random Gaussian noise, there exist polynomial time algorithms that can successfully find a rank-$k$ decomposition with high probability even when the rank is $k=O(n^{\lfloor (\ell-1)/2 \rfloor})$~\cite{BCMV}. Similarly, parameter estimation for basic latent variable models like mixtures of spherical Gaussians has exponential sample complexity in the worst case
~\cite{MV10}; yet, polynomial time guarantees can be obtained using smoothed analysis, where the parameters (e.g., means for Gaussians) are randomly perturbed in high dimensions~\cite{HK12, BCMV, ABGRV14, GHK}.\footnote{In many unsupervised learning problems, the random perturbation to the parameters can not be simulated by perturbations to the input (i.e., samples from the mixture). Hence unlike binary linear optimization~\cite{BV06}, such smoothed analysis settings in learning are not limited by known NP-hardness and hardness of approximation results.} 
 Smoothed analysis results have also been obtained for other problems like overcomplete ICA~\cite{GVX14}, learning mixtures of general Gaussians~\cite{GHK},  
fourth-order tensor decompositions~\cite{MSS}, and recovering assemblies of neurons~\cite{ADMPSV18}.   
\anote{Should we mention the neural nets application? }


The technical core of many of the above smoothed analysis results involves analyzing the minimum singular value of certain carefully constructed random matrices with dependent entries. Let $\set{\widetilde{a}_1, \widetilde{a}_2, \dots, \widetilde{a}_k}$ be random (Gaussian) perturbations of the points $\set{a_1, \dots, a_k} \subset \R^n$ (think of the average length of the perturbation to be $\rho=1/\poly(n)$). Typically, these correspond to the unknown parameters of the probabilistic model that we are trying to learn.  Proving polynomial smoothed complexity bounds often boils down to proving an inverse polynomial lower bound on the least singular value of certain matrices (that depend on the algorithm), where every entry is a multivariate polynomial involving some of the perturbed vectors $\set{\widetilde{a}_1, \dots, \widetilde{a}_k}$. These bounds need to hold with a sufficiently small failure probability over the randomness in the perturbations. 
%

Let us now consider some examples to give a flavor of the statements that arise in applications.
\begin{itemize}
    \item In learning mixtures of spherical Gaussians via tensor decomposition, the key matrix that arises is the ``product of means'' matrix, in which the number of columns is $k$, the number of components in the mixture, and the $i$th column is the flattened tensor $\ta_i^{\otimes \ell}$, where $\ta_i$ is the mean of the $i$th component.
    \item In the so-called FOOBI algorithm for tensor decomposition (proposed by~\cite{Cardoso}, which we will study later), the complexity as well as correctness of the algorithm depend on a special matrix $\Phi$ being well conditioned. $\Phi$ has the following form: each column corresponds to a pair of indices $i,j \in [k]$, and the $(i,j)$th column is $\ta_i^{\ot 2} \otimes \ta_j^{\ot 2} - (\ta_i \ot \ta_j)^{\ot 2}$.
    \item In learning hidden Markov models (HMMs), the matrix of interest is one in which each column is a sum of appropriate monomials of the form $\ta_{i_1} \ot \ta_{i_2} \ot \dots \ot \ta_{i_{\ell}}$, where $i_1 i_2 \dots i_{\ell}$ correspond to {\em length-$\ell$ paths} in the graph being learned. \bnote{are they really paths?}
\end{itemize}

For many of the recent algorithms based on spectral and tensor decomposition methods (e.g., ones in~\cite{AMR09, AGHKT12}), one can write down matrices whose condition numbers determine the performance of the corresponding algorithms (in terms of running time, error tolerance etc.). While there is a general theory with broadly applicable techniques (sophisticated concentration bounds) to derive high confidence upper bounds on the maximum singular value of such dependent random matrix ensembles, there are comparatively fewer general tools for establishing lower bounds on the minimum singular value (this has more of an ``anti-concentration'' flavor), except in a few special cases such as tensor decompositions (using ideas like partitioning co-ordinates).

The high level question that motivates this paper is the following: {\em can we obtain a general characterization of when such matrices have a polynomial condition number with high probability?} For instance, in the first example, we may expect that as long as $k < \binom{n+\ell-1}{\ell}$, the matrix has an inverse polynomial condition number (note that this is $\ll n^{\ell}$ due to the symmetries). 


There are two general approaches to the question above.  The first is a characterization that follows from results in algebraic geometry (see~\cite{AMR09, Strassen1983Rank}). These results state that the matrix of polynomials either has a sub-matrix whose determinant is the identically zero polynomial, or that the matrix is {\em generically} full rank. This means that the set of $\{\ta_i\}$ that result in the matrix having $\sigma_{\min} = 0$ has measure zero. However, note that this characterization is far from being quantitative. For polynomial time algorithms, we typically need $\sigma_{\min} \ge 1/\poly(n)$ with high probability (this is because polynomial sample complexity often requires these algorithms to be robust to inverse polynomial error). A second approach is via known anti-concentration inequalities for polynomials (such as the Carbery-Wright inequality~\cite{CarberyWright}). In certain settings, these can be used to prove that each column must have at least a small non-zero component orthogonal to the span of the other columns (which would imply a lower bound on $\sigma_{\min}$). However, it is difficult to use this approach to obtain strong enough probability guarantees for the condition number.

Our main contributions are twofold. 
The first is to prove lower bounds on the least singular value for some broad classes of random matrix ensembles where the entries are low-degree multivariate polynomials of the entries of a given set of randomly perturbed vectors. The technical difficulty arises due to the correlations in the perturbations (as different matrix entries could be polynomials of the same ``base'' variables). We note that even in the absence of correlations, (i.e., if the entries are perturbed independently), analyzing the least singular value is non-trivial and has been studied extensively in random matrix theory (see \cite{Taobook, RudelsonV}). \bnote{rephrased}
 
Our second contribution is to leverage these results and prove new smoothed analysis guarantees for learning overcomplete hidden markov models, and design algorithms with improved bounds for overcomplete tensor decompositions and for robust subspace recovery.


\section{Our Results and Techniques} \label{sec:results}

\anote{4/4: removed some introductory paragraphs.}

\subsection{Lower bounds on the Least Singular Value.}
The first setting we consider is a simple yet natural one. Suppose we have $k$ independently perturbed vectors $\widetilde{a}_1, \dots, \widetilde{a}_k$, and suppose we have a matrix in which each column is a fixed polynomial function of precisely one of the variables. We give a sufficient condition under which $\sigma_k$ ($k$th largest singular value, or the least singular value here since there are only $k$ columns) of this matrix is at least inverse polynomial with high probability.


\begin{theorem}\label{thm:columnpolynomials}
Let $\ell \in \Z_+$ be a constant and let $f: \R^n \to \R^m$ be a map defined by $m$ homogeneous polynomials $\{f_i\}_{i=1}^m$ of degree $\ell$.  Suppose that
$$f_i(x)=\sum_{\substack{J=(j_1, \dots, j_\ell) \in [n]^\ell \\ j_1 \le j_2 \le \dots \le j_\ell}} U_i(j_1, \dots, j_\ell) x_{j_1} x_{j_2} \dots x_{j_\ell},$$
and let $U \in \R^{m \times {n+\ell-1 \choose \ell}}$ denote the matrix of coefficients, with $i^{th}$ row $U_i$ corresponding to $f_i$.  For vectors $a_1, a_2, \dots, a_k \in \R^{n}$, let $M_f (a_1, a_2, \dots, a_k)$ denote the $m \times k$ matrix whose $(i,j)$th entry is $f_i(a_j)$. 
Then for any set of vectors $\{a_i\}_{i=1}^k$, with probability at least $1-k \exp\big(-\Omega_\ell(\delta n) \big)$,
\begin{equation}\label{eq:columnpolynomials_intro}
    \sigma_k\Big( M_f(\widetilde{a}_1, \dots, \widetilde{a}_k) \Big) \ge \frac{\Omega_\ell(1)}{\sqrt{k}}\Big(\frac{\rho}{n}\Big)^\ell  \cdot \sigma_{k+\delta {n+\ell-1 \choose \ell}}(U),
\end{equation}
where $\widetilde{a_j}$ represents a random perturbation of $a_j$ with independent Gaussian noise $N(0,\rho^2/n)^n$.
\end{theorem}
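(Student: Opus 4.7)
The plan is to combine a leave-one-out reduction with a new anti-concentration inequality for the Veronese map of a perturbed Gaussian vector, proved by decoupling.

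\textbf{Leave-one-out reduction.} I would begin from the standard bound $\sigma_k(M_f) \ge \min_j d_j / \sqrt{k}$, where $d_j$ is the distance from the $j$th column of $M_f$ to the span of the others: for any unit $\alpha \in \R^k$, projecting $M_f \alpha$ onto the orthogonal complement of the other columns gives $\|M_f \alpha\| \ge |\alpha_j| d_j$, and some coordinate of $\alpha$ has magnitude at least $1/\sqrt{k}$. A union bound over $j \in [k]$ then reduces the theorem to proving, for each fixed $j$, that conditionally on $\{\widetilde{a}_i\}_{i \ne j}$,
\[
 d_j \;\ge\; \Omega_\ell(1) \cdot (\rho/n)^\ell \cdot \sigma_{k + \delta N}(U), \qquad N := \binom{n + \ell - 1}{\ell},
\]
with probability at least $1 - \exp(-\Omega_\ell(\delta n))$.

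\textbf{Reducing to anti-concentration for the Veronese map.} Write $f = U \circ v$, where $v : \R^n \to \R^N$ is the degree-$\ell$ Veronese map that lists the distinct monomials (so $f(x) = U v(x)$). Let $W$ be the span of the columns of index $\ne j$; then $\dim W \le k - 1$, and $d_j = \|P_{W^\perp} U v(\widetilde{a}_j)\|$. Since $P_{W^\perp}$ is a projection of corank at most $k - 1$, Cauchy interlacing gives $\sigma_i(P_{W^\perp} U) \ge \sigma_{i + k - 1}(U)$ for all $i$, so the span $V$ of the top $\lceil \delta N \rceil$ right singular vectors of $P_{W^\perp} U$ has dimension at least $\delta N$ and
\[
  \|P_{W^\perp} U v(\widetilde{a}_j)\| \;\ge\; \sigma_{k + \delta N}(U) \cdot \|P_V v(\widetilde{a}_j)\|.
\]
The task therefore reduces to the following statement: for \emph{any} subspace $V \subseteq \R^N$ with $\dim V \ge \delta N$ and any $a \in \R^n$,
\[
 \Pr_{\widetilde{a} \sim N(a,\, (\rho^2/n) I_n)}\!\Bigl[\,\|P_V v(\widetilde{a})\| \le c_\ell (\rho/n)^\ell\Bigr] \;\le\; \exp(-\Omega_\ell(\delta n)),
\]
for some constant $c_\ell > 0$ depending only on $\ell$.

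\textbf{Decoupling-based anti-concentration.} To prove this key inequality, I would proceed by induction on $\ell$, using decoupling to separate the $\ell$ ``copies'' of $\widetilde{a}$ appearing in $v(\widetilde{a})$. Concretely, partition $[n]$ into $\ell$ blocks of size $\Theta(n/\ell)$, and write $g = g^{(1)} + \cdots + g^{(\ell)}$ for the corresponding decomposition of the Gaussian part of $\widetilde{a}$. After polarization, the degree-$\ell$ polynomial vector $v(\widetilde{a})$ is a symmetric sum of multilinear expressions in $g^{(1)}, \dots, g^{(\ell)}$. The inductive step fixes $g^{(2)}, \dots, g^{(\ell)}$ and views $P_V v(\widetilde{a})$ as a linear function of $g^{(1)}$: a Gaussian anti-concentration bound in $\R^{n/\ell}$ shows that its norm is at least $\Omega(\rho/n)$ times a factor measuring its ``effective number of large singular directions,'' with failure probability $\exp(-\Omega(n/\ell))$. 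Iterating over the $\ell$ blocks peels off one factor of $\rho/n$ per stage and accumulates the desired overall $\exp(-\Omega_\ell(\delta n))$ failure probability.

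\textbf{The main obstacle} is propagating the ``many large singular directions'' hypothesis through the $\ell$ iterations of the decoupling argument. A naive restriction to the blockwise decoupled monomials (those where the $t$th variable lies in block $S_t$) can destroy most of the subspace $V$, and a Carbery--Wright-style inequality only yields inverse polynomial tails, whereas we need $\exp(-\Omega(\delta n))$. The inductive step must therefore be accompanied by a careful quantitative argument showing that after conditioning on $t$ of the $\ell$ blocks, the matrix of linear forms in the remaining blocks still possesses $\Omega_\ell(\delta) \cdot \binom{n/\ell + \ell - t - 1}{\ell - t}$ singular values of magnitude $\Omega_\ell(1)$. Controlling the accumulated constants so that only the $\Omega_\ell(1)$ prefactor of the theorem is lost, rather than super-exponential-in-$\ell$ factors, is the principal technical difficulty.
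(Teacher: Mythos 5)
Your first two steps --- the leave-one-out reduction and the interlacing/Sylvester argument that converts the hypothesis $\sigma_{k+\delta N}(U)\ge \eta$ into a subspace $V$ of dimension $\ge \delta N$ onto which one must lower-bound $\|P_V v(\widetilde a_j)\|$ --- coincide with the paper's Lemma~\ref{lem:leftmultiply}, and that part is sound. The statement you reduce to is precisely Theorem~\ref{thm:newdecoupling}.

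The gap is in your proposed proof of that key anti-concentration statement. Decoupling by partitioning $[n]$ into $\ell$ coordinate blocks confines you to the blockwise multilinear monomials, a space of dimension $(n/\ell)^\ell$, which is only an $\ell!/\ell^\ell\approx e^{-\ell}$ fraction of $N=\binom{n+\ell-1}{\ell}$. A subspace $V$ with $\dim V=\delta N$ for small $\delta$ can be entirely orthogonal to that blockwise space, in which case the linear forms obtained after conditioning on $\ell-1$ of the blocks vanish identically and the induction yields nothing. So the ``careful quantitative argument'' you defer to in your final paragraph is not a technical refinement but the entire content of the theorem: this obstruction is exactly why the coordinate-partitioning route (as in BCMV) only handles $k$ up to an $e^{-\ell}$ fraction of $N$, whereas the theorem claims $k\le(1-\delta)N$ --- and that tightness is what the robust subspace recovery application requires. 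The paper does not partition coordinates at all. It writes the perturbation as $z=z_0+\zeta_1 z_1+\cdots+\zeta_{\ell-1}z_{\ell-1}$ with \emph{full-dimensional} independent Gaussians $z_i\in\R^n$ and Rademacher signs $\zeta_j$, proves a positive-correlation inequality (Lemma~\ref{lem:positivecorrelation}, following Lovett) bounding $\Pr[z\in S]$ by the $2^{-(\ell-1)}$th power of the probability that all $2^{\ell-1}$ sign combinations lie in $S$, and then uses the signed-sum identity of Lemma~\ref{lem:deg-l} to extract the decoupled multilinear form $\iprod{M,(u+z_0)\otimes z_1\otimes\cdots\otimes z_{\ell-1}}$, to which the decoupled-tensor result of Anari et al.\ (Lemma~\ref{lem:quantitativebound}) applies with $\delta'=\delta/\ell^\ell$. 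The $\ell^\ell$ is lost only in the constant in the exponent of the failure probability, not in the admissible dimension of $V$. To complete your proof you would need to replace the block-decoupling step by an argument of this kind; showing that the blockwise restriction of an arbitrary $V$ of dimension $\delta N$ remains large is false in general.
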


\anote{4/4: changed statement}
\anote{4/4: removed Carbery-Wright generalization forward pointer.} 
To obtain a non-trivial bound, note that we need $\sigma_{k +\delta \binom{n+\ell-1}{\ell}}(U) >0$. Qualitatively, $\sigma_k (U)$ being $>0$ is unavoidable. But more interestingly, we will see that the second term is also necessary. In particular, we demonstrate that 
$\Omega(\delta n^{\ell})$ non-trivial singular values are necessary for the required concentration bounds even when $k=1$ (see Proposition~\ref{claim:counterexample} for details). \bnote{is the $k=1$ correct?} 
In this sense, Theorem~\ref{thm:columnpolynomials} gives an almost tight condition for the least singular value of the above random matrix ensemble to be non-negligible. 

For an illustration of Theorem~\ref{thm:columnpolynomials}, consider the simple vector-valued polynomial function $f(x)=x^{\otimes \ell} \in \R^{n^\ell}$ (the associated matrix $U$ essentially just corresponds to the identity matrix, with some repeated rows). If $\widetilde{a}_1, \dots, \widetilde{a}_k \in \R^n$ are randomly perturbed vectors, the above theorem shows that the least singular value of the matrix $M_f(\widetilde{a}_1, \dots, \widetilde{a}_k)$ is inverse polynomial with exponentially small failure probability, as long as $k \le (1-o(1)) {n+\ell-1 \choose \ell}$ (earlier results only establish this when $k$ is smaller by a $\exp(\ell)$ factor, because of partitioning co-ordinates). In fact, the above example will be crucial to derive improved smoothed polynomial time guarantees for robust subspace recovery even in the presence of errors (Theorem~\ref{thm:robustrecovery:maintheorem}). 

The next setting we consider is one where the $j$th column of $M$ does not depend solely on $\ta_j$, but on a small subset of the columns in $\set{\ta_1, \dots, \ta_k}$ in a structured form. 
Specifically, in the random matrix ensembles that we consider, each of the $R$ columns of the matrix depends on a few of the vectors in $a_1, \dots, a_k$ as a ``monomial'' in terms of tensor products i.e., each column is of the form $u_1 \otimes u_2 \otimes \dots \otimes u_\ell$ where $u_1, u_2, \dots, u_\ell \in \set{\widetilde{a}_1, \dots, \widetilde{a}_k}$. To describe our result here, we need some notation. For two monomials $u_1 \otimes \dots \otimes u_\ell$ and $v_1 \otimes \dots \otimes v_\ell$, we say that they disagree in $s$ positions if $u_i \ne v_i$ for exactly $s$ different $i \in [\ell]$.  For a fixed column $j \in [R], s \in \set{0, 1, \dots, \ell}$, let $\Delta_s(j)$ represent the number of other columns whose monomial disagrees with that of column $j$ in exactly $s$ positions, and let $\Delta_s=\max_{j \in [R]} \Delta_s(j)$. (Note that $\Delta_0 = 0$ and $\Delta_\ell \le R$ by default).   
\begin{theorem} \label{thm:indepmonomials}
Let $\{\widetilde{a}_1, \ldots, \widetilde{a}_k\} \subseteq \R^n$ be a set of $\rho$-perturbed vectors, let $\ell \in \Z_+$ be a constant, and let $M \in \R^{n^\ell \times R}$ be a matrix whose columns $M_1, \ldots, M_R$ are tensor monomials in $\{\widetilde{a}_i\}_{i \in [k]}$. Let $\Delta_s$ be as above for $s = 1, \ldots, \ell$. If
\begin{equation}
    \sum_{s=1}^\ell \Delta_s \cdot \left(\frac{n}{\ell}\right)^{\ell-s} \le c\left(\frac{n}{\ell}\right)^\ell
\end{equation}
for some $c \in (0,1)$, then $\sigma_{R}(M) > \Omega_\ell(1)\cdot (\rho/n)^\ell/\sqrt{R}$ with probability at least $1-\exp\big(-\Omega_\ell(1-c)n + \log R \big)$.
\end{theorem}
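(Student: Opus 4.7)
The overall strategy I would use is leave-one-out. Define $d_j := \operatorname{dist}(M_j, \operatorname{span}\{M_r : r \neq j\})$ for each $j \in [R]$; then $\sigma_R(M) \geq (\min_j d_j)/\sqrt{R}$ (project any unit vector $x$ onto $\operatorname{span}\{M_r : r \neq j^*\}^\perp$ for its largest coordinate $j^*$), so it suffices to show, for each fixed $j$, that $d_j \geq \Omega_\ell(1)(\rho/n)^\ell$ with failure probability at most $\exp(-\Omega_\ell((1-c)n))$, and then union-bound over $j$.

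The first technical step is decoupling via coordinate partitioning: split $[n]$ into $\ell$ disjoint blocks $S_1, \ldots, S_\ell$ of size $\lfloor n/\ell \rfloor$, and let $M_S$ denote the submatrix of $M$ obtained by restricting row indices to $S_1 \times \cdots \times S_\ell$. Clearly $d_j \geq d_j^{(S)}$. Writing $B_{i,p} := \widetilde{a}_i|_{S_p} \in \R^{n/\ell}$, the collection $\{B_{i,p}\}$ is \emph{jointly} independent across pairs $(i,p)$ (different positions use disjoint coordinates, and different identities were already independent), and each column of $M_S$ takes the form $\bigotimes_{p=1}^\ell B_{i_p^{(j)},p}$, a tensor product of $\ell$ independently perturbed Gaussian vectors.

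To lower bound $d_j^{(S)}$ I would construct a dual test vector $\varphi_j = \bigotimes_p w_p$, where $w_p \in \R^{n/\ell}$ is the projection of $B_{i_p^{(j)}, p}$ onto the orthogonal complement of $\operatorname{span}\{B_{k,p} : k \in L_p\}$ for a set $L_p$ to be chosen. I build the $L_p$'s by assigning, to each $r \neq j$, a witness position $\pi(r) \in D_r := \{p : i_p^{(r)} \neq i_p^{(j)}\}$ and including the index $i_{\pi(r)}^{(r)}$ in $L_{\pi(r)}$. Then $\langle \varphi_j, M_{S,r}\rangle = \prod_p \langle w_p, B_{i_p^{(r)},p}\rangle$ vanishes for every $r \neq j$ (the factor at position $\pi(r)$ is zero by construction), so $d_j^{(S)} \geq \langle \varphi_j, M_{S,j}\rangle/\|\varphi_j\| = \prod_p \|w_p\|$. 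By standard Gaussian anti-concentration applied independently in each block $S_p$, one obtains $\|w_p\| = \Omega(\rho\sqrt{(n/\ell - |L_p|)/n})$ with failure probability $\exp(-\Omega(n/\ell - |L_p|))$, as long as $|L_p| < n/\ell$.

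The hard part is exhibiting $\pi$ so that $|L_p| \leq (1-\gamma)(n/\ell)$ for an absolute constant $\gamma > 0$ and every $p$. The hypothesis $\sum_{s=1}^\ell \Delta_s (n/\ell)^{\ell-s} \leq c(n/\ell)^\ell$ is precisely the right flexibility: each $r$ with $|D_r| = s$ has $s$ candidate positions and can be spread fractionally with mass $1/s$ to each, and the weight $(n/\ell)^{\ell-s}$ is matched to the per-position budget after using that $|L_p|$ counts only \emph{distinct} indices---so many $r$'s sharing the same $i_p^{(r)}$ value at their witness position only occupy the budget once. I would phrase this as a fractional $b$-matching problem on a bipartite columns-to-positions graph, derive the required Hall-type feasibility condition from the hypothesis (exploiting the distinct-index count rather than the raw column count), and round to an integral $\pi$. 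Carrying out this combinatorial step carefully, and propagating the constants through the per-position anti-concentration estimates and the final union bound to get both the $(\rho/n)^\ell$ scaling and the exponential failure probability in the statement, is the main obstacle.
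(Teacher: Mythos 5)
Your reduction to leave-one-out distance and your coordinate-partitioning step are exactly the paper's first two moves, and they are fine. The gap is in the certificate you use to lower bound $d_j^{(S)}$. A rank-one dual vector $\varphi_j=\bigotimes_p w_p$ annihilates a column $M_{S,r}$ only if some single factor $\langle w_p, B_{i_p^{(r)},p}\rangle$ vanishes, and each $w_p$ can be orthogonal to at most $n/\ell-1$ generic vectors while remaining nonzero. So the total number of columns with pairwise-distinct witness indices that any such $\varphi_j$ can kill is at most about $n$. But the hypothesis permits $\Delta_\ell$ as large as $c(n/\ell)^\ell$: take $\ell=2$, columns $\ta_1\otimes\ta_2$ and $\ta_{2m+1}\otimes\ta_{2m+2}$ for $m=1,\dots,R-1$ with all base vectors distinct; then $\Delta_1=0$, $\Delta_\ell=R-1$, the hypothesis only forces $R\le c(n/2)^2$, every other column contributes a fresh index to whichever $L_p$ you assign it, and $|L_1|+|L_2|\ge R-1\gg n$. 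Your hoped-for Hall-type condition therefore does not follow from the hypothesis: the weight $(n/\ell)^{\ell-s}$ in the hypothesis is \emph{not} matched to a per-position budget of distinct indices once $s\ge 2$, and this is precisely the regime (columns differing in all or most positions) that dominates in the HMM and FOOBI applications.

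The fix, which is what the paper does, is to replace the rank-one certificate by the full orthogonal projection onto a covering subspace. For each $r\ne j$ one adds to a subspace $V$ the set of all tensors that agree with $M_{S,r}$ at the positions where $r$ \emph{differs} from $j$ and are arbitrary at the positions where they \emph{agree}; this contains $M_{S,r}$, is independent of the randomness in $M_{S,j}$, and costs $(n/\ell)^{s}$ dimensions when $s$ positions are shared --- in particular a column differing everywhere costs only one dimension, which is why $\Delta_\ell=\Theta(n^\ell)$ is affordable. The hypothesis then says exactly that $\dim V\le c(n/\ell)^\ell$, and one concludes by applying the subspace-projection bound for decoupled tensor products (Lemma~\ref{lem:quantitativebound}) to $\Pi_{V^\perp}M_{S,j}$ rather than per-block Gaussian anti-concentration. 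That lemma is genuinely needed here; it cannot be replaced by a product of one-dimensional anti-concentration estimates.
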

The above statement will be useful in obtaining smoothed polynomial time guarantees for learning overcomplete hidden markov models (Theorem~\ref{ithm:HMM}), and for higher order generalizations of the FOOBI algorithm of \cite{Cardoso} that gives improved tensor decomposition algorithms up to rank $k=n^{\lfloor \ell /2 \rfloor}$ for order $\ell$ tensors (Theorem~\ref{ithm:foobi}). In both these applications, the matrix of interest (call it $M'$) is not a monomial matrix per se, but we express its columns as linear combinations of columns of an appropriate monomial matrix $M$. Specifically, it turns out that $M'=MP$, and $P$ has full column rank (in a robust sense). For example, in the case of overcomplete HMMs, each column of $M'$ is a sum of monomial terms of the form $\ta_{i_1} \ot \ta_{i_2} \ot \dots \ot \ta_{i_{\ell}}$, where $i_1 i_2 \dots i_{\ell}$ correspond to {\em length-$\ell$ paths} in the graph being learned. Each term corresponding to a length-$\ell$ path only shares dependencies with other paths that share a vertex.   
\anote{4/4: removed the ``other'' statement, which we decided to remove from the paper.}


\nvgap
\paragraph{Failure probability.}

The theorems above emphasize the dependence on the failure probability. We ensure that the claimed lower bounds on $\sigma_{\min}$ hold with a sufficiently small failure probability, say $n^{-\omega(1)}$ or even exponentially small (over the randomness in the perturbations). This is important because in smoothed analysis applications, the failure probability essentially describes the fraction of points around any given point that are {\em bad} for the algorithm. 
In many of these applications, the time/sample complexity, or the amount of error tolerance (as in the robust subspace recovery application we will see) has an inverse polynomial dependence on the minimum singular value. Hence, if we have a guarantee that $\sigma_{\min} \ge \gamma$ with probability $\ge 1- \gamma^{1/2}$ (as is common if we apply methods such as the Carbery-Wright inequality), we have that the probability of the running time exceeds $T$ (upon perturbation) is $\le 1/\sqrt{T}$. Such a guarantee does not suffice to show that the expected running time is polynomial (also called polynomial smoothed complexity).

\subsubsection{Techniques} \label{sec:techniques}
Theorem~\ref{thm:columnpolynomials} crucially relies on the following theorem, which may also be of independent interest.  
\begin{itheorem} \label{ithm:sym}
Let $V_\ell$ be the space of all symmetric order $\ell$ tensors in $\R^{n \times n \times \dots \times n}$, and let $\calS \subset V_\ell$ be an arbitrary subspace of dimension $(1-\delta) {n+1-\ell \choose \ell}$, for some $0<\delta < 1$. Let $\Pi^{\perp}_S$ represents the projection matrix onto the subspace of $V_\ell$ orthogonal to $\calS$.  Then for any vector $x$ and its $\rho$-perturbation $\widetilde{x}$, we have that
$\norm{\Pi^{\perp}_S \widetilde{x}^{\otimes \ell}}_2 \ge 1/\poly_\ell(n,1/\rho)$ with probability at least  $1-\exp\big( - \Omega_\ell(\delta n)\big)$.
\end{itheorem}

The proofs of the theorems above 
use as a black-box the smoothed analysis result of Bhaskara et al.~\cite{BCMV} and the improvements in Anari et al.~\cite{ADMPSV18} which shows minimum singular value bounds (with exponentially small failure probability) for tensor products of vectors that have been independently perturbed. Given $\ell \times k$ randomly perturbed vectors $\set{\widetilde{a}^{(j)}_i: j \in [\ell], i \in [k]}$, existing results~\cite{BCMV,ADMPSV18} analyze the minimum singular value of a matrix $M$ where the $i$th column ($i \in [k]$) is given by $\widetilde{a}_i^{(1)} \otimes \widetilde{a}_i^{(2)} \otimes \dots \otimes \widetilde{a}_i^{(\ell)}$. However this setting does not suffice for proving Theorem~\ref{thm:columnpolynomials}, Theorem~\ref{thm:indepmonomials}, or the different applications presented here because existing results assume the following two conditions:
\begin{enumerate}
\item The perturbations to the $\ell$ factors of the $i$th column i.e., $a_i^{(1)}, \dots, a_i^{(\ell)}$ are independent. For proving Theorem~\ref{thm:columnpolynomials} (and for Theorem~\ref{ithm:subspacerecovery}) we need to analyze symmetric tensor products of the form $\widetilde{x}_i^{\otimes \ell}$, where the perturbations across the factors are the same.
\item Each column of $M$ depends on a disjoint set of vectors $\widetilde{a}_i^{(1)}, \dots, \widetilde{a}_i^{(\ell)}$, i.e., any vector $\widetilde{a}_i^{(j)}$ is involved in only one column. For proving Theorem \ref{thm:indepmonomials} (and later in Theorems~\ref{ithm:HMM} and \ref{ithm:foobi}) however, the same perturbed vector may appear in several columns of $M$.
\end{enumerate}

Our main tool for proving Theorem~\ref{thm:columnpolynomials}, and Theorem~\ref{thm:indepmonomials} are various decoupling techniques to overcome the dependencies that exists in the randomness for different terms. Decoupling inequalities~\cite{dlP} are often used to prove concentration bounds (bounds on the upper tail) for polynomials of random variables. However, in our case they will be used to establish lower bounds on the minimum singular values. This has an anti-concentration flavor, since we are giving an upper bound on the ``small ball probability'' i.e., the probability that the minimum singular value is close to a small ball around $0$. For Theorem~\ref{thm:columnpolynomials} (and Theorem~\ref{ithm:sym}) which handles symmetric tensor products, we use a combination of asymmetric decoupling along with a positive correlation inequality for polynomials that is inspired by the work of Lovett~\cite{Lovett}. 

\bnote{The next paragraph overlapped significantly with old stuf.. so I commented out parts}
We remark that one approach towards proving lower bounds on the least singular value for the random matrix ensembles that we are interested in, is through a direct application of anti-concentration inequalities for low-degree polynomials like the Carbery-Wright inequality (see \cite{ABGRV14} for smoothed analysis bounds using this approach). Typically this yields an $\eps = 1/\poly(n)$ lower bound on $\sigma_{\min}$ with probability $\eps^{1/\ell}$ (where $\ell$ is the degree). As we observed above, this cannot lead to polynomial smoothed complexity for many problems.

\anote{4/4: Added. Rephrase or OK? }

\anote{4/5: rephrased below}
Interestingly we prove along the way, a vector-valued version of the Carbery-Wright anti-concentration inequality~\cite{CarberyWright, NSV03} (this essentially corresponds to the special case of Theorem~\ref{thm:columnpolynomials} when $k=1$). In what follows, we will represent a homogenous degree $\ell$ multivariate polynomial $g_j: \R^n \to \R$ using the symmetric tensor $M_j$ of order $\ell$ such that $g_j(x)= \iprod{M_j , x^{\otimes \ell}}$ (please see Section~\ref{sec:prelims} for the formal notation). 

\begin{itheorem} \label{ithm:vectorCW}
Let $\epsilon, \delta \in (0,1)$, $\eta>0$, and let $g: \R^{n} \to \R^m$ be a vector-valued degree $\ell$ homogenous polynomial of $n$ variables given by $g(x)=(g_1(x), \dots, g_m(x))$ such that the matrix $M \in \R^{m \times n^{\ell}}$, with the $i$th row being formed by the co-efficients of the polynomial $g_i$, has $\sigma_{\delta n^{\ell}} (M) \ge \eta$. Then for any fixed $u \in \R^n, t \in \R^m$, and $x \sim N(0,\rho^2/n)^n$ we have
\begin{align}
    \Pr\Big[ \norm{g(u+x) - t}_2 < \Omega_\ell( \epsilon \eta) \cdot \Big(\frac{\rho^{\ell}}{n^\ell}\Big) \Big]   &< \epsilon^{\Omega_\ell(\delta n)}.
\end{align}
\end{itheorem}
See Theorem~\ref{thm:newvectorCW} for a more formal statement. The main feature of the above result is that while we lose in the ``small ball'' probability with the degree $\ell$, we gain an $m^{\Omega(1)}$ factor in the exponent on account of having a vector valued function. The interesting setting of parameters is when $\ell=O(1),\rho=1/\poly(n), \epsilon=\poly_\ell(\rho/n)$ and  $\delta=n^{-o(1)}$. We remark that the requirement of $\delta n^{\ell}$ non-trivial singular values is necessary, as described in Proposition~\ref{claim:counterexample}.

The second issue mentioned earlier about \cite{BCMV, ADMPSV18} is that in many applications each column depends on many of the same underlying few ``base'' vectors. Theorem~\ref{thm:indepmonomials} identifies a simple condition in terms of the amount of overlap between different columns that allows us to prove robust linear independence for very different settings like learning overcomplete HMMs and higher order versions of the FOOBI algorithm. Here the decoupling is achieved by building on the ideas in \cite{MSS}, by carefully defining appropriate subspaces where we can apply the existing results on decoupled tensor products~\cite{BCMV,ADMPSV18}.

We now describe how the above results give new smoothed analysis results for three different problems in unsupervised learning.  

\subsection{Robust Subspace Recovery}

Robust subspace recovery is a basic problem in unsupervised learning where we are given $m$ points $x_1, \dots, x_m \in \R^n$, an $\alpha \in (0,1)$ fraction of which lie on (or close to) a $d$-dimensional subspace $T$. When can we find the subspace $T$, and hence the ``inliers'', that belong to this subspace? This problem is closely related to designing a robust estimator for subspace recovery: a $\beta$-robust estimator for subspace recovery approximately recovers the subspace even when a $\beta$ fraction of the points are corrupted arbitrarily (think of $\beta=1-\alpha$). 
The largest value of $\beta$ that an estimator tolerates is called the breakdown point of the estimator. This problem has attracted significant attention in the robust statistics community~\cite{Rou84,RL05,DH83}, yet many of these estimators are not computationally efficient in high dimensions. On the other hand, the singular value decomposition is not robust to outliers. Hardt and Moitra~\cite{HM13} gave the first algorithm for this problem that is both computationally efficient and robust. Their algorithm successfully estimates the subspace $T$ when $\alpha> d/n$, assuming a certain non-degeneracy condition about both the inliers and outliers.\footnote{This general position condition holds in a smoothed analysis setting} This algorithm is also robust to some small amount of noise in each point i.e., the inliers need not lie exactly on the subspace $T$. They complemented their result with a computational hardness in the worst-case (based on the Small Set Expansion hypothesis) for finding the subspace when $\alpha< d/n$. 

We give a simple algorithm that for any constants $\ell \ge 1, \delta>0$ runs in $\poly(mn^{\ell})$ time and in a smoothed analysis setting, provably recovers the subspace $T$ with high probability, when $\alpha \ge (1+\delta) (d/n)^\ell$. 
Note that this is significantly smaller than the bound of $(d/n)$ from~\cite{HM13} when $\ell > 1$. 
For instance in the setting when $d=(1-\eta)n$ for some constant $\eta >0$ (say $\eta=1/2$), our algorithms recovers the subspace when the fraction of inliers is {\em any constant} $\alpha>0$ by choosing $\ell = O(\log(\alpha)/\log(1-\eta))$, while the previous result requires that at least $\alpha > 1 - \eta$ of the points are inliers. On the other hand, when $d/n = n^{-\Omega(1)}$ the algorithm can tolerate any inverse polynomially small $\alpha$, in polynomial time.   
In our smoothed analysis setting, each point is given a small random perturbation  -- each outlier is perturbed with a $n$-variate Gaussian $N(0,\rho^2)^n$ (think of $\rho=1/\poly(n)$), and each inlier is perturbed with a projection of a $n$-variate Gaussian $N(0,\rho^2)^n$ onto the subspace $T$. Finally, there can be some adversarial noise added to each point (this adversarial noise can in fact depend on the random perturbations).     
\anote{4/4: edited mildly. Reread. }

\begin{itheorem}\label{ithm:subspacerecovery}
For any $\delta\in(0,1), \ell\in \Z_+$ and $\rho>0$. 
Suppose there are $m = \Omega( n^{\ell}+ d /(\delta \alpha))$ points $x_1, \dots, x_m \in \R^n$ which are randomly $\rho$-perturbed according to the smoothed analysis model described above, with an $\alpha\geq (1+\delta)\binom{d+\ell-1}{\ell}/\binom{n+\ell-1}{\ell}$ fraction of the points being inliers, and total adversarial noise $\epsilon_0 \le \poly_\ell(\rho/m)$. Then there is an efficient algorithm that returns a subspace $T'$ with $\norm{sin\Theta(T,T')}_F\leq \poly_\ell(\epsilon_0,\rho, 1/m)$ with probability at least $1-\exp\big(-\Omega_\ell(\delta n)+ 2\log m \big) - \exp(-\Omega(d\log m))$.
\end{itheorem}

See Section~\ref{sec:robustsubspacerecovery} for a formal statement, algorithm and proof. 
While the above result gives smoothed analysis guarantees when $\alpha$ is at least $(d/n)^{\ell} < d/n$,  the hardness result of \cite{HM13} shows that finding a $d$-dimensional subspace that contains an $\alpha<d/n$ fraction of the points is computationally hard assuming the Small Set Expansion conjecture. Hence our result presents a striking contrast between the intractability result in the worst-case and a computationally efficient algorithm in a smoothed analysis setting when $\alpha > (d/n)^{\ell}$ for some constant $\ell\ge 1$.  
Further, we remark that the error tolerance of the algorithm (amount of adversarial error $\eps_0$) does not depend on the failure probability. \anote{4/4: Mention this in Section as well} 

\paragraph{Techniques and comparisons.}

The algorithm for robust subspace recovery at a high level follows the same approach as Hardt and Moitra~\cite{HM13}. Their main insight was that if we sample a set of size slightly less than $n$ from the input, and if the fraction of inliers is $> (1+\delta) d/n$, then there is a good probability of obtaining $>d$ inliers, and thus there exist points that are in the linear span of the others. Further, since we sampled fewer than $n$ points and the outliers are also in general position, one can conclude that the only points that are in the linear span of the other points are the inliers.

Our algorithm for handling smaller $\alpha$ is simple and is also tolerant to an inverse polynomial amount of adversarial noise in the points. Our first observation is that we can use a similar idea of looking for linear dependencies, but with tensored vectors!
Let us illustrate in the case $\ell = 2$. Suppose that the fraction of inliers is $> (1+\delta) \binom{d+1}{2} / \binom{n+1}{2}$. Suppose we take a sample of size slightly less than  $\binom{n+1}{2}$ points from the input, and consider the flattened vectors $x \ot x$ of these points. As long as we have more than $\binom{d+1}{2}$ inliers, we expect to find linear dependencies among the tensored inlier vectors. However, we need to account for the adversarial error in the points (this error could depend on the random perturbations as well). For each point, we will look for ``bounded'' 
linear combinations that are close to the given point.  
Using  Theorem~\ref{ithm:sym}, we can show that such dependencies cannot involve the outliers.  
This in turn allows us to recover the subspace even when $\alpha> (d/n)^{\ell}$ for any constant $\ell$ in a smoothed analysis sense. 

We remark that the earlier least singular value bounds of \cite{BCMV} can be used to show a weaker guarantee about robust linear independence of the matrix formed by columns $\tilde{x}_i^{\otimes \ell}$ with a $c^{\ell}$ factor loss in the number of columns (for a constant $c \approx e$). This translates to an improvement over \cite{HM13} only in the regime when $d< n/c$. Our tight characterization in Theorem~\ref{ithm:sym} is crucial for our algorithm to beat the $d/n$ threshold of \cite{HM13} for any dimension $d <n$. 

Secondly, if there is no adversarial noise added to the points, it is possible to use weaker concentration bounds (e.g., Carbery-Wright inequality). In this case, our machinery is not required (although to the best of our knowledge, even an algorithm for this noise-free regime with a breakdown point $<d/n$ was not known earlier). In the presence of noise, the weaker concentration inequalities require a noise bound that is tied to the intended failure probability of the algorithm in a strong way. Using Theorem~\ref{ithm:sym} allows us to achieve a large enough adversarial noise tolerance $\eps_0$, that does not affect the failure probability of the algorithm. 
\anote{4/4: mention the weak anticoncentration based statement.}
\bnote{Changed slightly; you might want to see if reverting makes sense.}

\subsection{Learning Overcomplete Hidden Markov Models}

Hidden Markov Models (HMMs) are latent variable models that are extensively used for data with a sequential structure, like reinforcement learning, speech recognition, image classification, bioinformatics etc \cite{Edd96,GY08}. In an HMM, there 
is a hidden state sequence $Z_1,Z_2,\dots,Z_m$ taking values in $[k]$, that forms a stationary Markov chain $Z_1 \rightarrow Z_2 \rightarrow \dots \rightarrow Z_m$ with transition matrix $P$ and initial distribution $w=\{w_j\}_{j \in [r]}$ (assumed to be the stationary distribution). The observation $X_t$ 
is represented by a vector in $x^{(t)} \in \R^n$. Given the state $Z_t$ at time $t$, $X_t$ (and hence $x^{(t)}$) is conditionally independent of all other observations and states. The matrix $\Ocal$ (of size $n \times r$) represents the probability distribution for the observations: the $i$th column $\Ocal_i \in \R^n$ represents the expectation of $X_t$ conditioned on the state $Z_t=i$ i.e.
$$\forall i \in [r], t \in [m],  \quad \E[X_t \vert Z_t = i]= \Ocal_i \in \R^n.$$
In an HMM with continuous observations, the distribution of the observation conditioned on state being $i$ can be a Gaussian and $i$th column of $\Ocal$ would correspond to its mean. 
In the discrete setting, each column of $\Ocal$ can correspond to the parameters of a discrete distribution over an alphabet of size $n$. 

An important regime for HMMs in the context of many settings in image classification and speech is the {\em overcomplete setting} where the dimension of the observations $n$ is much smaller than state space $r$. 
Many existing algorithms for HMMs are based on tensor decompositions, and work in the regime when $n\le r$~\cite{AHK12, AGHKT12}. In the overcomplete regime, there have been several works~\cite{AMR09, BCV, Huangetal} that establish identifiability (and identifiability with polynomial samples) under some non-degeneracy assumptions, but obtaining polynomial time algorithms has been particularly challenging in the overcomplete regime. Very recently Sharan et al.~\cite{Sharanetal} gave a polynomial time algorithm for learning the parameters of an overcomplete discrete HMMs when the observation matrix $M$ is random (and sparse), and the transition matrix $P$ is well-conditioned, under some additional sparsity assumptions on both the transition matrix and observation matrix (e.g.,  the degree of each node in the transition matrix $P$ is at most $n^{1/c}$ for some large enough constant $c >1$).  
Using Theorem~\ref{thm:indepmonomials}, we give a polynomial time algorithm in the more challenging smoothed analysis setting where entries of $M$ are randomly perturbed with small random Gaussian perturbations \footnote{While small Gaussian perturbations makes most sense in a continuous observation setting, we believe that these ideas should also imply similar results in the discrete setting for an appropriate smoothed analysis model.}.

\begin{itheorem}\label{ithm:HMM}
Let $\eta,\delta \in (0,1)$ be constants. Suppose we are given a Hidden Markov Model with $r$ states and with $n\ge r^{\eta}$ dimensional observations with hidden parameters $\widetilde{O},P$. Suppose the transition matrix $P$ is $d\le n^{1-\delta}$ sparse (both row and column) and $\sigma_\text{min}(P) \ge \gamma_1>0$, and the each entry of the observation matrix is $\rho$-randomly perturbed (in a smoothed analysis sense), and the stationary distribution $w \in [0,1]^r$ has $\min_{i \in [r]} w_i \ge \gamma_2>0$, then there is a polynomial time algorithm that uses samples of time window $\ell \le 1/(\eta \delta)$ and recovers the parameters up to $\epsilon$ accuracy (in Frobenius norm) in time $(n/(\rho\gamma_1\gamma_2\epsilon))^{O(\ell)}$, with probability at least $1-\exp\big( - \Omega_\ell(n)\big)$.
\end{itheorem}


For comparison, the result of Sharan et al.~\cite{Sharanetal} applies to discrete HMMs, and gives an algorithm that uses time windows of size $\ell=O(\log_n r)$ in time $\poly(n,r,1/\eps,1/\gamma_1, 1/\gamma_2)^{\ell}$ (there is no extra explicit lower bound on $n$).  But it assumes that the observation matrix $\Ocal$ is fully random, and has other assumptions about sparsity about both $\Ocal$ and $P$, 
and about non-existence of short cycles. On the other hand, we can handle the more general smoothed analysis setting for the observation matrix $\Ocal$ for $n=r^{\eta}$ (for any constant $\eta>0$), 
and assume no additional conditions about non-existence of short cycles.
\anote{11/28 added:} To the best of our knowledge, this gives the first polynomial time guarantees in the smoothed analysis setting for learning overcomplete HMMs. 

Our results complement the surprising sample complexity lower bound in Sharan et al.~\cite{Sharanetal} who showed that it is statistically impossible to recover the parameters with polynomial samples when $n=\text{polylog} (r)$, even when the observation matrix is random. The algorithm is based on an existing approach using tensor decompositions~\cite{AMR09, AGHKT12, BCMV, Sharanetal}. The robust analysis of the above algorithm (Theorem~\ref{ithm:HMM}) follows by a simple application of Theorem~\ref{thm:indepmonomials}.
\anote{4/5:rephrased.}

\subsection{Overcomplete Tensor Decompositions}
Tensor decomposition has been a crucial tool in many of the recent developments in showing learning guarantees for unsupervised learning problems.  The problem here is the following. Suppose $A_1, \dots, A_R$ are vectors in $\R^n$. Consider the $s$'th order moment tensor
\[ M_{s} = \sum_{i=1}^R A_i^{\ot s}. \]

\newcommand{\err}{\mathsf{Err}}
The question is if the decomposition $\{A_i\}$ can be recovered given access only to the tensor $M_s$. This is impossible in general. For instance, with $s = 2$, the $A_i$ can only be recovered up to a rotation.  The remarkable result of Kruskal~\cite{Kru77} shows that for $s > 2$, the decomposition in ``typically'' unique, as long as $R$ is a small enough. Several works~\cite{Har70, Cardoso, AGHKT12, MSS} have designed efficient recovery algorithms in different regimes of $R$, and assumptions on $\{A_i\}$. The other important question is if the $\{A_i\}$ can be recovered assuming that we only have access to $M_s + \err$, for some noise tensor $\err$. 

Works inspired by the sum-of-squares hierarchy achieve the best dependence on $R$ (i.e., handle the largest values of $R$), and also have the best noise tolerance, but require strong incoherence (or even Gaussian) assumptions on the $\{A_i\}$~\cite{BKS15, HSSS16}. Meanwhile, spectral algorithms (such as~\cite{GVX14, BCMV}) achieve a weaker dependence on $R$ and can tolerate a significantly smaller amount of noise, but they allow recoverability for smoothed vectors $\{A_i\}$, which is considerably more general than recoverability for random vectors. The recent work of~\cite{MSS} bridges the two approaches in the case $s=4$.

Our result here is a decomposition algorithm for $2\ell$'th order tensors that achieves efficient recovery guarantees in the smoothed analysis model, as long as $R \le c n^{\ell}$ for a constant $c$.  Our result is based on a generalization of the ``FOOBI algorithm'' of Cardoso~\cite{Cardoso, DLCC}, who consider the case $\ell = 2$.
We also give a robust analysis of this algorithm (both the FOOBI algorithm for $\ell=2$, and our generalization to higher $\ell$): we show that the algorithm can recover the decomposition to an arbitrary precision $\eps$ (up to a permutation), as long as $\norm{\err} \le \poly_{\ell} (\eps, 1/n, \rho)$, where $\rho$ is the perturbation parameter in the smoothed analysis model. 

\begin{itheorem}\label{ithm:foobi}
Let $\ell \ge 2$ be an integer. Suppose we are given a $2\ell$'th order tensor $T = \sum_{i=1}^R A_i^{\ot 2\ell} + \err$, where $A_i$ are $\rho$-perturbations of vectors with polynomially bounded length. Then with probability at least $1-\exp(-\Omega_\ell(n))$, we can find the $A_i$ up to any desired accuracy $\eps$ (up to a permutation), assuming that $R < c n^{\ell}$ for a constant $c = c(\ell)$, and $\norm{\err}_F$ is a sufficiently small polynomial in $\eps, \rho, 1/n$.
\end{itheorem}

See Theorem~\ref{thm:foobi-robust} and Section~\ref{sec:foobi} for a formal statement and details. We remark that there exists different generalizations of the FOOBI algorithm of Cardoso to higher $\ell>2$ ~\cite{DLCChigher}. However, to the best of our knowledge, there is no analysis known for these algorithms that is robust to inverse polynomial error. \anote{4/5: removed for $\ell>2$.} Further our new algorithm is a very simple generalization of Cardoso's algorithm to higher $\ell$.
\anote{11/28: why inverse poly robustness matters? Mentioned anywhere?}

This yields an improvement in the best-known dependence on the rank in such a smoothed analysis setting --- from $n^{\ell - 1}$ (from~\cite{BCMV}) to $n^{\ell}$. Previously such results were only known for $\ell =2$ in~\cite{MSS}, who analyzed an SoS-based algorithm that was inspired by the FOOBI algorithm (to the best of our knowledge, their results do not imply a robust analysis of FOOBI). Apart from this quantitative improvement, our result also has a more qualitative contribution: it yields an algorithm for the problem of finding symmetric rank-1 tensors in a linear subspace.

\begin{itheorem}\label{ithm:rank1-recovery}
Suppose we are given a basis for an $R$ dimensional subspace $S$ of $\R^{n^\ell}$ that is equal to the span of the flattenings of $A_1^{\ot \ell}, A_2^{\ot \ell}, \dots A_R^{\ot \ell}$, where the $A_i$ are unknown $\rho$-perturbed vectors.  Then the $A_i$ can be recovered in time $\poly_{\ell}(n, 1/\rho)$ with probability at least $1-\exp(-\Omega_\ell(n))$.  Further, this is also true if the original basis for $S$ is known up to an inverse-polynomial perturbation.
\end{itheorem}

\anote{4/5:Removed above due to repetition.}

\paragraph{Techniques.}
At a technical level, the FOOBI algorithm of~\cite{Cardoso, DLCC} for decomposing fourth-order tensors rests on a rank-1 detecting ``device'' $\Phi$ that evaluates to zero if the inputs are a symmetric product vector, and is non-zero otherwise. We construct such a device for general $\ell$, and further analyze the condition number of an appropriate matrix that results using Theorem~\ref{thm:indepmonomials}.

We also give an analysis of the robustness of the FOOBI algorithm of~\cite{Cardoso}
and our extension to higher $\ell$. While such robustness analyses are often straightforward, and show that each
of the terms estimated in the proofs will be approximately preserved.
In the case of the FOOBI algorithm, this turns out to be impossible to
do (this is perhaps one reason why proving robust guarantees for the FOOBI algorithm even for $\ell=2$ has been challenging) . The reason is that the algorithm involves finding the top $R$
eigenvectors of the flattened moment matrix, and setting up a linear
system of equations in which the coefficients are {\em non-linear}
functions of the entries of the eigenvectors. Now, unless each of the
eigenvectors is preserved up to a small error, we cannot conclude that
the system of equations that results is close to the one in the
noise-free case. Note that for eigenvectors to be preserved
approximately after perturbation, we will need to have {\em sufficient
gaps} in the spectrum to begin with. This turns out to be impossible
to guarantee using current smoothed analysis techniques. We thus need
to develop a better understanding of the solution to the linear
system, and eventually argue that even if the system produced is quite
different, the solution obtained in the end is close to the original.


\section{Preliminaries} \label{sec:prelims}
\anote{Need to write down various notations etc.}
\cnote{I use the fact that the rank of $x^\ell$ is at most $\binom{n+\ell-1}{\ell}$, is it trivial for reviewer, or should we mention it here?} \anote{I mentioned it towards the end of prelims.}
In this section, we introduce notation and preliminary results that will be used throughout the rest of the paper.

Given a vector $a \in \R^n$ and a $\rho$ (typically a small inverse polynomial in $n$), a {\em $\rho$-perturbation} of $a$ is obtained by adding independent Gaussian random variables $x_i \sim N(0, \rho^2/n)$ to each coordinate of $a$. The result of this perturbation is denoted by $\ta$.

We will denote the singular values of a matrix $M$ by $\sigma_1(M), \sigma_2(M), \ldots$, in decreasing order. We will usually use $k$ or $R$ to represent the number of columns of the matrix.  The maximum and minimum (nonzero) singular values are also sometimes written $\sigma_{max}(M)$ and $\sigma_{min}(M)$.

While estimating the minimum singular value of a matrix can be difficult to do directly, it is closely related to the {\em leave-one-out distance} of a matrix, which is often much easier to calculate.
\begin{definition}\label{def:leave-one-out}
Given a matrix $M \in \R^{n \times k}$ with columns $M_1, \ldots, M_k$, the leave-one-out distance of $M$ is
\begin{equation}
    \ell(M) = \min_i dist(M_i, \mathrm{Span}\{M_j : j \neq i\}).
\end{equation}
\end{definition}
The leave-one-out distance is closely related to the minimum singular value, up to a factor polynomial in the number of columns of $M$~\cite{RudelsonV}.
\begin{lemma}\label{lem:leave-one-out}
For any matrix $M \in \R^{n\times k}$, we have 
\begin{equation}
    \frac{\ell(M)}{\sqrt{k}} \le \sigma_{min}(M) \le \ell(M).
\end{equation}
\end{lemma}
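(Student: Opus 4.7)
The plan is to prove the two inequalities separately, both via direct manipulations involving either a nearly-null vector of $M$ or the coefficients of the optimal leave-one-out representation. Both directions are standard linear algebra.

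For the upper bound $\sigma_{\min}(M) \le \ell(M)$, the idea is to convert the optimal leave-one-out representation into a test vector for the definition of $\sigma_{\min}$. I would fix the index $i^\ast$ achieving the minimum in $\ell(M)$, write $M_{i^\ast} = \sum_{j \neq i^\ast} c_j M_j + r$ where $r$ is orthogonal to $\mathrm{Span}\{M_j : j \neq i^\ast\}$ and $\|r\| = \ell(M)$, and then define $v \in \R^k$ by $v_{i^\ast} = -1$ and $v_j = c_j$ for $j \neq i^\ast$. By construction $Mv = -r$, and $\|v\| \ge |v_{i^\ast}| = 1$, so $\sigma_{\min}(M) \le \|Mv\|/\|v\| \le \|r\| = \ell(M)$.

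For the lower bound $\ell(M)/\sqrt{k} \le \sigma_{\min}(M)$, I would go in the reverse direction: take a unit vector $v$ with $\|Mv\| = \sigma_{\min}(M)$, and use its largest entry to build a representation. Let $i^\ast = \arg\max_i |v_i|$, so $|v_{i^\ast}| \ge 1/\sqrt{k}$ since $\|v\|_2 = 1$. Solving $Mv = \sigma_{\min}(M)\cdot w$ for $M_{i^\ast}$ gives
\[ M_{i^\ast} = \frac{1}{v_{i^\ast}}\Bigl( Mv - \sum_{j \neq i^\ast} v_j M_j \Bigr), \]
so the vector $-\frac{1}{v_{i^\ast}} \sum_{j\neq i^\ast} v_j M_j$ lies in $\mathrm{Span}\{M_j : j \neq i^\ast\}$ and its distance from $M_{i^\ast}$ is $\|Mv\|/|v_{i^\ast}| \le \sqrt{k}\,\sigma_{\min}(M)$. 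Hence $\ell(M) \le \mathrm{dist}(M_{i^\ast}, \mathrm{Span}\{M_j: j \neq i^\ast\}) \le \sqrt{k}\,\sigma_{\min}(M)$, which rearranges to the claim.

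Neither step presents a real obstacle; the only subtlety is making sure the ``largest coordinate'' argument in the lower bound correctly handles the $\ell_2$ to $\ell_\infty$ conversion, which is why the factor $\sqrt{k}$ (rather than $k$) appears.
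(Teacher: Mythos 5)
Your proof is correct, and it is the standard argument for this fact (the paper does not prove the lemma itself but cites Rudelson--Vershynin, whose proof proceeds exactly this way: a test vector with a $-1$ entry for the upper bound, and the largest coordinate of a minimizing unit vector for the lower bound). Both directions check out, including the $\ell_2$-to-$\ell_\infty$ step giving $|v_{i^\ast}|\ge 1/\sqrt{k}$.
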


\paragraph{Tensors and multivariate polynomials.}
An order-$\ell$ tensor $T \in \R^{n \times n \times \dots \times n}$ has $\ell$ modes each of dimension $n$. Given vectors $u, v \in \R^n$ we will denote by $u \otimes v \in \R^{n \times n}$ the outer product between the vectors $u, v$, and by $u^{\otimes \ell}$ the outer product of $u$ with itself $\ell$ times i.e., $u \otimes u \otimes \dots \otimes u$. 

We will often identify an $\ell$th order tensor $T$ (with dimension $n$ in each mode) with the vector in $\R^{n^{\ell}}$ obtained by flattening the tensor into a vector. For sake of convenience, we will sometimes abuse notation (when the context is clear) and use $T$ to represent both the tensor and flattened vector interchangeably. Given two $\ell$th order tensors $T_1, T_2$ the inner product $\iprod{T_1, T_2}$ denotes the inner product of the corresponding flattened vectors in $\R^{n^{\ell}}$. 

A symmetric tensor $T$ of order $\ell$ satisfies $T(i_1, i_2, \dots, i_\ell) = T(i_{\pi(1)}, \dots, i_{\pi(\ell)})$ for any $i_1, \dots, i_\ell \in [n]$ and any permutation $\pi$ of the elements in $[\ell]$. It is easy to see that the set of symmetric tensors is a linear subspace of $\R^{n^{\otimes \ell}}$, and has a dimension equal to ${n+\ell-1 \choose \ell}$. 
Given any $n$-variate degree $\ell$ homogenous polynomial $g \in \R^{n} \to \R$, we can associate with $g$ the unique symmetric tensor $T$ of order $\ell$ such that $g(x)=\iprod{T, x^{\otimes \ell}}$.

\anote{4/2: added Anari et al.}
\paragraph{Minimum singular value lower bounds for decoupled tensor products.} We will use as a black box high confidence lower bounds on the minimum singular value bounds for decoupled tensor products. The first statement of this form was shown in \cite{BCMV}, but this had a worse polynomial dependence on $n$ in both the condition number and the exponent in the failure probability. The following result in \cite{ADMPSV18} gives a more elegant proof, while also achieving much better bounds in both the failure probability and the minimum singular value.  
\begin{lemma}[\cite{ADMPSV18}, Lemma 6]\label{lem:quantitativebound}
Let $p \in (0,1], \delta\in(0,1)$ be constants, and let $W\subseteq \mathbb{R}^{n^{\otimes\ell}}$ be an arbitrary subspace of dimension at least $\delta n^\ell$. Given any $x_1,\cdots, x_\ell\in\mathbb{R}^n$, then for their random perturbations $\tilde{x}_1,\cdots,\tilde{x}_\ell$ where for each $i\in[\ell]$, $\tilde{x}_i=x_i+N(0,\rho_i^2/(2n\ell))^n$ with $\rho_i^2\geq \rho^2$, we have
\begin{align*}
    \mathbb{P}\Big[\norm{\Pi_W (\tilde{x}_1\otimes\tilde{x}_2\otimes\cdots\otimes\tilde{x}_\ell)}_2< \frac{c_1(\ell)\rho^\ell}{n^\ell}\cdot p^\ell\Big]\leq p^{c_2(\ell)\delta n}
\end{align*}
where $c_1(\ell), c_2(\ell)$ are constants that depend only on $\ell$.
\end{lemma}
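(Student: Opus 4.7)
The plan is to prove this lemma by induction on the number of tensor factors $\ell$, with high-dimensional Gaussian anti-concentration as the base case and a conditional ``slicing'' reduction for the inductive step. The main conceptual point is to handle one random factor at a time, repeatedly trading ``one factor of $\rho/n$ in the lower bound'' for ``one layer of induction.''

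For the base case $\ell = 1$, the claim is that for any subspace $W \subseteq \R^n$ of dimension at least $\delta n$, the projection of $\tilde x_1 = x_1 + g$, $g \sim N(0, \rho_1^2/(2n\ell))^n$, onto $W$ is a Gaussian vector in $W$ with mean $\Pi_W x_1$ and isotropic covariance. A standard density bound gives $\Pr[\|\Pi_W \tilde x_1\| < \epsilon] \leq (C \epsilon \sqrt{n\ell}/\rho)^{\dim W}$, and setting $\epsilon = c_1(1) \rho p /n$ yields the desired tail $p^{c_2(1) \delta n}$.

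For the inductive step, I fix an orthonormal basis $\{U_1, \dots, U_m\}$ of $W$, with $m \geq \delta n^\ell$, and condition on $\tilde x_1, \dots, \tilde x_{\ell-1}$. Define the matrix $\tilde L \in \R^{n \times m}$ whose $(j,i)$ entry is $\langle U_i, \tilde x_1 \otimes \cdots \otimes \tilde x_{\ell-1} \otimes e_j\rangle$. A direct expansion gives the identity $\|\Pi_W(\tilde x_1 \otimes \cdots \otimes \tilde x_\ell)\| = \|\tilde L^\top \tilde x_\ell\|$. Letting $V \subseteq \R^n$ denote the column span of $\tilde L$ and $r = \dim V$, we have $\|\tilde L^\top \tilde x_\ell\| \geq \sigma_r(\tilde L) \cdot \|\Pi_V \tilde x_\ell\|$. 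Applying the base case (conditionally) to $\tilde x_\ell$ and $V$ reduces the problem to showing that, with probability at least $1 - p^{\Omega_\ell(\delta n)}$ over $\tilde x_1, \dots, \tilde x_{\ell-1}$, we have both $r \geq \Omega_\ell(\delta) \, n$ and $\sigma_r(\tilde L) \geq c (\rho/n)^{\ell-1} p^{\ell-1}$.

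Both of these structural claims about $\tilde L$ will be unified through a single observation: for any unit vector $u \in \R^n$, $\|\tilde L^\top u\|^2 = \sum_i \langle U_i \cdot u, \tilde x_1 \otimes \cdots \otimes \tilde x_{\ell-1}\rangle^2$, which (up to a Gram-matrix correction on the generally non-orthonormal basis $\{U_i \cdot u\}$) equals the squared projection of $\tilde x_1 \otimes \cdots \otimes \tilde x_{\ell-1}$ onto the ``slice'' subspace $W_u := \mathrm{span}\{U_i \cdot u\} \subseteq (\R^n)^{\otimes (\ell-1)}$. A dimension-counting argument, based on the mode-$\ell$ unfolding of $W$ and the hypothesis $\dim W \geq \delta n^\ell$, will identify a subspace $U_0 \subseteq \R^n$ with $\dim U_0 = \Omega_\ell(\delta) n$ on which $\dim W_u \geq \Omega_\ell(\delta) n^{\ell-1}$ for every unit $u \in U_0$. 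Applying the inductive hypothesis to $W_u$ for each such $u$ and upgrading the pointwise estimate to a uniform bound via a net argument over the unit sphere of $U_0$ will then yield the rank lower bound on $\tilde L$ and the lower bound on $\sigma_r(\tilde L)$ simultaneously.

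The hard part will be preserving the dimension fraction $\delta$ up to constants depending only on $\ell$ at every level of the recursion: a naive approach loses a factor of $n$ in the effective $\delta$ per step, which would render the final failure-probability exponent $p^{\Omega_\ell(\delta n)}$ vacuous. Two sources of loss will demand particular care: (i) the Gram-matrix correction relating $\sum_i \langle U_i \cdot u, y\rangle^2$ to $\|\Pi_{W_u} y\|^2$ must not introduce spurious small singular values on the slice basis $\{U_i \cdot u\}$; and (ii) the balance between the $\exp(O_\ell(n))$ size of a fine net on the unit sphere of $U_0$ and the per-point inductive failure probability $p^{\Omega_\ell(\delta n)}$ will force the net scale to sit at a carefully chosen polynomial level in $\rho/n$.
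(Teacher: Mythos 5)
First, note that the paper does not prove this lemma at all: it is imported verbatim as a black box from Anari et al.~\cite{ADMPSV18} (their Lemma 6), so there is no in-paper proof to compare against. Your proposal does follow the same broad strategy as the cited source -- induction on $\ell$, conditioning on $\tilde x_1,\dots,\tilde x_{\ell-1}$, reducing to a rank/singular-value statement about the contraction matrix $\tilde L$, and a slice-wise dimension count on $W$ -- so the skeleton is sound. One small correction to your reduction: what you need is not $\sigma_r(\tilde L)$ for $r=\operatorname{rank}(\tilde L)$ together with a separate rank bound, but rather that $\tilde L$ has at least $\Omega_\ell(\delta)n$ singular values exceeding $c(\rho/n)^{\ell-1}p^{\ell-1}$; then $\|\tilde L^\top \tilde x_\ell\|\ge \sigma\,\|\Pi_{V}\tilde x_\ell\|$ with $V$ the span of the corresponding left singular vectors, and the base case applies to $V$.

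The genuine gap is in how you propose to establish that singular-value statement. A union bound over a net of the unit sphere of $U_0$ at scale $\poly(\rho/n)$ costs a factor $\exp\big(\Theta(\delta n\log(n/\rho))\big)$, while each point of the net only enjoys failure probability $p^{\Omega_\ell(\delta n)}=\exp\big(-\Omega_\ell(\delta n)\log(1/p)\big)$ from the inductive hypothesis. These balance only when $\log(1/p)\gtrsim\log(n/\rho)$, i.e.\ when $p$ is itself polynomially small in $\rho/n$; for constant $p$ -- and the present paper invokes the lemma with $p=1/e$ in the proof of Theorem~\ref{thm:newdecoupling} -- the union bound is vacuous. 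No choice of net scale rescues this, because the exponent of the net size and the exponent of the per-point failure probability scale with different logarithms. The way out (and the route taken in \cite{ADMPSV18}) is to avoid the net entirely: a purely deterministic linear-algebra lemma, applied to $W$ before any randomness is revealed, extracts $\Omega_\ell(\delta n)$ \emph{fixed} orthonormal directions in the last mode whose slice subspaces carry a total dimension of $\Omega_\ell(\delta n^{\ell})$, arranged so that the slice bases are orthonormal by construction (which also disposes of your unresolved Gram-matrix correction). One then union-bounds over only these $O(n)$ fixed slices, and $n\cdot p^{c\delta n}\le p^{(c/2)\delta n}$ for large $n$ even when $p$ is a constant. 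Without this deterministic extraction step, the two items you flag as ``the hard part'' are not technical care points but places where the argument as proposed fails.
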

We remark that the statement of Anari et al.~\cite{ADMPSV18} is stated in terms of the distance to the orthogonal subspace $W^{\perp}$, as long as $dim(W^{\perp}) \le c^\ell n^\ell$ for some $c<1$; this holds above for $c=1-\delta/ \ell$.\cnote{4.22: changed to $1-\delta/\ell$} 


\anote{Davis-Kahan theorem?}


\section{Decoupling and Symmetric Tensor Products}

In this section we prove Theorem~\ref{thm:columnpolynomials} and related theorems about the least singular value of random matrices in which each column is a function of a single random vector. The proof of Theorem~\ref{thm:columnpolynomials} relies on the following theorem which forms the main technical theorem of this section. 
\begin{theorem}[Same as Theorem~\ref{ithm:sym}]\label{thm:newdecoupling}
Let $\delta \in (0,1)$, and let $V_\ell$ be space of all symmetric order $\ell$ tensors in $\R^{n \times n \times \dots \times n}$ (dimension is $D={ n+\ell-1 \choose \ell}$ ), and let $W \subset V_\ell$ be an arbitrary subspace of dimension $\delta D$. Then we have for any $x \in \R^n$ and $\tilde{x}=x+z$ where $z \sim N(0,\rho^2/n)^n$
$$\Pr_z\Big[\norm{\Pi_W \tilde{x}^{\otimes \ell}}_2 \ge\frac{c_1(\ell)\rho^\ell}{n^\ell}   \Big]\ge 1-\exp\Big( - c_2(\ell) \delta  n \Big) ,$$
where $c_1(\ell), c_2(\ell)$ are constants that depend only on $\ell$.
\end{theorem}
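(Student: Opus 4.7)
The plan is to reduce Theorem~\ref{thm:newdecoupling} to Lemma~\ref{lem:quantitativebound}, which handles the fully decoupled tensor product case. The challenge is that in $\tilde{x}^{\otimes \ell}$, the same perturbation $z$ appears in each of the $\ell$ tensor modes, whereas Lemma~\ref{lem:quantitativebound} requires independent perturbations across modes.

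\textbf{Setup and the decoupled bound.} I would first rewrite $z \sim N(0,\rho^2/n)^n$ as a normalized sum of $\ell$ independent Gaussian vectors $z_i \sim N(0,\rho^2/n)^n$, so that the decoupled ``cousin'' $\tilde{x}_1 \otimes \cdots \otimes \tilde{x}_\ell$ (with $\tilde{x}_i = x + z_i$) has independent perturbations across modes and falls directly in the scope of Lemma~\ref{lem:quantitativebound}. The subspace $W \subseteq V_\ell$ has dimension $\delta D = \delta \binom{n+\ell-1}{\ell} = \Omega_\ell(\delta n^\ell)$ when viewed as a subspace of $\R^{n^{\otimes \ell}}$, so Lemma~\ref{lem:quantitativebound} yields
\[
\Pr\!\left[\,\|\Pi_W(\tilde{x}_1 \otimes \cdots \otimes \tilde{x}_\ell)\|_2 \ge c_1'(\ell)\, \rho^\ell / n^\ell \,\right] \ge 1 - \exp\!\bigl(-c_2'(\ell)\,\delta n\bigr)
\]
for suitable constants depending only on $\ell$.

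\textbf{Asymmetric decoupling via hybrids.} The main technical work is transferring the above lower bound on the decoupled tensor to the symmetric tensor $\tilde{x}^{\otimes \ell}$. My plan is to interpolate between $\tilde{x}^{\otimes \ell}$ and $\tilde{x}_1 \otimes \cdots \otimes \tilde{x}_\ell$ through a sequence of hybrid tensors, where at each step one occurrence of $z$ in a mode is replaced by an independent copy. At each step, I would express the squared projection $\|\Pi_W \cdot\|_2^2$ as a sum of squares of polynomials in the ``fresh'' Gaussian variables introduced in that step, exploiting the fact that the symmetry of $W$ forces each such polynomial to inherit symmetry in the coupled modes. This controlled sum-of-squares structure is what will allow a clean step-by-step analysis.

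\textbf{Positive correlation inequality.} To pass the lower bound from one hybrid to the previous one, I would apply a positive correlation inequality for polynomials in Gaussian variables inspired by Lovett~\cite{Lovett}. The intuitive statement I am aiming for is: for a sum of squares of polynomials in Gaussian variables with enough ``genericity'' (coming here from the symmetry of $W$ and the high-dimensional structure of the coefficient subspace), the small-ball probability of the symmetric version is controlled by the small-ball probability of a partially decoupled version, up to a constant factor loss depending only on $\ell$. Iterating this across $\ell$ hybrid steps transfers the exponentially small failure probability to the symmetric case.

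\textbf{Main obstacle.} I expect the hardest step to be making the positive correlation argument quantitatively tight. The classical decoupling inequalities of de la Pe\~na--Gin\'e~\cite{dlP} give \emph{upper} bounds and are useful for concentration, whereas what we need is a \emph{lower} bound in the opposite direction---from the decoupled tensor to the symmetric one---and crucially with exponentially small (not merely polynomial) failure probability, since a naive Carbery--Wright application can only yield $\Pr[\text{small}] \lesssim \epsilon^{1/\ell}$. The symmetric structure of $W$ is the lever: it ensures that at each hybrid step the polynomial in the fresh variables retains enough sum-of-squares structure (from the many symmetric directions in $W$) for the Lovett-style inequality to apply with a quantitatively strong constant, independent of $n$.
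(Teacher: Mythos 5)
Your overall architecture---reduce to the decoupled bound of Lemma~\ref{lem:quantitativebound} via a Lovett-inspired positive-correlation/decoupling step---is exactly the paper's strategy, but the step that carries all the weight is left as an assertion rather than an argument, and the specific mechanism you propose for it is doubtful. You interpolate mode-by-mode through hybrids $\tilde{x}_1 \otimes \cdots \otimes \tilde{x}_i \otimes \tilde{x}^{\otimes(\ell-i)}$ and then invoke a ``positive correlation inequality'' saying that the small-ball probability of one hybrid controls that of the previous one. No such one-mode-at-a-time inequality is established (or even precisely stated): the intermediate hybrids still contain a symmetric power $\tilde{x}^{\otimes(\ell-i)}$ of the \emph{same} Gaussian, so Lemma~\ref{lem:quantitativebound} does not apply to them, and there is no algebraic identity relating $\|\Pi_W H_{i-1}\|$ to $\|\Pi_W H_i\|$ that would let you pass the bound one mode at a time. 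The appeal to ``genericity'' and to sum-of-squares structure ``inherited from the symmetry of $W$'' does not supply this; in the paper's actual argument the symmetry of the tensors is used only in an exact polarization identity, and the probabilistic step is proved for arbitrary measurable sets with no genericity hypothesis at all.

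What the paper does instead is decouple all modes at once. Lemma~\ref{lem:deg-l} is a polarization identity: for a symmetric tensor $T$, the signed sum $\sum_{\zeta_2,\dots,\zeta_\ell \in \{\pm 1\}} \bigl(\prod_i \zeta_i\bigr) \langle T, (x+z_1+\zeta_2 z_2+\cdots+\zeta_\ell z_\ell)^{\otimes \ell}\rangle$ equals $2^{\ell-1}\ell!\,\langle T,(x+z_1)\otimes z_2 \otimes \cdots \otimes z_\ell\rangle$, so if \emph{all} $2^{\ell-1}$ signed combinations give a small value of $\|g\|$, the triangle inequality forces the fully decoupled projection to be small. The probabilistic ingredient, Lemma~\ref{lem:positivecorrelation}, then shows $\Pr[z \in S] \le \Pr[\bigwedge_{\zeta}(z_0+\sum_j \zeta_j z_j) \in S]^{1/2^{\ell-1}}$ by an iterated Cauchy--Schwarz argument (each step introduces a fresh pair $y_k \pm z_k$ and squares the expectation, which is where the exponent $1/2^{\ell-1}$---not a multiplicative constant---comes from). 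Combining these two with Lemma~\ref{lem:quantitativebound} applied to $(x+z_0)\otimes z_1 \otimes \cdots \otimes z_{\ell-1}$ gives the theorem. If you want to salvage your write-up, the fix is to replace the hybrid chain with this simultaneous signed-combination decoupling; as written, the core inequality you need is exactly the thing you have not proved.
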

\cnote{03.04:use ADMPSV to improve the result, we also need to modify Theorem~\ref{ithm:sym}}

Theorem~\ref{thm:columnpolynomials} follows by combining the above theorem with an additional lemma that uses a robust version of Sylvester's inequality for products of matrices (see Section~\ref{sec:sym:finishingproof}). 
%
Our main tool will be the idea of decoupling, along with Lemma~\ref{lem:quantitativebound} that handles tensor products of vectors that have been perturbed independently. 
While decoupling inequalities~\cite{dlP} are often used to prove concentration bounds for polynomials of random variables, here this will be used to establish lower bounds on projections and minimum singular values, which have more of an anti-concentration flavor.  

In fact we can use the same ideas to prove the following anti-concentration statement that can be seen as a variant of the well-known inequality of Carbery and Wright~\cite{CarberyWright, NSV03}. In what follows, we will represent a degree $\ell$ multivariate polynomial $g_j: \R^n \to \R$ using the symmetric tensor $M_j$ of order $\ell$  such that $g_j(x)= \iprod{M_j , x^{\otimes \ell}}$. 

\anote{11/3: Need an extra condition that $m \le n^{\ell}$; otherwise $m$ can be arbitrarily large. We can also reword it as saying for any $m \ge \delta n^{\ell}$ if $\sigma_m(M)$ is non-negligible, then we have the correct bound. }
\begin{theorem} \label{thm:newvectorCW}
Let $\epsilon, \delta \in (0,1)$, $\eta>0$ and let $\ell \ge 2$ be an integer. Let $g: \R^{n} \to \R^m$ be a vector-valued degree $\ell$ homogenous polynomial of $n$ variables given by $g(x)=(g_1(x), \dots, g_m(x))$ where for each $j \in [m]$,  $g_j(x)=\iprod{M_j, x^{\otimes \ell}}$ for some symmetric order $\ell$ tensor $M_j \in \R^{n^{\ell}}$. Suppose the matrix $M \in \R^{m \times n^{\ell}}$ formed with the $(M_j:j \in [m])$ as rows has $\sigma_{\delta n^{\ell}} (M) \ge \eta$,  
then for any fixed $u \in \R^n, t \in \R^m$, and $z \sim N(0,\rho^2/n)^n$ we have
\begin{align}
    \Pr\Big[ \norm{g(u+z) - t}_2 < c(\ell)\epsilon \eta \cdot \Big(\frac{ \rho^\ell}{n^\ell}\Big)  \Big] &< \epsilon^{c'(\ell)\delta n}
\end{align}
where $c(\ell),c'(\ell)>0$ are constants that depend only on $\ell$. 
\end{theorem}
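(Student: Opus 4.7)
Let $M = U\Sigma V^\top$ be the SVD of $M$, and let $W \subseteq \R^{n^\ell}$ denote the span of the right singular vectors of $M$ corresponding to its top $\delta n^\ell$ singular values, each of which is at least $\eta$ by hypothesis. Since each row $M_j$ of $M$ is a symmetric tensor, the row space of $M$ lies inside the symmetric subspace $V_\ell$, so $W \subseteq V_\ell$ with $\dim W = \delta n^\ell$. A direct SVD computation produces a vector $v_W \in W$ depending only on $M$ and $t$ so that
\[
    \|My - t\|_2 \,\ge\, \eta\, \|\Pi_W y - v_W\|_2 \qquad \text{for every } y \in \R^{n^\ell}.
\]
Applied to $y = (u+z)^{\otimes \ell}$, this reduces the theorem to showing that for every fixed $v_W \in W$,
\[
    \Pr\bigl[\|\Pi_W (u+z)^{\otimes \ell} - v_W\|_2 < c(\ell)\,\eps\,\rho^\ell/n^\ell\bigr] \,<\, \eps^{\Omega_\ell(\delta n)},
\]
which is the natural arbitrary-target analogue of Theorem~\ref{thm:newdecoupling}.

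\textbf{Decoupling, Anderson, and Lemma~\ref{lem:quantitativebound}.} I would carry out the same decoupling strategy sketched for Theorem~\ref{thm:newdecoupling}---asymmetric decoupling together with a positive correlation inequality in the style of~\cite{Lovett}---to bound the above symmetric small-ball probability, up to an $\ell$-dependent constant factor, by the corresponding small-ball probability for the decoupled tensor $\bigotimes_{i=1}^\ell (u+z_i)$, where $z_1,\dots,z_\ell$ are iid copies of $z$. Once in the decoupled regime, I would condition on $z_2,\dots,z_\ell$ and set $y := \bigotimes_{i=2}^\ell (u+z_i)$; then $\Pi_W\bigl((u+z_1)\otimes y\bigr) - v_W$ equals $\Pi_W(z_1 \otimes y) - s(y)$ for an explicit $s(y) \in W$ depending only on $y$, and $\Pi_W(z_1 \otimes y)$ is, conditional on $y$, a centered Gaussian vector in $W$. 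Anderson's inequality for centered log-concave distributions then gives
\[
    \Pr\bigl[\|\Pi_W(z_1 \otimes y) - s(y)\|_2 < \eps' \mid y\bigr] \,\le\, \Pr\bigl[\|\Pi_W(z_1 \otimes y)\|_2 < \eps' \mid y\bigr],
\]
so averaging over $y$ reduces the problem to the zero-target multilinear small-ball probability $\Pr\bigl[\|\Pi_W(z_1 \otimes (u+z_2) \otimes \cdots \otimes (u+z_\ell))\|_2 < \eps'\bigr]$. This is precisely the setting of Lemma~\ref{lem:quantitativebound} with $\tilde x_1 = z_1$ (a perturbation of the origin) and $\tilde x_j = u + z_j$ for $j \ge 2$; choosing the parameter $p$ in that lemma so that $c_1(\ell)\,p^\ell\,\rho^\ell/n^\ell$ matches the target threshold $c(\ell)\,\eps\,\rho^\ell/n^\ell$ delivers the claimed bound $\eps^{\Omega_\ell(\delta n)}$.

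\textbf{Where the difficulty lies.} The delicate step is the symmetric-to-decoupled passage while retaining an arbitrary target $v_W$: the classical de la Pe\~na decoupling inequalities are tailored to upper tails, and Lovett-style small-ball decoupling is typically stated for the zero-target case. Carrying the fixed target through the positive-correlation argument with only a constant-factor loss (so that Anderson's inequality can subsequently eliminate it) is the main technical point. After that, the combination of Anderson and Lemma~\ref{lem:quantitativebound} finishes the proof essentially mechanically, and the $\eps^{1/\ell}$-to-$\eps$ adjustment in the exponent exactly accounts for the drop from $c_2(\ell)\delta n$ in Lemma~\ref{lem:quantitativebound} to $c'(\ell)\delta n$ in the theorem.
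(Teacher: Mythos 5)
Your overall architecture is right, and your SVD reduction $\norm{My-t}_2 \ge \eta\norm{\Pi_W y - v_W}_2$ with $v_W = V_W\Sigma_W^{-1}U_W^T t$ is correct. But the proof has a genuine gap exactly where you flag it: the passage from the symmetric small-ball event with an arbitrary target $v_W$ to a decoupled small-ball event is the entire content of the theorem beyond Lemma~\ref{lem:quantitativebound}, and you do not carry it out. Your plan — push the target through the decoupling and then kill it with Anderson's inequality after conditioning on $z_2,\dots,z_\ell$ — is not obviously executable as stated, because the Lovett-style positive-correlation argument (Lemma~\ref{lem:positivecorrelation}) does not hand you a decoupled tensor $\bigotimes_{i=1}^\ell(u+z_i)$ with iid copies of $z$; it controls the probability that \emph{all} $2^{\ell-1}$ signed combinations $z_0+\sum_j\zeta_j z_j$ (with shrunken, non-identical variances) land in the bad set, and one still has to convert that conjunction into a statement about a single decoupled multilinear form. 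Without that conversion, there is no Gaussian vector to which Anderson's inequality applies.

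The paper's resolution (Proposition~\ref{prop:decoupling}) makes your ``main technical point'' disappear, and in a way that also makes Anderson unnecessary: from the event that all signed combinations satisfy $\norm{g(u+z_0+\sum_j\zeta_j z_j)-t}\le\eps$, one takes the \emph{signed sum} $\sum_{\zeta}(\prod_j\zeta_j)\bigl[g(u+z_0+\sum_j\zeta_j z_j)-t\bigr]$ and bounds it by $2^{\ell-1}\eps$ via the triangle inequality. Since $\sum_{\zeta\in\{\pm1\}^{\ell-1}}\prod_j\zeta_j=0$ for $\ell\ge 2$, the constant target $t$ cancels identically in this sum, and Lemma~\ref{lem:deg-l} identifies what remains as $2^{\ell-1}\ell!\,\widehat g(u+z_0,z_1,\dots,z_{\ell-1})$, a target-free decoupled form $(u+z_0)\otimes z_1\otimes\cdots\otimes z_{\ell-1}$ (note: $u$ appears in only one factor). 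The paper then does the SVD reduction \emph{after} decoupling — $\norm{\widehat g}_2\ge\eta\norm{\Pi_W(u+z_0)\otimes z_1\otimes\cdots\otimes z_{\ell-1}}_2$ — and applies Lemma~\ref{lem:quantitativebound} with $x_1=u$, $x_2=\cdots=x_\ell=0$ and $p=\eps^{1/\ell}$, exactly as in your last step. So your endgame is right, but the middle of your argument needs to be replaced by (or reduced to) the sign-cancellation mechanism of Proposition~\ref{prop:decoupling}; as written, the step you defer is the one the theorem actually requires.
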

\anote{4/2: Check the modified remarks}
\begin{remark}
{\em Comparison to Carbery-Wright inequality:} Anti-concentration inequalities for polynomials are often stated for a single polynomial. They take the following form: if $g : \R^n \rightarrow \R$ is a degree-$\ell$ polynomial with $\norm{g}_2 \ge \eta$, and $x \sim N(0,1)^n$ (or other distributions like the uniform measure on a convex body), then the probability that $$\Pr_{x \sim N(0,1)^n}\Big[|g(x) - t| < \eps \eta \Big]\leq O(\ell) \cdot \eps^{1/\ell} . $$
\cnote{4.4:this statement seems incorrect, if the coefficients of the polynomials are extremely small, it will concentrate around 0?}\anote{4/5: Thanks. Fixed I think.}
    Our statement in Theorem~\ref{thm:newvectorCW} applies to {\em vector valued} polynomials $g$. Here, if the $g_j$ are ``different enough'', one can hope that the dependence above becomes $O(\eps^{m/\ell})$, where $m$ is the number of polynomials. Our statement may be viewed as showing a bound that is qualitatively of this kind (albeit with a much weaker dependence on $\ell$, when $\ell\geq 2$), when $m \ge \delta n^{\ell}$. We capture the notion of $g_j$ being different using the condition on the singular value of the matrix $M$. We also note that the paper of Carbery and Wright~\cite{CarberyWright} does indeed consider vector-valued polynomials, but their focus is on obtaining $\eps^{1/\ell}$ type bounds with a better constant for $\eps$. To the best of our knowledge, none of the known results try to get an {\em advantage} due to having multiple $g_j$.
\end{remark}
\begin{remark}
While the condition of $\delta n^{\ell}$ non-negligible singular values seems strong, this in fact turns out to be necessary. 
Proposition~\ref{claim:counterexample} shows that the relation between the failure probability and the number of non-negligible singular values is tight up to constants that depend only on $\ell$. In fact, $m \ge n^{\ell-1}$ is necessary to get any non-trivial bounds. Getting a tight dependence in the exponent in terms of $\ell$ is an interesting open question.
\end{remark}

The main ingredient in the proof of the above theorems is the following decoupling inequality. 
\begin{proposition}\label{prop:decoupling}[Anticoncentration through Decoupling]
Let $\epsilon >0$ and let $\ell \ge 2$ be an integer, and let $\norm{\cdot}$ represent any norm over $\R^n$. Let $g: \R^{n} \to \R^m$ be given by $g(x)=(g_1(x), \dots, g_m(x))$ where for each $j \in [m]$,  $g_j(x):=\iprod{M_j,x^{\otimes\ell}}$  is a multivariate homogeneous polynomial of degree $\ell$, and $M_j$ is a symmetric tensor of order $\ell$. For any fixed $u \in \R^n, t \in \R^m$, and $z \sim N(0,\rho^2)^n$ we have
\begin{align}
    \Pr_z\Big[ \norm{g(u+z) - t} \le \epsilon \Big] &\le \Pr\Big[ \norm{\widehat{g}(u+z_0, z_1,z_2,\dots, z_{\ell-1})} \le  \epsilon/ \ell!  \Big]^{1/2^{\ell-1}}, \\
    \text{where } \forall j \in [m], ~ \widehat{g}_j(u+z_0, z_1, \dots, z_{\ell-1})&= \iprod{M_j, (u+z_0) \otimes z_1 \otimes \dots \otimes z_{\ell-1}}, \nonumber 
\end{align}
$z_0 \sim N(0,\rho^2 (\ell+1)/(2\ell))^n$,
and $z_1, z_2, \dots, z_{\ell-1} \sim N(0,\rho^2/(2\ell))^n$.
\end{proposition}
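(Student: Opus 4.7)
The plan is to prove the proposition by $\ell-1$ rounds of Cauchy--Schwarz decoupling combined with Gaussian rotations. Each round squares the small-ball probability and peels off one more independent Gaussian factor from the symmetric tensor power $(u+z)^{\otimes\ell}$.

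First I would split $z = \zeta + y$ into two independent centered Gaussians with appropriately chosen variances (e.g.\ $\mathrm{Var}(y)=\rho^2/2$ in the first round), and then apply Cauchy--Schwarz to $y$ by introducing an independent copy $y'$:
\begin{align*}
\Pr_z\bigbrac{\|g(u+z)-t\|\le\epsilon}^2 \le \Pr_{\zeta,y,y'}\bigbrac{\|g(u+\zeta+y)-g(u+\zeta+y')\|\le 2\epsilon},
\end{align*}
where the constant $t$ conveniently cancels in the difference. Next, the orthogonal change of variables $(a,b) = ((y+y')/\sqrt 2,(y-y')/\sqrt 2)$ keeps $a,b$ independent and identically distributed as $y$. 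Expanding $(u+\zeta+y)^{\otimes\ell} - (u+\zeta+y')^{\otimes\ell}$ using multilinearity and the symmetry of $M_j$, only odd-in-$b$ terms survive, and the leading linear-in-$b$ term collects to a constant multiple of $\langle M_j,(u+\zeta+a/\sqrt 2)^{\otimes(\ell-1)}\otimes b\rangle$. The parameter variable $u+\zeta+a/\sqrt 2$ is again of the form $u+(\text{Gaussian})$, with its variance increased by $\mathrm{Var}(a)/2$ due to the rotation.

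Iterating this scheme $\ell-1$ times, at each round one more fresh Gaussian $z_i$ is introduced (as the "$b$" of that round, after an appropriate rescaling so that $\mathrm{Var}(z_i) = \rho^2/(2\ell)$), while the remaining parameter Gaussian absorbs the rotated $a/\sqrt 2$ from that round. A careful variance bookkeeping shows that after $\ell-1$ rounds the parameter ends up as $u+z_0$ with $\mathrm{Var}(z_0)=\rho^2(\ell+1)/(2\ell)$ and each of $z_1,\ldots,z_{\ell-1}$ has variance $\rho^2/(2\ell)$, so the variances sum to $\rho^2$ as required. Squaring $\ell-1$ times yields the $1/2^{\ell-1}$ root on the right-hand side, and tracking the multinomial coefficients and rotation scalings across rounds collapses into the $1/\ell!$ factor multiplying $\epsilon$.

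The main obstacle will be handling the higher odd-order-in-$b$ terms (of degrees $3,5,\ldots$) that are produced by each rotation alongside the linear leading term. The key observation is that in the subsequent round, when Cauchy--Schwarz is applied to further decouple the parameter variable $u+\zeta+a/\sqrt 2$, these leftover terms depend only on the freshly introduced Gaussian and not on the parameter, so they cancel cleanly when we take the difference $P(\cdot)-P(\cdot')$ across the conditioning variable; they thus influence only the original probability that is being bounded, not the bound itself. Threading this cancellation through all $\ell-1$ rounds, while simultaneously choosing variances in each split so that the final $\mathrm{Var}(z_0)$ matches $\rho^2(\ell+1)/(2\ell)$ and the scalar factors telescope to $\ell!$, is the technically delicate part of the proof.
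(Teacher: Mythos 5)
Your overall strategy --- iterated Cauchy--Schwarz with Gaussian rotations, in the style of Lovett's proof of Carbery--Wright --- is the same family of argument the paper uses, and your first round (introducing an independent copy $y'$, cancelling $t$ in the difference, rotating to $(a,b)=((y+y')/\sqrt2,(y-y')/\sqrt2)$) is exactly one step of the paper's Lemma~\ref{lem:positivecorrelation}. The difference is organizational: the paper runs all $\ell-1$ Cauchy--Schwarz steps purely at the level of the indicator of the set $S=\{z:\norm{g(u+z)-t}\le\epsilon\}$, concluding that with probability at least $\Pr[z\in S]^{2^{\ell-1}}$ \emph{all} $2^{\ell-1}$ signed combinations $z_0+\sum_j\zeta_j z_j$ land in $S$ simultaneously; only then does it apply the triangle inequality to the alternating sum $\sum_{\zeta}\big(\prod_j\zeta_j\big)\,g\big(u+z_0+\sum_j\zeta_j z_j\big)$ and invoke the exact polarization identity of Lemma~\ref{lem:deg-l}, which annihilates every cross term in one algebraic step and leaves $2^{\ell-1}\ell!$ times the decoupled form. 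You instead interleave the differencing with the Cauchy--Schwarz steps and must track the surviving terms round by round.

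That interleaving is where your sketch has a concrete error. After the first rotation, the difference $\iprod{M_j,(w+b/\sqrt2)^{\otimes\ell}}-\iprod{M_j,(w-b/\sqrt2)^{\otimes\ell}}$ (with $w=u+\zeta+a/\sqrt2$) contains, besides the linear-in-$b$ term, the terms $\iprod{M_j,\,w^{\otimes(\ell-k)}\otimes b^{\otimes k}}$ for odd $k\ge 3$. For $\ell\ge 4$ these still carry $w^{\otimes(\ell-k)}$, i.e.\ they \emph{do} depend on the parameter, so they are not killed by the next round's differencing, contrary to your claim that they ``depend only on the freshly introduced Gaussian and not on the parameter'' and hence ``cancel cleanly''; they persist and spawn further cross terms. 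What is actually true --- and what you would need to prove --- is that any branch in which some round picks up a fresh Gaussian with multiplicity $\ge 3$ exhausts the $\ell$ tensor slots strictly before round $\ell-1$ and is therefore annihilated at the later round where it has become constant in the parameter, so that only the all-multiplicity-one branch survives to the end. This accounting can be carried out (the multiplicities are odd, each at least $1$, and must sum to at most $\ell$ over $\ell-1$ rounds, forcing them all to equal $1$), but it is precisely the bookkeeping that the paper's two-stage structure --- set-level correlation inequality first, a single global polarization identity second --- is designed to avoid, and as written your justification for the step you yourself flag as the delicate one does not hold.
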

\cnote{03.10:change $\epsilon\in(0,1)$ to $\epsilon>0$}

Note that in the above proposition, the polynomials $\widehat{g}_j(z_0, z_1, \dots, z_{\ell-1})=\iprod{M_j, (u+z_0) \otimes z_1 \otimes \dots \otimes z_{\ell-1}}$ correspond to decoupled multilinear polynomials of degree $\ell$. Unlike standard decoupling statements, here the different components $u+z_0, z_1, \dots, z_{\ell-1}$ are not identically distributed. 
We also note that the proposition itself is inspired by a similar lemma in the work of Lovett~\cite{Lovett} on an alternate proof of the Carbery-Wright inequality. Indeed the basic inductive structure of our argument is similar (going via Lemma~\ref{lem:positivecorrelation} below), but the details of the argument turn out to be quite different. In particular we want to consider a random perturbation around an arbitrary point $u$,\anote{4/2: Added previous sentence.} and moreover the proposition above deals with vector-valued polyomials $g$, as opposed to real valued polynomials in~\cite{Lovett}.

Theorem~\ref{thm:newdecoupling} follows by combining Proposition~\ref{prop:decoupling} and the theorem for decoupled tensor products (Lemma~\ref{lem:quantitativebound}).  This will be described in Section~\ref{sec:proof-decoupling}. 
Later in Section~\ref{sec:combinatorialproof}, we also give an alternate simple proof of  Theorem~\ref{thm:newdecoupling} for $\ell=2$ that is more combinatorial.  First we introduce the slightly more general setting for decoupling that also captures the required smoothed analysis statement. 

\subsection{Proof of Proposition~\ref{prop:decoupling}}

We will start with a simple fact involving signed combinations. 
\begin{lemma}\label{lem:randomsigns}
Let $\alpha_0, \alpha_1, \dots, \alpha_m$ be real numbers, and let $\zeta_1, \zeta_2, \dots, \zeta_m \in \set{\pm 1}$ be independent Rademacher random variables. Then
$$ \E_{\zeta}\Big[  \big( \alpha_0+ \alpha_1 \zeta_1 + \dots +\alpha_m \zeta_m \big)^{m+1} \prod_{i \in [m]} \zeta_i \Big] = (m+1)! \cdot \alpha_0 \alpha_1 \dots \alpha_m . $$
\end{lemma}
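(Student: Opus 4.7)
The plan is to expand $(\alpha_0 + \alpha_1 \zeta_1 + \dots + \alpha_m \zeta_m)^{m+1}$ by the multinomial theorem, multiply by $\prod_{i \in [m]} \zeta_i$, and then take expectation term by term, using the fact that for a Rademacher variable $\E[\zeta_i^k] = 1$ if $k$ is even and $0$ if $k$ is odd (together with independence across the $\zeta_i$'s).

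A generic term in the multinomial expansion is
\[
\binom{m+1}{k_0, k_1, \dots, k_m}\, \alpha_0^{k_0}\prod_{i=1}^m \alpha_i^{k_i} \zeta_i^{k_i}, \qquad \sum_{i=0}^m k_i = m+1,\ k_i \ge 0.
\]
After multiplying by $\prod_{i=1}^m \zeta_i$, the $i$th Rademacher appears with exponent $k_i + 1$, so by independence the expectation of this term vanishes unless $k_i + 1$ is even (equivalently $k_i$ is odd, hence $k_i \ge 1$) for every $i \in [m]$.

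The second step is to identify the surviving multi-indices. Since $k_i \ge 1$ for $i \in [m]$, write $k_i = 1 + \ell_i$ with $\ell_i \ge 0$; the constraint $\sum_i k_i = m+1$ becomes $k_0 + \sum_{i=1}^m \ell_i = 1$. The only nonnegative integer solutions are either (a) $k_0 = 1$ and all $\ell_i = 0$, giving $(k_0,k_1,\dots,k_m) = (1,1,\dots,1)$, or (b) $k_0 = 0$ with exactly one $\ell_{i^*} = 1$, giving $k_{i^*} = 2$. But case (b) violates the parity requirement that every $k_i$ ($i \in [m]$) be odd, so only case (a) contributes.

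The multinomial coefficient for the surviving term is $\binom{m+1}{1,1,\dots,1} = (m+1)!$, and its contribution is $(m+1)!\,\alpha_0 \alpha_1 \cdots \alpha_m \cdot \E\bigl[\prod_{i=1}^m \zeta_i^{2}\bigr] = (m+1)!\,\alpha_0 \alpha_1 \cdots \alpha_m$, which is exactly the claimed identity. There is no real obstacle here — the whole proof is a one-line multinomial expansion plus a parity check — but care is needed to verify explicitly that no higher-multiplicity term (like $k_{i^*} = 2$) sneaks in, which is precisely what the oddness requirement rules out.
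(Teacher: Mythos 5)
Your proof is correct and follows essentially the same route as the paper's: expand by the multinomial theorem and observe that, by the orthogonality of Rademacher monomials (i.e.\ $\E[\prod_i\zeta_i^{k_i+1}]$ vanishes unless every $k_i$ is odd), only the all-ones multi-index survives, contributing $(m+1)!\,\alpha_0\cdots\alpha_m$. Your explicit parity check ruling out the $k_{i^*}=2$ case is a slightly more careful write-up of the step the paper leaves as ``easy to check.''
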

\begin{proof}
For a subset $S \subseteq [m]$, let $\xi(S)=\prod_{i \in S} \zeta_i$. Then it is easy to check that $\E\big[ \xi(S) \prod_{i \in [m]}\zeta_i  \big]=0$ if $S \ne \set{1, 2, \dots,m}$, and $1$ if $S=[m]$. 
Applying this along with the multinomial expansion for $\big( \alpha_0+ \alpha_1 \zeta_1 + \dots +\alpha_m \zeta_m \big)^{m}$ gives the lemma.  
\end{proof}

\begin{lemma}\label{lem:deg-l}
Consider any symmetric order $\ell$ tensor $T$, a fixed vector $x \in \R^n$ , and let $z_1 \sim N(0,\rho_1^2)^n, \dots, z_\ell \sim N(0,\rho_\ell^2)^n$ be independent random Gaussians. Then we have
\begin{align}
\sum_{\zeta_2, \dots, \zeta_\ell \in \pm 1} \Big(\prod_{i=2}^\ell \zeta_i \Big) \iprod{T, (x+z_1 + \zeta_2 z_2 + \dots+\zeta_\ell z_\ell)^{\otimes \ell}} &= 2^{\ell -1}   \ell ! \cdot \iprod{T, (x+z_1) \otimes z_2 \otimes \dots \otimes z_\ell} \label{eq:decoupling:eq}.
\end{align}
\end{lemma}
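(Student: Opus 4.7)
}
The plan is to expand the symmetric tensor power on the left-hand side via the multinomial formula, swap the order of summation so that the inner sum is over the signs $\zeta_i$, and then use character orthogonality ($\sum_{\zeta \in \{\pm 1\}} \zeta^{k}$ is $2$ if $k$ is even and $0$ otherwise) to kill all but a handful of terms. The remaining terms will turn out to be exactly the fully multilinear ones, and symmetry of $T$ will combine them into a single multiple of the target expression.

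Concretely, write $y_1 = x+z_1$ and $y_i = z_i$ for $i \ge 2$, and set $\zeta_1 = 1$. Multinomial expansion gives
\begin{equation*}
    (y_1 + \zeta_2 y_2 + \dots + \zeta_\ell y_\ell)^{\otimes \ell} = \sum_{f:[\ell]\to[\ell]} \Big(\prod_{j=1}^{\ell} \zeta_{f(j)}\Big) \, y_{f(1)} \otimes \dots \otimes y_{f(\ell)}.
\end{equation*}
Plugging this into the left side of \eqref{eq:decoupling:eq}, taking the inner product with $T$, and swapping the $f$ and $\zeta$ sums, one sees that the inner sum is
\begin{equation*}
    \sum_{\zeta_2,\dots,\zeta_\ell \in \{\pm 1\}} \prod_{i=2}^{\ell} \zeta_i^{\,1+c_i(f)},
\end{equation*}
where $c_i(f) = |\{j : f(j) = i\}|$. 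This factors across $i$ and equals $2^{\ell-1}$ when every $c_i(f)$ (for $i \ge 2$) is odd, and equals $0$ otherwise.

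The remaining step is a simple counting argument. Since $\sum_{i=1}^{\ell} c_i(f) = \ell$ and each $c_i(f)$ for $i \ge 2$ must be at least $1$ and odd, we need $c_1(f) + \sum_{i \ge 2} c_i(f) = \ell$ with $\sum_{i\ge2} c_i(f) \ge \ell - 1$. Writing each odd $c_i(f) = 1 + 2m_i$ shows $\sum_{i\ge 2} c_i(f) \equiv \ell - 1 \pmod{2}$, so the only feasible choice is $c_i(f) = 1$ for all $i \in [\ell]$, i.e.\ $f$ is a permutation. Using the symmetry of $T$, each of the $\ell!$ such permutations contributes the same value $\iprod{T, y_1 \otimes y_2 \otimes \dots \otimes y_\ell}$. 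Putting everything together yields the factor $2^{\ell-1} \cdot \ell!$ on the right-hand side, as claimed.

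There is no real obstacle here: the only slightly subtle point is verifying that no non-injective $f$ can satisfy the parity condition, and this follows immediately from the parity argument above. The rest is bookkeeping with the multinomial expansion and invoking the symmetry of $T$ once to collapse all permutations.
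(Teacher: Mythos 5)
Your proof is correct. It is worth noting how it relates to the paper's argument: both rest on the same combinatorial core (a multinomial expansion followed by Rademacher orthogonality, i.e.\ $\sum_{\zeta\in\{\pm1\}}\zeta^k$ vanishes unless $k$ is even, plus the parity count showing that only permutations survive), but you execute it directly at the level of the tensor expansion of $(y_1+\zeta_2 y_2+\dots+\zeta_\ell y_\ell)^{\otimes\ell}$, whereas the paper first writes $T=\sum_j\lambda_j v_j^{\otimes\ell}$ as a symmetric rank-one decomposition, reduces each term to the scalar identity of Lemma~\ref{lem:randomsigns} applied to $\iprod{v_j,\cdot}$, and then reassembles. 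Your route is slightly more self-contained: it avoids invoking the existence of a symmetric rank-one decomposition (which the paper cites externally) and uses the symmetry of $T$ only once, at the very end, to identify the $\ell!$ permuted monomials $y_{\pi(1)}\otimes\dots\otimes y_{\pi(\ell)}$ with $y_1\otimes\dots\otimes y_\ell$. The paper's route has the mild advantage of isolating the scalar identity as a reusable lemma. Your parity argument ruling out non-injective $f$ is complete and matches the (implicit) argument inside the paper's Lemma~\ref{lem:randomsigns}, and the bookkeeping of the $2^{\ell-1}$ factor (sum versus expectation over signs) is handled correctly.
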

Note that the right side corresponds to the evaluation of the tensor $T$ at a random perturbation of $(x,0,0,\dots, 0)$.
\begin{proof}
First, we observe that since $T$ is symmetric, it follows that $\iprod{T, u_1 \otimes u_2\otimes \dots \otimes u_\ell}=\iprod{T, u_{\pi(1)} \otimes u_{\pi(2)}\otimes \dots \otimes u_{\pi(\ell)}}$ for any permutation $\pi$ on $(1,2,\dots,\ell)$. Let $u=x+z_1$, and let $\zeta_2, \zeta_3, \dots, \zeta_{\ell} \in \set{\pm 1}$ be independent Rademacher random variables. For any symmetric decomposition into rank-one tensors $T=\sum_j \lambda_j v_j^{\otimes \ell}$ (note that such a decomposition always exists for a symmetric tensor; see ~\cite{Comonetal} for example), we have for every $x \in \R^n$, 
$\iprod{T, x^{\otimes \ell}}=\sum_j \lambda_j \iprod{v_j, x}^{\ell}$. Applying Lemma~\ref{lem:randomsigns} (with $m=\ell-1$) to each term separately 
$$ \forall j, ~ \E_{\zeta_2, \zeta_3, \dots, \zeta_\ell} \Big[\Big(\prod_{i=2}^\ell \zeta_i \Big)  \iprod{v_j^{\otimes\ell}, (u+\zeta_2 z_2 +\dots+\zeta_\ell z_\ell)^{\otimes \ell}} \Big] = \ell ! \cdot \iprod{v_j^{\otimes \ell},u\otimes\zeta_2\otimes\cdots\otimes\zeta_\ell}.$$

Combining them, we get
\begin{align*}
   \E_{\zeta_2, \zeta_3, \dots, \zeta_\ell} \Big[\Big(\prod_{i=2}^\ell \zeta_i \Big)  \iprod{T, (u+\zeta_2 z_2 +\dots+\zeta_\ell z_\ell)^{\otimes \ell}} \Big]&= \ell ! \cdot \iprod{T, u \otimes z_2 \otimes z_3 \otimes \dots \otimes z_\ell}\\
   &= \ell ! \cdot \iprod{T, (x+z_1) \otimes z_2 \otimes z_3 \otimes \dots \otimes z_\ell}.
\end{align*}


\end{proof}

Our proof of the anti-concentration statement (Proposition~\ref{prop:decoupling}) will rely on the signed combination of vectors given in Lemma~\ref{lem:deg-l} and on a positive correlation inequality that is given below.

\begin{lemma}\label{lem:positivecorrelation}
Let $z \sim N(0,\rho^2)^n$ be an n-variate Gaussian random variable, and let $z_0 \sim N(0,\rho^2 (\ell+1)/(2\ell))^n$ and $z_1, z_2, \dots, z_{\ell-1} \sim N(0,\rho^2/(2\ell))^n$ be a collection of independent n-variate Gaussian random variables. Then for any measurable set $S \subset \R^n$ we have
\begin{equation}
    \Pr_z\Big[ z \in S \Big] \le \Pr_z\Big[ \bigwedge_{\zeta_1, \dots, \zeta_{\ell-1} \in \set{\pm 1}} \big( z_0+ \textstyle\sum_{j=1}^{\ell-1} \zeta_j z_j\big)  \in S  \Big]^{1/(2^{\ell-1})}
\end{equation}
\end{lemma}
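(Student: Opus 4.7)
The plan is to prove a strengthened statement by induction on the number of sign variables. For any $\sigma_0 \ge \sigma > 0$ and any integer $k \ge 0$, if $z_0 \sim N(0, \sigma_0^2)^n$ and $z_1, \ldots, z_k \sim N(0, \sigma^2)^n$ are independent, I claim that for every measurable $S \subseteq \R^n$,
$$\Pr_z[z \in S]^{2^k} \;\le\; \Pr\Bigl[\,z_0 + \textstyle\sum_{j=1}^k \zeta_j z_j \in S \text{ for every } \zeta \in \{\pm 1\}^k\,\Bigr],$$
where $z \sim N(0, \sigma_0^2 + k\sigma^2)^n$. The lemma is the special case $\sigma_0^2 = \rho^2(\ell+1)/(2\ell)$, $\sigma^2 = \rho^2/(2\ell)$, $k = \ell - 1$, which satisfies $\sigma_0 \ge \sigma$ and $\sigma_0^2 + k\sigma^2 = \rho^2$. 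The base case $k=0$ is the trivial equality $\Pr[z_0 \in S] = \Pr[z \in S]$.

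For the inductive step, I fix $z_1, \ldots, z_{k-1}$ and define
$$g(u) \;=\; \prod_{\zeta' \in \{\pm 1\}^{k-1}} \mathbf{1}_S\Bigl(u + \textstyle\sum_{j=1}^{k-1} \zeta'_j z_j\Bigr),$$
so that the full product of $2^k$ indicators factors as $g(z_0 + z_k)\cdot g(z_0 - z_k)$. The key observation is that, because $\sigma_0 \ge \sigma$, the Gaussian pair $(z_0 + z_k,\, z_0 - z_k)$ has the same joint law as $(\alpha W + \beta U_1,\, \alpha W + \beta U_2)$ for independent $W, U_1, U_2 \sim N(0, I_n)$ with $\alpha = \sqrt{\sigma_0^2 - \sigma^2}$ and $\beta = \sqrt{2}\,\sigma$ (one checks that both the marginal variances and the cross-covariance match). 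Conditioning on $W$ and invoking Jensen's inequality $\E[X^2] \ge (\E X)^2$ gives
$$\E_{z_0, z_k}\bigl[g(z_0 + z_k)\, g(z_0 - z_k)\bigr] \;=\; \E_W\!\bigl[(\E_U g(\alpha W + \beta U))^2\bigr] \;\ge\; \E_u[g(u)]^2,$$
where $u \sim N(0, \sigma_0^2 + \sigma^2)^n$. Averaging over $z_1, \ldots, z_{k-1}$ and applying Jensen once more bounds the original probability from below by the square of the analogous $(k-1)$-sign-variable event with reduced parameters $(\sqrt{\sigma_0^2 + \sigma^2},\, \sigma,\, k-1)$.

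Applying the inductive hypothesis to this $(k-1)$-instance yields a lower bound of $\Pr[z' \in S]^{2^{k-1}}$, where $z' \sim N(0, (\sigma_0^2 + \sigma^2) + (k-1)\sigma^2)^n = N(0, \sigma_0^2 + k\sigma^2)^n$; this matches the distribution of $z$, so squaring gives $\Pr[z \in S]^{2^k}$ and closes the induction. The main obstacle is picking the correct strengthened hypothesis: the merge step alters the ratio $\sigma_0/\sigma$, so a naive induction on $\ell$ using the specific variances of the lemma statement does not close. Tracking $(\sigma_0, \sigma)$ as free parameters while preserving the invariant total variance $\sigma_0^2 + k\sigma^2$ resolves this, and one checks that the condition $\sigma_0 \ge \sigma$ is preserved by the recursion (it in fact strictly improves, since $\sigma_0^2$ grows by $\sigma^2$ while $\sigma^2$ stays fixed), ensuring that the Gaussian decomposition underlying the Cauchy--Schwarz step is well-defined at every level.
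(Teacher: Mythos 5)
Your proof is correct and is essentially the paper's argument run in the opposite direction: the paper builds up its quantity $E_k$ by splitting each fresh Gaussian into a sum and difference of two independent halves and applying Cauchy--Schwarz, while you induct downward by recombining $(z_0+z_k,\,z_0-z_k)$ into a common part plus two independent parts and applying Jensen --- the same Gaussian decomposition identity and the same iterated squaring. Your parametrized hypothesis with free variances $(\sigma_0,\sigma)$ is just a cleaner way of doing the bookkeeping the paper handles with its explicit $\widehat{y}_k$ variances.
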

This inequality and its proof are inspired by the work of Lovett~\cite{Lovett} mentioned earlier. The main  advantage in our inequality is that the right side here involves the particular signed combinations of the function values at $2^{\ell-1}$ points from $\ell$ independent copies that directly yields the asymmetric decoupled product (using Lemma~\ref{lem:deg-l}). 

\begin{proof}
Let $x_0, x_1, \dots, x_{\ell-1} \sim N(0,\rho^2/\ell)^n$, and for each $k \in [\ell-1]$, let $\widehat{y}_k \sim N(0,\rho^2 (k+2) /(2\ell))^n$. Clearly $\Pr[z \in S] = \Pr[x_0+\dots+x_{\ell-1} \in S]$. 
Let $f(z)=\mathbf{1}_{z \in S}$ represent the indicator function of $S$. For $0 \le k \le \ell-1$, let $$ E_k = \E_{\substack{\widehat{y}_k,z_1, \dots, z_k, x_{k+1}, \dots, x_{\ell-1}}}\left[ \prod_{ \zeta_1, \dots, \zeta_{k} \in \set{\pm 1}} f\Big(\widehat{y}_k+ \sum_{j=1}^k \zeta_j z_j + \sum_{j=k+1}^{\ell-1} x_j \Big)\right] $$

We will prove that for each $k \in [\ell-1]$, $E_{k-1}^2 \le E_k$. 
Using Cauchy-Schwartz inequality, we have
\begin{align*}
    E_{k-1}^2 &= \left(\E_{\substack{\widehat{y}_{k-1},z\\ x_{k+1}, \dots, x_{\ell-1}} }\E_{x_{k}} \Big[ \prod_{ \zeta_1, \dots, \zeta_{k-1} \in \set{\pm 1}} f\Big(\widehat{y}_{k-1}+ \textstyle\sum_{j=1}^{k-1} \zeta_j z_j + \sum_{j=k}^{\ell-1} x_j \Big) \Big]  \right)^2 \\
    & \le \E_{\substack{\widehat{y}_{k-1}, x_{k+1}, \dots, x_{\ell-1}\\ z_1, \dots, z_{k-1}}}\left(\E_{x_{k}} \Big[ \prod_{ \zeta_1, \dots, \zeta_{k-1} \in \set{\pm 1}} f\Big(\widehat{y}_{k-1}+ \textstyle\sum_{j=1}^{k-1} \zeta_j z_j + \sum_{j=k}^{\ell-1} x_j \Big) \Big]  \right)^2.
\end{align*}
Now if $y_k, z_k$ are i.i.d variables distributed as $N(0,\rho^2/(2\ell))^n$, then $x_k, y_k+z_k, y_k - z_k$ are identically distributed. More crucially, $y_k+z_k$ and $y_k - z_k$ are independent! Hence 
\begin{align*}
    E_{k-1}^2 &\le \E_{\substack{\widehat{y}_{k-1}, x_{k+1}, \dots, x_{\ell-1}\\ z_1, \dots, z_{k-1}}}\Big( \E_{y_{k}, z_k \sim N(0,\tfrac{\rho^2}{2\ell}))^n} \Big[ \prod_{ \zeta_1, \dots, \zeta_{k-1} \in \set{\pm 1}} f\Big(\widehat{y}_{k-1}+ \textstyle\sum_{j=1}^{k-1} \zeta_j z_j + (y_k + z_k)+ \sum_{j=k+1}^{\ell-1} x_j \Big) \Big] \\
    & ~~\times \E_{y_{k}, z_k \sim N(0,\tfrac{\rho^2}{2\ell}))^n} \Big[ \prod_{ \zeta_1, \dots, \zeta_{k-1} \in \set{\pm 1}} f\Big(\widehat{y}_{k-1}+ \textstyle\sum_{j=1}^{k-1} \zeta_j z_j + (y_k - z_k)+ \sum_{j=k+1}^{\ell-1} x_j \Big) \Big]  \Big)\\
    &=\E_{\substack{\widehat{y}_{k-1}, x_{k+1}, \dots, x_{\ell-1}\\ z_1, \dots, z_{k-1}}}\E_{y_{k}, z_k \sim N(0,\tfrac{\rho^2}{2\ell}))^n} \Big[ \prod_{ \zeta_1, \dots, \zeta_{k-1} \in \set{\pm 1}} f\Big(\widehat{y}_{k-1}+ \textstyle\sum_{j=1}^{k-1} \zeta_j z_j + (y_k + z_k) + \textstyle\sum_{j=k+1}^{\ell-1} x_j \Big) \\
    &~\qquad\qquad\qquad~~\times f\Big(\widehat{y}_{k-1}+ \textstyle\sum_{j=1}^{k-1} \zeta_j z_j + (y_k - z_k)+ \textstyle\sum_{j=k+1}^{\ell-1} x_j \Big) \Big] \\
    &= \E_{\substack{\widehat{y}_{k}, x_{k+1}, \dots, x_{\ell-1}\\ z_1, \dots, z_{k}}}\Big[ \prod_{ \zeta_1, \dots, \zeta_{k} \in \set{\pm 1}} f\Big(\widehat{y}_{k}+ \textstyle\sum_{j=1}^{k} \zeta_j z_j + \sum_{j=k+1}^{\ell-1} x_j \Big) \Big],
\end{align*}
where the last step follows by identifying $\widehat{y}_{k-1}+y_k$ with $\widehat{y}_k$. 
The proof of the lemma is completed by observing that $E_0=\Pr[\widehat{y}_0+x_1+\dots+x_\ell \in S]=\Pr[z \in S]$.

\end{proof}

We now proceed to the proof of the main decoupling statement. 
\begin{proof}[Proof of Proposition~\ref{prop:decoupling}]
Let $S:= \set{z \in \R^n: \norm{g(z+u) - t} \le \epsilon}$. Let $z_0 \sim N(0,\rho^2(\ell+1)/(2\ell))^n$ and $z_1, \dots, z_{\ell-1} \sim N(0,\rho^2/(2\ell))^n$ be independent $n$-variate Gaussian random variables. From Lemma~\ref{lem:positivecorrelation} we have for $z \sim N(0,\rho^2)^n$, 
\begin{align*}
    \Pr_{z}\Big[ \norm{g(z+u) -t} \le \epsilon  \Big]&\le \Pr_{z_0, \dots, z_{\ell-1}}\Big[ \bigwedge_{\zeta_1, \dots, \zeta_{\ell-1} \in \set{\pm 1}} \Big( \norm{g(u+z_0+ \textstyle\sum_{j=1}^{\ell-1} \zeta_j z_j)  -t} \le \epsilon \Big) \Big]^{1/(2^{\ell-1})}\\
   &\le \Pr_{z_0 \dots z_{\ell-1}}\Big[ \sum_{\zeta_1, \dots, \zeta_{\ell-1} \in \set{\pm 1}} \norm{ g(u+z_0+ \textstyle\sum_{j=1}^{\ell-1} \zeta_j z_j)  -t} \le 2^{\ell-1} \epsilon \Big]^{1/(2^{\ell-1})} \\
    &\le \Pr_{z_0 \dots z_{\ell-1}} 
    \Big[ \Big\| \sum_{\zeta_1, \dots, \zeta_{\ell-1} \in \set{\pm 1}} \big(\prod_{j=1}^{\ell-1} \zeta_{j}\big) g\big(u+z_0+ \textstyle\sum_{j=1}^{\ell-1} \zeta_j z_j\big) \Big\| \le 2^{\ell-1} \epsilon \Big]^{1/(2^{\ell-1})},
\end{align*}
\anote{4/2: Modified $g(z_0+ \textstyle\sum_{j=1}^{\ell-1} \zeta_j z_j)$ to $g(u+z_0+ \textstyle\sum_{j=1}^{\ell-1} \zeta_j z_j)$, and removed the $-t$.}
where the last inequality follows from triangle inequality, and observing that the signed combinations of $t$ cancel out when $\ell \ge 2$. Now applying Lemma~\ref{lem:deg-l} for each $i \in [m]$, we get
\begin{align*}
  \Pr_{z \sim N(0,\rho^2)^n}\Big[ \norm{g(z+u) -t} \le \epsilon  \Big]   & \le \Pr_{z_0, \dots, z_{\ell-1}}\Big[  \norm{ \widehat{g}(u+z_0, z_1, \dots, z_{\ell-1})} \le \epsilon/ \ell! \Big]^{1/(2^{\ell-1})}.
\end{align*}

\end{proof}

\subsection{Proofs of Theorem~\ref{thm:newdecoupling} and Theorem~\ref{thm:newvectorCW}}\label{sec:proof-decoupling}

\begin{proof}[Proof of Theorem~\ref{thm:newdecoupling}]
Let $m=\delta D$, and let $M_1, M_2, \dots, M_{m}$ be an orthonormal basis of symmetric tensors in $W \subset \R^{n^{\otimes \ell}}$. We will also denote by $M$ the $m \times n^{\ell}$ matrix formed by flattening $M_1, \dots, M_{m}$ respectively. For each $j \in [m]$, let 
$g_j(x)= \iprod{M_j, x^{\otimes \ell}}$. Let $\tilde{x}=x+z$ where $z \sim N(0,\rho^2/n)^n$. We would like to lower bound $\norm{\Pi_W \tilde{x}^{\otimes \ell}}_2= \norm{g(x+z)}_2$. Using Proposition~\ref{prop:decoupling} with $t=0$, for all $\epsilon>0$, we have 
\begin{align}
\Pr\Big[\norm{g(x+z)}_2 < \eps \Big]
    &\le \Pr\Big[ \norm{\Pi_W (x+z_0) \otimes z_1 \otimes z_2 \dots \otimes z_{\ell-1}}_2 < \eps/\ell! \Big]^{1/(2^{\ell-1})}, \label{eq:generalderivation}
\end{align}    
where $z_0 \sim N(0,\tfrac{\rho^2(\ell+1)}{2\ell n})^n$,
$z_1, z_2, \dots, z_{\ell-1} \sim N(0,\tfrac{\rho^2}{2\ell n})^n$. 
 Then
  \begin{align}
      \Pr\Big[\norm{\Pi_W \tilde{x}^{\otimes \ell}}_2 < \frac{c(\ell)\rho^\ell}{n^\ell} \Big] &  \le \exp\Big(-c'(\ell)\delta  n\Big),
\end{align}
with $c(\ell), c'(\ell)>0$ being constants that depend only on $\ell$. The last inequality follows from~(\ref{eq:generalderivation}) and Lemma~\ref{lem:quantitativebound} applied with $p=1/e,x_1=x, x_2=x_3=\dots=x_\ell=0$, and $\delta'=\delta/\ell^\ell$. This concludes the proof of Theorem~\ref{thm:newdecoupling}. 
\end{proof}

Please see Appendix~\ref{sec:combinatorialproof} for an alternate combinatorial proof when $\ell=2$. Note that we can also obtain a similar statement for general lower bound of $\eps \eta$ with $\eps \in (0,1/poly(n))$ (as in Theorem~\ref{thm:newvectorCW}), where the failure probability becomes $\eps^{\Omega_\ell(\delta n)}$. The proof is exactly the same, except that we can apply Lemma~\ref{lem:quantitativebound} with $p=\epsilon^{1/\ell}$ instead. Finally, the proof of Theorem~\ref{thm:newvectorCW} is almost identical to Theorem~\ref{thm:newdecoupling}. In fact Theorem~\ref{thm:newvectorCW} essentially corresponds to the special case of Theorem~\ref{thm:columnpolynomials} when $k=1$. We include a proof of Theorem~\ref{thm:newvectorCW} in Appendix~\ref{sec:newvectorCW}.
\anote{4/5: made changes.}

\subsection{Condition Number Lower Bounds for Arbitrary Polynomials} \label{sec:sym:finishingproof}

\newcommand{\projw}{\Pi_{\calW}}
\newcommand{\projperpw}{\Pi_{\calW^{\perp}}}

We are now ready to complete the proof of Theorem~\ref{thm:columnpolynomials}. We start by re-stating the theorem. 

\begin{theorem}[Same as Theorem~\ref{thm:columnpolynomials}]
Let $\ell \in \Z_+$ be a constant and let $a_1, a_2, \dots, a_k \in \R^{n}$ be any arbitrary collection of vectors, let $f_1, f_2, \dots, f_m$ be a collection of arbitrary homogeneous polynomials $f_i:\R^{n} \to \R $  of degree $\ell$ given by 
$$f_i(x)=\sum_{\substack{J=(j_1, \dots, j_\ell) \in {[n] \choose \ell} \\ j_1 \le j_2 \le \dots \le j_\ell}} U_i(j_1, \dots, j_\ell) x_{j_1} x_{j_2} \dots x_{j_\ell},$$
and let $M_f (a_1, \dots, a_k) = \big(f_i(a_j) \big)_{i \in [m], j\in [k]}$ be the $m \times k$ matrix formed by applying each of these polynomials with the $k$ vectors $a_1, \dots, a_k$. 
Denote by $U \in \R^{m \times D}$ with $D = {n+\ell-1 \choose \ell}$, with row $i \in [m]$ representing coefficients of $f_i$. 
We have that with probability at least $1-\exp\big(-\Omega_\ell(\delta n)+\log k \big)$ that
\begin{equation}\label{eq:columnpolynomials}
    \sigma_k\Big( M_f(\tilde{a}_1, \dots, \tilde{a}_k) \Big) \ge \frac{\Omega_\ell(1)}{\sqrt{k}}\cdot\frac{\rho^\ell}{n^\ell} \cdot \sigma_{k+\delta D}(U),
\end{equation}\cnote{4.4:write out the polynomial}
where $\tilde{a_j}$ represents a random perturbation of $a_j$ with independent Gaussian noise $N(0,\rho^2/n)^n$.
\end{theorem}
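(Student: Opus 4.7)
The plan is to factor $M_f(\tilde a_1, \dots, \tilde a_k)=UB$, where $B\in\R^{D\times k}$ has $j$th column $b_j$ with entries $(b_j)_J=\tilde a_{j,j_1}\cdots \tilde a_{j,j_\ell}$ indexed by multisets $J=(j_1\le\cdots\le j_\ell)$. Up to a diagonal rescaling $\Lambda$ by multinomial coefficients with $\|\Lambda\|,\|\Lambda^{-1}\|=O_\ell(1)$, each $b_j$ is the symmetric tensor $\tilde a_j^{\otimes\ell}$ expressed in a basis of $V_\ell\cong\R^D$; this is the bridge that will let Theorem~\ref{thm:newdecoupling} apply to lower bounds on $\|\Pi_Q b_j\|$ for any subspace $Q\subseteq\R^D$ of dimension at least $\delta D$. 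By Lemma~\ref{lem:leave-one-out}, it suffices to show that for each $j\in[k]$, with probability at least $1-\exp(-\Omega_\ell(\delta n))$,
$$\mathrm{dist}\bigl(Ub_j,\;\mathrm{span}\{Ub_i:i\neq j\}\bigr)\ \ge\ \Omega_\ell(1)\cdot(\rho/n)^\ell\cdot \sigma_{k+\delta D}(U),$$
and then union bound over $j$, which contributes the $\log k$ term in the exponent.

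The main step is a robust Sylvester-type reduction. Let $\calW\subseteq \R^D$ be the span of the top $r:=k+\delta D$ right singular vectors of $U$ (we may assume $\sigma_{k+\delta D}(U)>0$, else the bound is trivial). Because $U\Pi_\calW$ and $U\Pi_{\calW^\perp}$ have ranges lying in the top and bottom left singular subspaces of $U$ respectively, these ranges are orthogonal, so for every $y\in\R^D$,
$$\|Uy\|^2\;=\;\|U\Pi_\calW y\|^2+\|U\Pi_{\calW^\perp}y\|^2\;\ge\;\sigma_r(U)^2\|\Pi_\calW y\|^2.$$
Applying this with $y=b_j-\sum_{i\neq j}\beta_i b_i$ and minimizing over $\beta$, the distance above is at least $\sigma_{k+\delta D}(U)\cdot \mathrm{dist}(\Pi_\calW b_j,\,\mathrm{span}\{\Pi_\calW b_i:i\neq j\})$. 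Letting $Q_j$ denote the orthogonal complement inside $\calW$ of $\mathrm{span}\{\Pi_\calW b_i:i\neq j\}$, a dimension count gives $\dim Q_j\ge (k+\delta D)-(k-1)=\delta D+1$, and the remaining distance equals $\|\Pi_{Q_j}b_j\|$.

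To finish, condition on $\{\tilde a_i:i\neq j\}$; this makes $Q_j$ a \emph{fixed} subspace of $\R^D\cong V_\ell$ of dimension at least $\delta D$, so Theorem~\ref{thm:newdecoupling} (equivalently Theorem~\ref{ithm:sym}) applied to the $\rho$-perturbation $\tilde a_j$ yields $\|\Pi_{Q_j}b_j\|\ge \Omega_\ell(1)\cdot(\rho/n)^\ell$ with probability at least $1-\exp(-\Omega_\ell(\delta n))$, where the $\Lambda$ rescaling contributes only $O_\ell(1)$ factors that are absorbed into $\Omega_\ell(1)$. The main obstacle is the robust Sylvester step: for an arbitrary $(k+\delta D)$-dimensional subspace, the contributions of $U\Pi_\calW y$ and $U\Pi_{\calW^\perp}y$ could cancel, and one cannot pass from $\|Uy\|$ to $\|\Pi_\calW y\|$ at all; choosing $\calW$ as the top right singular subspace of $U$ is exactly what forces the two pieces of $Uy$ to land in orthogonal halves of $\R^m$ and makes the inequality go through. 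A secondary technical point is the careful tracking of the change-of-basis constants between the monomial basis of $\R^D$ and an orthonormal basis of $V_\ell$, which determines how tightly the guarantee of Theorem~\ref{thm:newdecoupling} transfers to $\|\Pi_{Q_j}b_j\|$.
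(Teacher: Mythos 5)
Your proof is correct and follows essentially the same route as the paper: factor $M_f = UB$ with $B$ the matrix of symmetric monomial vectors, bound the leave-one-out distance, use a robust Sylvester-type step to reduce to $\norm{\Pi_{Q_j} b_j}$ for a subspace $Q_j$ of dimension at least $\delta D$ that is fixed once $\{\tilde a_i\}_{i\ne j}$ are conditioned on, and invoke Theorem~\ref{thm:newdecoupling}, with the union bound over $j$ supplying the $\log k$ term. The only cosmetic difference is the ordering of the Sylvester step: the paper's Lemma~\ref{lem:leftmultiply} projects $U$ onto the orthogonal complement of $\mathrm{span}\{Ub_i\}_{i\ne j}$ in the output space and then takes the top right-singular subspace of that product, whereas you restrict to the top right-singular subspace of $U$ first and then pass to the orthocomplement of the projected columns --- both yield the same bound.
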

\begin{remark} We note that the condition on $U$ is almost tight, since $\sigma_k(U)$ being non-negligible is a necessary condition (irrespective of $A$). Proposition~\ref{claim:counterexample} shows that the additive $\delta n^{\ell}$ term in number of non-negligble singular values is necessary even when $k=1$. Also note that by choosing a projection matrix $U$ for a subspace of dimension $\delta D$, we recover Theorem~\ref{thm:newdecoupling}. Finally as before, we can obtain an analogous statement for $\eps \in (0,1/poly_\ell(n))$ as in Theorem~\ref{thm:newvectorCW} (see Section~\ref{sec:proof-decoupling}). 
\end{remark}

\begin{definition}
Let $D=\binom{n+\ell-1}{\ell}$. For $x_1,\cdots,x_n\in \mathbb{R}$, $P_\ell(x_1,\cdots,x_n)\in\mathbb{R}^D$ is a vector whose entries corresponding to D different degree-$\ell$ monomials of $x_1,\cdots,x_n$.
\end{definition}
The idea behind the proof is to view $M_f(a_1, \dots, a_k)$ as the product of a coefficient matrix and the matrix whose $i$th column is $P_\ell(a_i)$. Call the latter matrix $Y$. The following lemma show how to use the property that Theorem~\ref{thm:newvectorCW} gives about $Y$ to show Theorem~\ref{thm:columnpolynomials}.


\begin{lemma}\label{lem:leftmultiply}
Let $\delta \in (0,1)$, and let $U$ be a $D' \times D$ matrix, and let $Y \in \R^{D \times R}$ be a random matrix with \anote{11/28:added independent} independent columns $\tilde{Y}_1, \tilde{Y}_2, \dots, \tilde{Y}_R$ satisfying the following condition:
for each $j \in [R]$, and any fixed subspace $\calV$ of dimension at least $\delta D$, $\norm{\Pi_{\calV} \tilde{Y}_j}_2 \ge \kappa_1$ with probability at least $1-\gamma/R$ over the randomness in $\tilde{Y}_j$. Then assuming $\sigma_{R+\delta D}(U) \ge \kappa_2$, 
we have that $\sigma_R(UY) \ge \kappa_1 \kappa_2/\sqrt{R}$ with probability at least $1-\gamma$.
\end{lemma}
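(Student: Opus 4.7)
}

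The plan is to reduce the singular value bound to a leave-one-out distance bound via Lemma~\ref{lem:leave-one-out}, and then argue one column at a time using independence of the columns of $Y$ together with the hypothesis applied to a carefully chosen subspace.

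First, using Lemma~\ref{lem:leave-one-out}, it suffices to show that the leave-one-out distance $\ell(UY)$ is at least $\kappa_1 \kappa_2$ with probability at least $1-\gamma$. Fix $j \in [R]$, let $W_j = \mathrm{Span}\{U \tilde{Y}_i : i \ne j\} \subseteq \R^{D'}$, and note that $\dim W_j \le R-1$ and that $W_j$ depends only on $\{\tilde{Y}_i : i\ne j\}$. Condition on these other columns; under this conditioning, $W_j$ is deterministic and $\tilde{Y}_j$ is an independent random vector. The distance from $(UY)_j = U\tilde{Y}_j$ to $W_j$ equals $\|A \tilde{Y}_j\|_2$ where $A := \Pi_{W_j^\perp} U \in \R^{D' \times D}$, so it suffices to show $\|A \tilde{Y}_j\|_2 \ge \kappa_1 \kappa_2$ with probability at least $1-\gamma/R$.

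Next, I will extract from $A$ a subspace of ``large singular directions'' in $\R^D$ to which the hypothesis on $\tilde{Y}_j$ can be applied. The key technical step is an interlacing inequality: for any orthogonal projection $P$ of rank $p$ on $\R^{D'}$ and any matrix $U \in \R^{D' \times D}$, one has $\sigma_i(PU) \ge \sigma_{i + D' - p}(U)$ for all admissible $i$ (this is a standard consequence of the Courant--Fischer min--max characterization of singular values, applied after restricting $U$'s column space to the kernel of $I-P$). Applying this with $P = \Pi_{W_j^\perp}$ and $D' - p = \dim W_j \le R-1$ gives
\[
\sigma_{\delta D}(A) \;\ge\; \sigma_{\delta D + R - 1}(U) \;\ge\; \sigma_{\delta D + R}(U) \;\ge\; \kappa_2.
\]
Writing the SVD $A = \sum_i \tilde\sigma_i \tilde u_i \tilde v_i^T$ and letting $\calV_j := \mathrm{Span}\{\tilde v_i : \tilde\sigma_i \ge \kappa_2\}$, we have $\dim(\calV_j) \ge \delta D$ and, for every $y \in \R^D$,
\[
\|A y\|_2^2 \;=\; \sum_i \tilde\sigma_i^2 \langle \tilde v_i, y\rangle^2 \;\ge\; \kappa_2^2 \, \|\Pi_{\calV_j} y\|_2^2.
\]

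Finally, since $\calV_j$ is determined by $W_j$ (hence by $\{\tilde{Y}_i : i\ne j\}$) and has dimension at least $\delta D$, the hypothesis of the lemma applied to $\calV_j$ yields $\|\Pi_{\calV_j} \tilde{Y}_j\|_2 \ge \kappa_1$ with probability at least $1-\gamma/R$. Combining with the previous display gives $\|A \tilde{Y}_j\|_2 \ge \kappa_1 \kappa_2$ with probability at least $1 - \gamma/R$ (conditional on the other columns, and hence unconditionally). A union bound over $j \in [R]$ shows $\ell(UY) \ge \kappa_1 \kappa_2$ with probability at least $1 - \gamma$, and Lemma~\ref{lem:leave-one-out} then yields $\sigma_R(UY) \ge \kappa_1 \kappa_2/\sqrt{R}$. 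The main obstacle I expect is verifying the interlacing inequality $\sigma_i(PU) \ge \sigma_{i + D' - p}(U)$ cleanly and being careful that $\calV_j$ is measurable with respect to the conditioning (which it is, since it is a deterministic function of $W_j$).
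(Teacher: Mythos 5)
Your proposal is correct and follows essentially the same route as the paper's proof: reduce to leave-one-out distance, apply a (robust Sylvester / interlacing) singular-value bound to $\Pi_{W_j^\perp}U$ to extract a subspace $\calV_j$ of dimension at least $\delta D$ of right singular directions with singular value at least $\kappa_2$, invoke the hypothesis on $\tilde{Y}_j$ for that subspace, and union bound over $j$. Your write-up is slightly more explicit than the paper's about the conditioning/measurability of $\calV_j$ and about the step $\|\Pi_{W_j^\perp}U\tilde{Y}_j\|_2 \ge \kappa_2\|\Pi_{\calV_j}\tilde{Y}_j\|_2$, but there is no substantive difference.
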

\begin{proof}
For convenience let $r:= R+\delta D$.
We will lower bound the minimum singular value of $M=UY$ using the leave-one-out-distance. Fix an $j \in [R]$; we want column $M_j=U\tilde{Y}_j$ to have a non-negligible component orthogonal to $\calW=span\big( \set{U\tilde{Y}_i: i \in [R], j \ne i }\big)$ w.h.p.

Let $\projw, \projperpw$ be the projectors onto the space $\calW, \calW^{\perp}$ respectively. Note that $\sigma_{r}(U)=\sigma_{R+\delta D} \ge \kappa_2$, and $\sigma_{D'-R+1}(\projperpw) \ge 1$. We can use the following robust version of Sylvester's inequality for products of matrices using the variational characterization of singular values to conclude 
\begin{align*}
    \sigma_{r-R+1}\big( \projperpw U \big) &\ge \sigma_{D'-R+1}\big( \projperpw \big) \sigma_r(U) \\
    &\ge \kappa_2.  
\end{align*}

Let $\calV$ be the subspace spanned by the top $r-R+1$ right singular vectors of $\projperpw U$. Since the dimension of $\calV$ is at least $r-R+1 \ge \delta D$, we can then use the condition of the lemma to conclude that with probability at least $1-\gamma/R$, $\norm{\Pi_\calV \tilde{Y}_j}_2 \ge \kappa_2 \kappa_1$. Hence, by using a union bound over all $j \in [R]$ and using the leave-one-out distance the lemma follows. 
\end{proof}
\anote{4/3: Removed the symmetrization lemma, and pushed it into the proof of Theorem below.}
We can now complete the proof of the main result of the section.
 \begin{proof}[Proof of Theorem~\ref{thm:columnpolynomials}]
 The idea is to apply  Lemma~\ref{lem:leftmultiply} with $D'=m, D = {n+\ell-1 \choose \ell}, R=k$, where $U$ is the corresponding coefficient matrix, and $Y$ is the matrix whose $j$th column is $\tilde{a}_j^{\otimes \ell}$. Note that the naive representation of $\tilde{a}_j^{\otimes \ell} \in \R^{n^{\otimes \ell}}$ is in $n^{\ell}$ dimensions, whereas the rows of the co-efficient matrix $U$ is in $\R^D$. However $\tilde{a}_j^{\otimes \ell}$ are elements of the $D$-dimensional space of symmetric tensors of order $\ell$ (alternately each row of $U$ can be seen as a $n^{\ell}$ dimensional vector constructed by flattening the corresponding symmetric order $\ell$ tensor for that row of $U$). Hence,
 Theorem~\ref{thm:newdecoupling} implies that $Y$ satisfies the conditions of Lemma~\ref{lem:leftmultiply}, and this completes the proof.
 \end{proof}


\subsection{Tight Example for Theorem~\ref{thm:columnpolynomials} and~\ref{thm:newvectorCW}}
We now give a simple example that demonstrates that the condition on many non-trivial singular values for the matrix $M$ that encodes $g$ is necessary. 

\begin{proposition}\label{claim:counterexample}
In the notation of Theorem~\ref{thm:newvectorCW}, for any $r \ge 1$, there exists a matrix $M \in \R^{m \times n^{\ell}}$ (where $m=rn^{\ell-1}$), with the $j$th row corresponding to a symmetric order $\ell$ tensor $M_j$, such that $\sigma_{r n^{\ell-1}}(M) = \Omega_\ell(1)$, but 
$$\Pr_{z \sim N(0,1/n)^n}\Big[\norm{g(z)}_2 = \norm{M z^{\otimes \ell}}\le \epsilon \Big] \ge (c\epsilon)^{O_\ell(r)},$$
for some absolute constant $c>0$.
\end{proposition}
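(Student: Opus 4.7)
The plan is to build an explicit matrix $M$ whose rows correspond to monomials that are all divisible by one of $z_1, \dots, z_r$, so that $\norm{g(z)}_2$ is forced to be small whenever $z_1, \dots, z_r$ are simultaneously small (which happens with non-negligible probability for a Gaussian $z \sim N(0,1/n)^n$). Concretely, enumerate all multi-indices $(i, j_1, \dots, j_{\ell-1})$ with $i \in [r]$ and $r < j_1 \le j_2 \le \dots \le j_{\ell-1} \le n$, and let $M_{i,\vec{j}}$ be the symmetric order-$\ell$ tensor corresponding to the monomial $z_i z_{j_1} \cdots z_{j_{\ell-1}}$, normalized so that $\norm{M_{i,\vec j}}_2 = 1$. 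Since distinct multisets give orthogonal symmetric tensors, the resulting rows are orthonormal, so $\sigma_{\min}(M) = 1 = \Omega_\ell(1)$. The number of such rows is $r\binom{n-r+\ell-2}{\ell-1} = \Theta_\ell(r n^{\ell-1})$, which equals $m$ up to a constant factor (depending only on $\ell$); by discarding or by a mild re-parametrization of $r$ we may take the row count to be exactly $m = rn^{\ell-1}$.

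Next I would upper bound $\norm{g(z)}_2^2$ by factoring out the $z_i$'s indexed by $[r]$. Each component satisfies $g_{i,\vec j}(z)^2 \le C_\ell\, z_i^2\, z_{j_1}^2 \cdots z_{j_{\ell-1}}^2$ where $C_\ell$ absorbs the symmetric-tensor normalization constant. Summing over $\vec j$ bounds the inner sum by $(\sum_{j>r} z_j^2)^{\ell-1}$ up to an $\ell$-dependent constant, yielding
\[
\norm{g(z)}_2^2 \;\le\; C'_\ell \Bigl(\sum_{i=1}^r z_i^2\Bigr)\, \norm{z}_2^{2(\ell-1)}.
\]
So the event $\{\norm{g(z)}_2 \le \eps\}$ is implied by the intersection of $\{\norm{z}_2 \le 2\}$ and $\{\sum_{i=1}^r z_i^2 \le \eps^2/(C'_\ell 2^{2(\ell-1)})\}$. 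These two events depend on disjoint blocks of coordinates of $z$ and are therefore independent.

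Finally I would lower bound the probabilities of the two events. Since $n\norm{z_{>r}}_2^2 \sim \chi^2_{n-r}$ has mean $n-r$, standard Gaussian concentration gives $\Pr[\norm{z}_2 \le 2] \ge 1 - e^{-\Omega(n)} \ge \tfrac{1}{2}$. For the small-ball event, set $w_i = \sqrt{n}\,z_i \sim N(0,1)$, so $n\sum_{i=1}^r z_i^2 \sim \chi^2_r$. The standard small-value density bound for a chi-squared distribution gives
\[
\Pr\bigl[\chi^2_r \le t\bigr] \;\ge\; \frac{(t/2)^{r/2}}{\Gamma(r/2+1)} \quad \text{for } t \in (0,1),
\]
so with $t = n\eps^2/(C'_\ell 2^{2(\ell-1)})$ (valid for $\eps$ sufficiently small) this probability is at least $(c_\ell\sqrt{n}\,\eps)^{r}$ for some $c_\ell>0$. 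Combining by independence, $\Pr[\norm{g(z)}_2 \le \eps] \ge \tfrac{1}{2}(c_\ell\sqrt{n}\,\eps)^{r} \ge (c\eps)^{r}$ for a sufficiently small absolute constant $c>0$, which is of the desired form $(c\eps)^{O_\ell(r)}$. The only mild technical nuisance is keeping track of the $\ell$-dependent normalization constants for the symmetric tensors and verifying that the exact row count $m = rn^{\ell-1}$ can be achieved (rather than just matched up to constants); these are routine bookkeeping, and the chi-squared small-ball estimate is classical.
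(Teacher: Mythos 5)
Your construction is essentially the paper's own counterexample — the subspace spanned by the degree-$\ell$ monomials containing at least one of the first $r$ coordinates — differing only in that you index rows directly by orthonormal symmetric tensors (one factor in $[r]$, the rest in $\{r+1,\dots,n\}$) while the paper uses the $1$-sparse basis $E_I \otimes e_j$ and then symmetrizes, losing a harmless factor of $\ell!$ in the count of nontrivial singular values. The argument is correct up to one trivially repaired slip: the events $\{\norm{z}_2 \le 2\}$ and $\{\sum_{i\le r} z_i^2 \le \eps^2/(C'_\ell 2^{2(\ell-1)})\}$ are not independent as written since the first involves all coordinates; replace $\norm{z}_2$ by $\norm{z_{>r}}_2$ in the factorization (which suffices because the small-ball event already forces $\sum_{i \le r} z_i^2 \le 1$), after which the two events do live on disjoint coordinate blocks.
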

Considering the subspace of symmetric tensors spanned by the rows of $M$ also gives a similar tight example for Theorem~\ref{thm:newdecoupling}. Moreover, the above example also gives a tight example for Theorem~\ref{thm:columnpolynomials} even when $k=1$, by considering the function $f(x):=g(x)$, and $a_1=0$ (so $\widetilde{a}_1=z$).

\anote{4/2: It doesn't show tightness of the minimum singular value (it loses an $n^{\ell/2}$ factor). Can this bound be made tight.. maybe Anari et al. or our use of it is not tight? }
\begin{proof}
Let $e_1, \dots, e_n$ constitute the standard basis for $\R^n$. Let $\calU$ be the space $\R^{n^{\ell-1}}$, and let $\calV \subset \R^n$ be the subspace spanned by $e_1,e_2,  \dots, e_r$.  Let $E_1, E_2, \dots, E_{n^{\ell-1}} \in \R^{n^{\ell-1}}$ constitute the standard basis of $\calU$ given by all the $\ell-1$ wise tensor products of $e_1, \dots, e_n$. Consider the product space $\calW=\calU \otimes \calV$, and let $B$ be the matrix whose $m=rn^{\ell-1}$ rows correspond to the orthonormal basis of $\calW$ given by $\set{E_I \otimes e_j:~ I \in [n]^{\ell-1},~ j \in [r]}$. Note that each of these vectors are $1$-sparse. Let $g:\R^n \to \R^m$ be given by $\forall j \in [m],~ g_j(x)=\iprod{B_j, x^{\otimes \ell} }$. First note that by definition, $\norm{g(x)}_2 = \norm{\Pi_{\calU \otimes \calV} x^{\otimes \ell}}_2$. Hence, if $z \sim N(0,1/n)^n$, we have
\begin{align*}
    \Pr_z\Big[ \norm{g(z)}_2 \le \epsilon \Big]&= \Pr_z\Big[ \norm{\Pi_{\calU \otimes \calV} z^{\otimes \ell}}_2 \le \epsilon \Big]\\
    &= \Pr\Big[ \norm{\Pi_{\calU} z^{\otimes (\ell-1)}}  \norm{\Pi_{\calV} z} \le \epsilon \Big] \\
    &\geq \Pr\Big[ \norm{\Pi_\calV z}\leq \epsilon/2,\norm{z}\leq 2^{1/(\ell-1)}\Big]\\
    &=\Pr\Big[  \norm{\Pi_\calV z}\leq \epsilon/2~|~\norm{z}\leq 2^{1/(\ell-1)}\Big]\cdot (1-o(1))\\
    &\geq\Pr\Big[  \norm{\Pi_\calV z}\leq \epsilon/2\Big]\cdot (1-o(1)) && \text{by Lemma~\ref{lem:deconditioning} in Appendix}\\
    &\ge (c\epsilon)^r, 
\end{align*}
for some absolute constant $c>0$, using standard properties of Gaussians. 

We now just need to give a lower bound of $\Omega(r n^{\ell-1})$ for the number of non-trivial singular values of the matrix $M$, where $M_j$ is the symmetric order $\ell$ tensor representing $g_j$ i.e.,  $\iprod{M_j, x^{\otimes \ell}}=\iprod{B_j, x^{\otimes \ell}}$ for every $x \in \R^n$. In other words $M_j$ is just the symmetrization (projection onto the space of all symmetric tensors) of $B_j$. Note that each $M_j$ is $\ell!$ sparse (since $B_j$ were $1$-sparse). Hence there are at least $rn^{\ell-1}/ \ell!$ vectors $M_j$ which have disjoint support. Hence at least $r n^{\ell-1}/ \ell!$ singular values of $M$ are at least $1/\sqrt{\ell!}$, as required.   
\end{proof}

\section{Polynomials of Few Random Vectors}


\anote{11/28: Edited the first sentence.}
In this section, we consider random matrix ensembles, where each column is a constant degree ``monomial'' involving a few of the columns.  
We will first consider a matrix $M$ whose columns are degree $\ell$ monomials in the input vectors $\ta_1, \ldots, \ta_k$ (that is, tensors of the form $\ta_{f(1)} \otimes \ldots \otimes \ta_{f(\ell)}$ with $f(i) \in [k]$ for $i = 1, \ldots, \ell$). Since the same vector may appear in many columns or multiple times within the same column, there are now dependencies in the perturbations between columns as well as within a column, so we cannot apply \cite{BCMV} directly. We deal with these dependencies by extending an idea of Ma, Shi and Steurer~\cite{MSS}, carefully defining appropriate subspaces that will allow us to decouple the randomness. 

Since one type of dependence comes from the same input vector appearing in many different columns, it is natural to require that the number of these overlaps be small. Because of the decoupling technique used to avoid dependencies within a column, the troublesome overlaps are only those in which the same input vector appears in two different columns of $M$ in the same position within the tensor product. This motivates the following definition.
\begin{definition}
Let $M$ be a matrix whose columns $M_1,\ldots, M_R$ consist of order-$\ell$ tensor products of $\{\ta_1, \ldots, \ta_k\}$. For $s \in [\ell]$ and a fixed column $M_i$, let $\Delta_s(i)$ be the number of other columns that differ from $M_i$ in exactly $s$ spots. (If $M_i = \ta_{f(1)} \otimes \ldots \otimes \ta_{f(\ell)}$ and $M_j = \ta_{f'(1)} \otimes \ldots \otimes \ta_{f'(\ell)}$, then the number of spots in which $M_i$ and $M_j$ differ is $|\{i : f(i) \neq f'(i)\}|$.) Finally, let $\Delta_s = \max_i \Delta_s(i)$.
\end{definition}

\begin{theorem}[Same as Theorem~\ref{thm:indepmonomials}]
\label{thm:indepmonomials2}
Let $\{\ta_1, \ldots, \ta_k\} \subseteq \R^n$ be a set of $\rho$-perturbed vectors, let $\ell \in \Z_+$ be a constant, and let $M \in \R^{n^\ell \times R}$ be a matrix whose columns $M_1, \ldots, M_R$ are tensor monomials in $\{\ta_i\}$. Let $\Delta_s$ be as above for $s = 1, \ldots, \ell$. If
\begin{equation}\label{eq:indepmonomials}
    \sum_{s=1}^\ell \Delta_s \cdot \left(\frac{n}{\ell}\right)^{\ell-s} \le c\left(\frac{n}{\ell}\right)^\ell
\end{equation}
for some $c \in (0, 1)$, then $\sigma_R(M) > \Omega_\ell(1) \cdot (\rho/n)^\ell/\sqrt{R}$ with probability at least $1 - \exp(-\Omega_\ell(1) (1-c) n + \log R)$.
\end{theorem}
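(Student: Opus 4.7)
The plan is to use the leave-one-out distance (Lemma~\ref{lem:leave-one-out}). Fix a column $M_{i^*} = \ta_{f^*(1)} \otimes \cdots \otimes \ta_{f^*(\ell)}$; it suffices to show that with probability at least $1 - \exp(-\Omega_\ell((1-c)n))$ we have $\mathrm{dist}(M_{i^*}, \mathrm{span}\{M_j : j \neq i^*\}) \ge \Omega_\ell((\rho/n)^\ell)$, since a union bound over the $R$ choices of $i^*$ then delivers the theorem (the $\log R$ term in the exponent originating from this union bound). To lower bound the distance we will exhibit a subspace $W \subseteq \R^{n^\ell}$ that (a) is orthogonal to every competitor $M_g$, $g \neq i^*$, so that $\|\Pi_W M_{i^*}\|$ is a lower bound on the distance, and (b) has $\dim W = \Omega_\ell((1-c) n^\ell)$, so that Lemma~\ref{lem:quantitativebound} can be brought to bear.

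Following ideas of~\cite{MSS}, we take $W = V_1 \otimes V_2 \otimes \cdots \otimes V_\ell$ to be a tensor product of position-wise subspaces $V_p \subseteq \R^n$. With this choice, $M_g \perp W$ reduces to the combinatorial condition ``$\ta_{g(p)} \in V_p^\perp$ for some $p \in D(g)$,'' where $D(g) = \{q : g(q) \neq f^*(q)\}$, and the projection factorises mode-wise as $\Pi_W M_{i^*} = \bigotimes_{p=1}^{\ell} \Pi_{V_p} \ta_{f^*(p)}$. The $V_p$'s are built by a balanced assignment: for each $g \neq i^*$ pick $p_g \in D(g)$ and place $\ta_{g(p_g)}$ into $V_{p_g}^\perp$. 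Since $g(p_g) \neq f^*(p_g)$, each $V_p$ depends only on $\ta_r$'s with $r \neq f^*(p)$, and is therefore independent of $\ta_{f^*(p)}$ — an independence that will be critical in the final step. The hypothesis~(\ref{eq:indepmonomials}), equivalent to the per-level bound $\Delta_s \le c(n/\ell)^s$, is calibrated so that a balanced assignment with $\dim V_p^\perp \le (1 - \Omega_\ell(1-c)) n$ at every $p$ exists — exploiting in particular that distinct $g$'s sharing the same vector at position $p$ collapse to a single dimension in $V_p^\perp$. This yields $\dim W = \prod_p \dim V_p \ge \Omega_\ell((1-c) n^\ell)$, which is property (b).

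With $W$ in hand we lower bound $\|\Pi_W M_{i^*}\|$ via Lemma~\ref{lem:quantitativebound}. In the generic case where $f^*(1), \ldots, f^*(\ell)$ are pairwise distinct, the perturbations $\ta_{f^*(p)}$ are mutually independent and the lemma applies directly, giving $\|\Pi_W M_{i^*}\| \ge \Omega_\ell((\rho/n)^\ell)$ with failure probability $\exp(-\Omega_\ell((1-c) n))$. When $f^*$ has repeated indices, we first decouple: writing each relevant perturbation as $z_r = z_r^{(1)} + \cdots + z_r^{(\ell)}$ with independent pieces $z_r^{(q)} \sim N(0, \rho^2/(n\ell))^n$, we apply an asymmetric decoupling inequality in the spirit of Proposition~\ref{prop:decoupling} (once per repeated-index group, or equivalently Theorem~\ref{thm:newdecoupling} tensorised across the repeat groups) to reduce to a tensor with mutually independent factors, at the cost of a $2^{O(\ell)}$ loss in the projected-norm bound. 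Lemma~\ref{lem:quantitativebound} then closes the argument as in the distinct case.

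The main obstacle is the combinatorial construction of the $V_p$'s in step (b). A naive greedy assignment fails, because the number of competitors with $|D(g)| = \ell$ can be as large as $c(n/\ell)^\ell \gg n$, which would blow individual $V_p^\perp$'s past dimension $n$. Threading the needle requires exploiting both the per-$s$ strength of~(\ref{eq:indepmonomials}) and the dimension-sharing phenomenon (many columns may reuse the same $\ta_r$ at a given position), so that no single $\dim V_p^\perp$ exceeds $(1 - \Omega_\ell(1-c)) n$; verifying this bookkeeping is the technical heart of the proof.
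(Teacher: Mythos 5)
Your overall frame (leave-one-out distance, a competitor-orthogonal subspace of dimension $\Omega_\ell((1-c)n^\ell)$, then Lemma~\ref{lem:quantitativebound}) matches the paper's, but the specific subspace you construct cannot exist in general, and this is not a bookkeeping issue that can be threaded — it is a genuine obstruction. You take $W = V_1 \otimes \cdots \otimes V_\ell$ with position-wise subspaces, and enforce orthogonality to each competitor $M_g$ by placing $\ta_{g(p_g)}$ into $V_{p_g}^\perp$ for some chosen $p_g \in D(g)$. This forces $\sum_p \dim V_p^\perp$ to be at least the size of a minimum ``vertex cover'' of the competitors, and the hypothesis~\eqref{eq:indepmonomials} does not bound that quantity by anything close to $\ell n$. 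Concretely, take $\ell = 2$ and columns $\ta_{i_j} \otimes \ta_{i'_j}$ for $j = 1, \dots, R$ with all first indices distinct and all second indices distinct (a matching structure, essentially the decoupled setting of \cite{BCMV}). Then $\Delta_1 = 0$ and $\Delta_2 = R-1$, so \eqref{eq:indepmonomials} permits $R \approx c(n/2)^2$; but every competitor uses a vector at each position that no other competitor uses, so each one consumes a fresh dimension in $V_1^\perp$ or $V_2^\perp$, and you would need $\dim V_1^\perp + \dim V_2^\perp \ge R - 1 \gg 2n$. No assignment, balanced or otherwise, exists. The ``dimension-sharing phenomenon'' you invoke is simply not implied by the hypothesis, which is designed to allow $\Delta_\ell$ as large as $c(n/\ell)^\ell$.

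The paper escapes this by not requiring the good subspace to be a tensor product of mode subspaces. It instead builds the \emph{bad} space $V \supseteq \mathrm{span}\{M_j : j \ne i\}$ as a sum over competitors of tensor-product pieces: for a competitor sharing factors with $M_i$ at positions $S$, it adds (fixed factors at positions outside $S$) $\otimes$ (all of $\R^{n/\ell}$ at positions in $S$), costing $(n/\ell)^{|S|}$ dimensions. Summing gives exactly $\sum_s \Delta_s (n/\ell)^{\ell-s} \le c(n/\ell)^\ell$, so $V^\perp$ — a large but unstructured subspace — has dimension $(1-c)(n/\ell)^\ell$, and Lemma~\ref{lem:quantitativebound} applies to it directly. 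The paper also resolves, in one stroke, the two independence issues you handle separately and somewhat loosely: it first restricts the $p$th factor of every column to a block $L_p$ of an equipartition of $[n]$, so that repeated indices within a column become independent perturbations (no extra decoupling inequality needed — your appeal to Proposition~\ref{prop:decoupling}, which is tailored to symmetric powers $x^{\otimes \ell}$, would need real work here), and so that $V$ is genuinely independent of $M_i'$ even when a competitor reuses one of the vectors $\ta_{f^*(q)}$ at a different position. In your construction, $V_p$ can depend on $\ta_{f^*(q)}$ for $q \ne p$, so $W$ is not independent of $M_{i^*}$; your mode-by-mode conditioning could perhaps be patched, but the non-existence of the assignment above already sinks the approach.
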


\begin{remark}
The condition~\eqref{eq:indepmonomials} is tight up to a multiplicative constant depending only on $\ell$. We give a simple upper bound on $\Delta_s$. Assume $\sigma_R(M) >0$, and fix a column $M_i$ of $M$. 
There are $\binom{\ell}{s}$ ways to choose a set of $s$ spots in which to differ from $M_i$, and once we make this choice, the dimension of the available space is $n^s$ since each of the $s$ spots contributes $n$ dimensions. Therefore the subspace of $\R^{n^\ell}$ consisting of all tensors that differ from $M_i$ in exactly $s$ spots has dimension at most $\binom{\ell}{s}n^s$. Since all subsets of columns of $M$ must be linearly independent, we must have $\Delta_s \le \binom{\ell}{s}n^s$. Therefore our condition is tight up to a factor of at most $\ell^{2\ell+1}$.
\end{remark}

In the above theorem, as stated, the columns of $M$ are ``monomials'' involving the underlying vectors $\ta_1,\dots, \ta_k$. However in our applications (e.g., Sections~\ref{sec:HMM} and \ref{sec:foobi}) the matrix of interest $M'$ will have columns that are more general polynomials of the underlying vectors. Such matrices are expressible as $M'=MP$ where $P \in \R^{R \times R'}$ is a coefficient matrix with $\sigma_{R'}(P) >1/\poly(n,1/\rho)$. Hence, our theorem implies that $\sigma_{R'}(M') > 1/\poly(n,1/\rho)$ in these cases w.h.p.  

As in \cite{BCMV}, we will use leave-one-out distance, denoted $\ell(M)$, as a surrogate for the smallest singular value.
The proof will make use of Lemma~\ref{lem:quantitativebound}, which we will use to bound leave-one-out distances. Our goal will be to find a suitable subspace $W$ that is both large enough and independent of the column of $M$ we are projecting. 

\begin{proof}[Proof of Theorem~\ref{thm:indepmonomials2}]
Let $L_1, \ldots, L_\ell$ be an equipartition of $[n]$. Define a new matrix $M' \in \R^{(\frac{n}{\ell})^\ell \times R}$ by restricting the columns of $M$ to the indices $L_1 \times L_2 \times \ldots \times L_\ell$. In other words, if $M_i$ is a column of $M$ with $M_i = \ta_{f(1)}\otimes \ldots \otimes \ta_{f(\ell)}$, then $M_i'= \ta_{f(1),L_1} \otimes \ldots \otimes \ta_{f(\ell),L_\ell}$, where $a_L$ denotes the restriction of the vector $a$ to the coordinates in the set $L$. This ensures that for every column $M_i$, the perturbations of each factor of this tensor product are independent.

Fix a column $M_i'$ of $M'$, and let $W$ be the subspace spanned by all other columns of $M'$. We want to find a subspace $V$ satisfying:
\begin{enumerate}
    \item $W \subseteq V$.
    \item $V$ is independent of $M_i'$.
    \item $\dim V^\perp = c'(\frac{n}{\ell})^\ell$ for some $c' \in (0,1).$
\end{enumerate}
Given such a $V$, properties 2 and 3 allow us to apply Lemma~\ref{lem:quantitativebound} to obtain that $\|\Proj_{V^\perp} M_i'\| \ge \Omega_\ell((\rho/n)^\ell)$ with probability at least $1 - \exp(-\Omega(c'n))$. Since $W \subseteq V$, we have 
$$\|\Proj_{W^\perp} M_i'\| \ge \|\Proj_{V^\perp} M_i'\| \ge \Omega_\ell((\rho/n)^\ell)$$
with high probability. Taking a union bound over all columns of $M'$ gives that $\ell(M') \ge \Omega_\ell((\rho/n)^\ell)$ with probability at least $1 - \exp(-\Omega_\ell(1)\cdot c'n + \log R)$. Since adding more rows to $M'$ can only increase the magnitude of the projection of any column onto some subspace, $\ell(M) \ge \ell(M')$. Now using properties of the leave-one-out distance (Lemma~\ref{lem:leave-one-out}), we have 
$$\sigma_{min}(M) \ge \frac{\ell(M)}{\sqrt{R}} \ge \Omega_\ell(1)\cdot\frac{\rho^\ell}{n^\ell\sqrt{R}}.$$


Next we construct the subspace $V$. Let $M_{i'}'$, $i' \neq i$ be some other column of $M'$. Let $S \subseteq [\ell]$ be the set of indices at which $M_i'$ and $M_{i'}'$ share a factor, and let $s = |S|$. In order to ensure $V$ is independent of $M_i'$, we must avoid touching any factors of $M_{i'}'$ shared by $M_i'$. Therefore we include in $V$ all vectors of the form $\tu_1 \otimes \ldots \otimes \tu_\ell$, where $\tu_j$ agrees with the $j$th factor of $M_{i'}'$ if $j \not\in S$ and $\tu_j$ is any vector in $\R^{n/\ell}$ otherwise. As desired, $V$ now includes $M_{i'}'$ and is independent of $M_i'$, at a cost of adding $(\frac{n}{\ell})^s$ dimensions to $V$. 

Repeat this process for each $i' \neq i$, and let $V$ be the span of all vectors included at each step. Since the number of overlaps with $M_i'$ can be $s$ at most $\Delta_{\ell-s}$ times, the total dimension of $V$ is at most $\sum_{s=1}^\ell \Delta_s(\frac{n}{\ell})^{\ell-s}$. By our assumption on the $\Delta_s$s, we get $\dim V^\perp = c'(\frac{n}{\ell})^\ell$ as desired, with $c' = 1-c$.
\end{proof}


\section{Robust Subspace Recovery} \label{sec:robustsubspacerecovery}

\anote{4/3: We need to add something about failure probability, and error tolerance as a remark. Also how the tensoring algorithm is new...}
We introduce the following smoothed analysis framework for studying robust subspace recovery. The following model also tolerates some small amount of error in each point i.e., inliers need not lie exactly on the subspace, but just close to it.

\subsection{Input model}\label{sec:subspace-rec-model}

In what follows, $\alpha, \eps_0, \rho \in (0,1)$ are parameters.

\begin{enumerate}
\item An adversary chooses a hidden subspace $T$ of dimension $d$
in $\mathbb{R}^n$, and then chooses $\alpha m$ points from $T$ and
$(1-\alpha)m$ points from $\mathbb{R}^n$.  We denote these points
inliers and outliers respectively. Then the adversary mixes them in
arbitrary order. Denote these points $a_1, a_2, \dots, a_m$. Let
$A=(a_1, a_2, \dots, a_m)$, and $I_{in}, I_{out}$ be the set of
indices of inliers and outliers respectively. For convenience, we
assume that all the points have lengths in the range $[1/2,
1]$.\footnote{If the perturbations in step (2) are done proportional
to the norm, this assumption can be made without loss of generality.
(Since the algorithm can scale the lengths of each of the points.)}
\item Each inlier is $\rho$-perturbed with respect to $T$.
(Formally, this means considering an orthonormal basis $B_T$ for $T$
and adding $B_T v$, where $v \sim \calN(0, \rho^2/d)^d$.) Each outlier
is $\rho$-perturbed with respect to $\mathbb{R}^n$. Let $G$ denote the
perturbations, and let us write $\widetilde{A}=A+G$.
\item With the constraint $\norm{E}_F \leq \epsilon_0$, the
adversary adds noise $E\in \mathbb{R}^{n\times m}$ to $A$, yielding
$\widetilde{A}'=\widetilde{A}+E =
(\widetilde{a}_1',\widetilde{a}_2',\cdots)$. Note that this adversarial noise can depend on the random perturbations in step 2.
\item We are given $\widetilde{A}'$.
\end{enumerate}

The goal in the subspace recovery problem is to return a subspace $T'$ close to $T$.

\paragraph{Notation.}  As introduced above, $\tilde{A} = A+G$ denotes the perturbed vectors. $\tilde{a}_i$ denotes the $i$'th column of $\tA$. We also use the notation $A_I$ to denote the sub-matrix of $A$ corresponding to columns in a set $I$.

\subsection{Our result}

We show the following theorem about the recoverability of $T$.

\begin{theorem}\label{thm:robustrecovery:maintheorem}
\cnote{should be $m\geq n^\ell+8d/(\delta\alpha)$} \anote{11/25: Check
the lower bound on $m$.}

Let $\delta\in(0,1)$, $\ell \in \mathbb{Z}_+$ and $\rho>0$. Suppose we are given $m \ge n^\ell+8d/(\delta \alpha)$ points $x_1, x_2,\cdots, x_m\in \mathbb{R}^n$ generated as described above, where the fraction of inliers $\alpha$ satisfies $\alpha \geq (1+\delta) \binom{d+\ell-1}{\ell} / \binom{n+\ell-1}{\ell}$. Then there exists $\epsilon_0 = \poly_{\ell}(\rho/m)$\cnote{3.31:delete dependency on $\delta$} such that whenever $\norm{E}_F \le \eps_0$, there is an efficient deterministic
algorithm that returns a subspace $T'$ that satisfies
\begin{equation}
\norm{sin\Theta(T,T')}_F\leq\norm{E}_F\cdot{\poly_\ell(m,1/\rho)} \text {, w.p. } \ge 1-2m^2 [\exp(-\Omega_\ell(\delta n)) +\exp(-\Omega(d\log m))].
\end{equation}
\bnote{Change the error probability using Anari et al.}\cnote{3.31: done}\cnote{4.3: the guarantee should be $sin\Theta\leq \norm{}\cdot poly$}
\end{theorem}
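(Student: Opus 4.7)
The plan is to lift the problem to a tensored space and then carry out a tensorized version of the Hardt--Moitra argument. Specifically, for each point $\tilde{x}_i'$, form the flattened symmetric tensor power $\tilde{y}_i' = (\tilde{x}_i')^{\otimes \ell}$, which lives in the $D$-dimensional space of symmetric order-$\ell$ tensors where $D=\binom{n+\ell-1}{\ell}$. In the noise-free regime, inlier tensor powers lie in the $D_T := \binom{d+\ell-1}{\ell}$-dimensional subspace $\mathrm{Sym}^{\ell}(T)$, while outlier tensor powers are generic elements of the ambient $D$-dimensional space. The algorithm then tests, for each point $i$, whether $\tilde{y}_i'$ admits a bounded-coefficient linear combination of a suitable set $\{\tilde{y}_j' : j \in S_i\}$ of other tensor powers that approximates it to some threshold $\tau$; points passing the test are declared inliers, and $T'$ is recovered by taking the top $d$ singular subspace of the resulting inlier matrix.

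For the \emph{inlier side}, fix any inlier index $i$ and take a random subset $S_i \subset [m] \setminus\{i\}$ of an appropriate size $K$ chosen so that $\alpha K \ge (1+\delta/2)D_T$ and $K \le (1-\delta/4)D$; this is possible by the assumption $\alpha \ge (1+\delta)D_T/D$ combined with the sample size bound $m \ge n^\ell + 8d/(\delta\alpha)$. A Chernoff bound shows that with probability $\ge 1-\exp(-\Omega(d\log m))$, the sample $S_i$ contains at least $D_T$ inliers. Viewing the inliers' random perturbations inside the $d$-dimensional subspace $T$, Theorem~\ref{thm:columnpolynomials} (applied with the map $f(x)=x^{\otimes \ell}$ inside $T$) guarantees that the matrix of inlier tensor powers restricted to $\mathrm{Sym}^{\ell}(T)$ has $\sigma_{\min} \ge \mathrm{poly}_\ell(\rho/m)$ with probability $\ge 1-\exp(-\Omega_\ell(\delta n))$. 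Hence $\tilde{y}_i'$ can be written as a bounded linear combination of $\{\tilde{y}_j' : j \in S_i\}$ (the adversarial noise contributes at most $O(\ell \epsilon_0)$ slack per factor, which is absorbed for $\epsilon_0 = \mathrm{poly}_\ell(\rho/m)$).

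For the \emph{outlier side}, fix an outlier index $i$ and consider any subset $S_i$ with $|S_i| \le (1-\delta/4)D$. The span $W_i := \mathrm{span}\{\tilde{y}_j' : j \in S_i\}$ inside the symmetric tensor space has dimension at most $(1-\delta/4)D$, so its orthogonal complement $W_i^\perp$ within $\mathrm{Sym}^{\ell}$ has dimension at least $(\delta/4)D$. By Theorem~\ref{thm:newdecoupling} applied to the $\rho$-perturbed outlier $\tilde{x}_i$, we have $\|\Pi_{W_i^\perp} \tilde{x}_i^{\otimes \ell}\|_2 \ge \Omega_\ell(\rho^\ell/n^\ell)$ with probability $\ge 1-\exp(-\Omega_\ell(\delta n))$. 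Thus no \emph{bounded} linear combination of the other tensors can approximate $\tilde{y}_i'$ to threshold $\tau \ll \rho^\ell/n^\ell$, even after accounting for the adversarial perturbation (of magnitude $\mathrm{poly}_\ell(\epsilon_0)$). Union bounding over $i \in [m]$ yields the probability bound in the theorem.

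Once the inliers have been identified, let $\widetilde{A}_{\mathrm{in}}'$ be the matrix of identified points. Its noiseless counterpart has column space exactly $T$; Theorem~\ref{thm:columnpolynomials} (at $\ell=1$) gives a polynomial lower bound on $\sigma_d(\widetilde{A}_{\mathrm{in}})$, so a standard Davis--Kahan/Wedin argument yields $\|\sin\Theta(T,T')\|_F \le \|E\|_F \cdot \mathrm{poly}_\ell(m,1/\rho)$. The main technical obstacle will be the \emph{robustness analysis}: the decision boundary between ``$\tilde{y}_i'$ is approximately a bounded combination of others'' and ``it is not'' must be stable under the adversarial perturbation $E$ while the gap $\Omega_\ell(\rho^\ell/n^\ell)$ from Theorem~\ref{thm:newdecoupling} is only polynomial. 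This forces $\epsilon_0$ to be a sufficiently small polynomial in $\rho/m$, and also requires careful propagation of the inverse-polynomial condition numbers through the tensor powers (a factor of $\ell$ entering each time $(\tilde{x}_i')^{\otimes \ell}$ is compared to $\tilde{x}_i^{\otimes \ell}$), but none of these errors affect the failure probability, which is determined entirely by Theorem~\ref{thm:newdecoupling} and the Chernoff bound on the sample composition.
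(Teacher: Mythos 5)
Your outlier-side argument and the final SVD step match the paper's proof (batch/sample size kept below $(1-\Omega(\delta))\binom{n+\ell-1}{\ell}$, apply Theorem~\ref{thm:newdecoupling} to get a non-negligible orthogonal projection for each outlier tensor power, then Wedin/Davis--Kahan on $2d$ recovered inliers). The genuine gap is in your inlier-detection step. You claim that each individual inlier $i$ passes the test because the other inliers in its sample $S_i$ robustly span $\mathrm{Sym}^{\ell}(T)$, citing Theorem~\ref{thm:columnpolynomials}. But that theorem cannot certify this: applied to $f(x)=x^{\otimes\ell}$ inside the $d$-dimensional subspace $T$, the coefficient matrix $U$ has only $D_T=\binom{d+\ell-1}{\ell}$ nonzero singular values, so the bound $\sigma_k \gtrsim \sigma_{k+\delta D_T}(U)$ is vacuous unless $k\le(1-\delta)D_T$ --- i.e.\ the theorem only certifies robust linear \emph{independence} of strictly fewer than $D_T$ inlier tensor powers, which therefore cannot span $\mathrm{Sym}^{\ell}(T)$. (More generally, a large $\sigma_{\min}$ is the wrong direction here: it certifies that no column is a combination of the others.) Even if one could establish a robust spanning set, the resulting coefficients would only be $\poly(m,1/\rho)$-bounded, not $O(1)$-bounded, and the outlier-side threshold $\tau$ would then have to beat the coefficient bound times the adversarial noise, which you do not track.

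The paper's route avoids both problems and does not attempt to detect every inlier. It partitions the points into deterministic batches of size $b=(1-\delta/3)\binom{n+\ell-1}{\ell}$ and uses an elementary pigeonhole fact (Lemma~\ref{lem:small-combinations}): if $D_T+k$ vectors lie in a $D_T$-dimensional space, then at least $k$ of them are expressible as genuinely $1$-bounded combinations of the rest (take a vanishing linear combination and divide by the largest coefficient, then recurse). Summing the surplus $n_j-\binom{d+\ell-1}{\ell}$ over batches and using $\alpha\ge(1+\delta)\binom{d+\ell-1}{\ell}/\binom{n+\ell-1}{\ell}$ yields $|C|\ge\frac{\delta/3}{1+\delta/3}\alpha m\ge 2d$ detected inliers, which is all the SVD step needs (the paper then union-bounds Lemma~\ref{robust:lemma:robustsvd} over all $2d$-subsets, since the identified set is correlated with the perturbations --- another detail your sketch omits). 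A secondary issue: your per-point random subsets $S_i$ with a Chernoff bound make the algorithm randomized, whereas the theorem asserts a deterministic algorithm; the paper's fixed batches plus the averaging argument (and the cyclic-shift device of Section~\ref{sec:non-divisible} when $b\nmid m$) are what make determinism possible.
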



When $d/n < 1$, the above theorem gives recovery guarantees even when the fraction of inliers is approximately $(d/n)^\ell$. This can be significantly smaller than $d/n$ (shown in~\cite{HM13}) for any constant $\ell > 1$.

\paragraph{Algorithm overview.}  We start by recalling the approach of~\cite{HM13}. The main insight there is that if we sample a set of size slightly less than $n$ from the input, and if the fraction of inliers is $> (1+\delta) d/n$, then there is a good probability of obtaining $>d$ inliers, and thus there exist points that are in the linear span of the others. Further, since we sampled fewer than $n$ points and the outliers are also in general position, one can conclude that the only points that are in the linear span of the other points are the inliers! In our algorithm, the key idea is to use the same overall structure, but with tensored vectors. Let us illustrate in the case $\ell = 2$. Suppose that the fraction of inliers is $> (1+\delta) \binom{d+1}{2} / \binom{n+1}{2}$. Suppose we take a sample of size slightly less than  $\binom{n+1}{2}$ points from the input, and consider the flattened vectors $x \ot x$ of these points. As long as we have more than $\binom{d+1}{2}$ inliers, we expect to find linear dependencies among the tensored inlier vectors. Further, using  Theorem~\ref{thm:newdecoupling} (with some modifications, as we will discuss), we can show that such dependencies cannot involve the outliers. 
This allows us to find sufficiently many inliers, which in turn allows us to recover the subspace $T$ up to a small error.


Given $m$ points, the algorithm (Algorithm~\ref{alg:robust}) considers several batches of points each of size $b = (1- \frac{\delta}{3}) {n+\ell-1 \choose \ell}$. Suppose for now that $m$ is a multiple of $b$, and that the $m/b$ batches form an arbitrary partition of the $m$ points. (See the note in Section~\ref{sec:non-divisible} for handling the general case.) In every batch, the algorithm does the following: for each point $u$ in the batch, it attempts to represent $u^{\ot \ell}$ as a ``small-coefficient'' linear combination (defined formally below) of the tensor products of the other points in the batch. If the error in this representation is small enough, the point is identified as an inlier. 

\begin{definition}[$c$-bounded linear combination]
Let $v_1, v_2, \dots, v_m$ be a set of vectors. A vector $u$ is said to be expressible as a $c$-bounded linear combination of the $\{v_i\}$ if there exist $\{ \alpha_i\}_{i=1}^m$ such that $|\alpha_i| \le c$ for all i, and $u = \sum_i \alpha_i v_i$. Further, $u$ is said to be expressible as a $c$-bounded combination of the $\{v_i\}$ with error $\delta$ if there exist $\{\alpha_i\}_{i=1}^m$ as above with $|\alpha_i|\le c$ for all $i$, and $\norm{u - \sum_i \alpha_i v_i}_1 \le \delta$.
\end{definition}
\cnote{4.3: the error is now measured in $\ell_1$ norm}
Notice that in the above definition, the error is measured by $\ell_1$ norm. In the algorithm, we will need a subprocedure to check whether a vector is expressible as a 1-bounded combination of some other vectors with some fixed error. By the choice of $\ell_1$ norm, this subprocedure can be formulated as a Linear Programming problem, hence we can solve it efficiently. 
\begin{algorithm}
\caption{Robust subspace recovery}\label{alg:robust}
\begin{algorithmic}[1]
\State Set threshold $\tau=\Omega_\ell(\rho^\ell/n^\ell)$(which is the threshold from Theorem~\ref{thm:newdecoupling}). Set batchsize $b = (1-\nicefrac{\delta}{3})\binom{n+\ell-1}{\ell}$.
\State Let $V_1, V_2, \cdots, V_{r}$ be the $r \le m$ batches each of size $b$ as defined above. 
\State Initialize $C=\emptyset$.
\For{$i=1, 2, \cdots, r$}
    \State\label{alg:step:S} Let $S$ be the set of all $u \in V_i$ such that $\tilde{a}_u'^{\otimes \ell}$ can be expressed as $1$-bounded combinations of  $\{\tilde{a}_v'^{\otimes \ell}: v \in V_i \setminus \{u\} \}$, with error $\leq \tau/2.$\cnote{3.31:change to $\tau/2$} 
    \State $C=C\cup S$
\EndFor
\State Return the subspace $T'$ corresponding to the top $d$ singular values of $\tilde{A}'_{\overline{C}}$, for any $2d$-sized subset $\overline{C}$ of $C$ \label{robust:line:end}
\end{algorithmic}
\end{algorithm}


\cnote{3.31: 1.modify threshold according new section 3.\\
2.modify step~\ref{alg:step:S} since S is index set}
\paragraph{Proof outline.}  The analysis involves two key steps. The first is to prove that none of the outliers are included in $S$ in step~\ref{alg:step:S} of the algorithm. This is where we use $1$-bounded linear combinations. If the coefficients were to be unrestricted, then because the error matrix $E$ is arbitrary, it is possible to have a tensored outlier being expressible as a linear combination of the other tensored vectors in the batch. 
The second step is to prove that we find enough inliers overall. 
On average, we expect to find at least $\tfrac{\delta}{3} \binom{d+\ell-1}{\ell}$ inlier columns in each batch. We ``collect'' these inliers until we get a total of $2d$ inliers. Finally, we prove that these can be used to obtain $T$ up to a small error.

%
%
%



For convenience, let us write $g(n) := \Omega_\ell (\delta n)$ (which is the exponent in the failure probability from Theorem~\ref{thm:newdecoupling}).  Thus the failure probabilities can be written as $\exp(-g(n))$. \bnote{Change the prob values to new ones..}\cnote{3.31: done}

\begin{lemma}\label{robust:lemma:nooutlier}
With probability at least $1- \exp(-g(n) + 2\log m)$, none of the outliers are chosen. I.e., $C\cap I_{out}=\emptyset$. \end{lemma}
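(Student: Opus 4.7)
The plan is to fix an arbitrary outlier $u \in I_{out}$ lying in some batch $V_i$, show that with probability $\ge 1 - \exp(-g(n))$ the check in step~\ref{alg:step:S} of Algorithm~\ref{alg:robust} fails on $u$, and then union-bound over the at most $m$ outliers (which sits comfortably inside the $\exp(2\log m)$ slack stated). The key idea is to invoke Theorem~\ref{thm:newdecoupling} not on the noisy span of the $\tilde{a}_v'^{\otimes \ell}$, but on the noise-free span $W := \mathrm{span}\{\tilde{a}_v^{\otimes \ell} : v \in V_i \setminus \{u\}\}$, and to absorb the adversarial noise at the very end using the small bound on $\epsilon_0$.

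Concretely, $W$ is a subspace of the symmetric-tensor space $V_\ell$ (of dimension $D = \binom{n+\ell-1}{\ell}$) of dimension at most $b - 1 \le (1-\delta/3)D$, so its orthogonal complement $W^\perp$ inside $V_\ell$ has dimension at least $\delta D/3$. Crucially, $W$ depends only on the Gaussian perturbations of the other batch members and is independent of $u$'s perturbation (the adversarial matrix $E$ does not enter $W$ at all). Applying Theorem~\ref{thm:newdecoupling} to $\tilde{a}_u$ with respect to $W^\perp$ yields
\[
  \|\Pi_{W^\perp}\tilde{a}_u^{\otimes \ell}\|_2 \;\ge\; c_1(\ell)\,\rho^\ell/n^\ell \;\ge\; \tau
\]
with probability $\ge 1 - \exp(-g(n))$, where $\tau$ is the algorithm's threshold. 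Now suppose, towards a contradiction, that $u \in S$: then there exist $|\alpha_v| \le 1$ with $\zeta := \tilde{a}_u'^{\otimes \ell} - \sum_{v \ne u}\alpha_v \tilde{a}_v'^{\otimes \ell}$ satisfying $\|\zeta\|_1 \le \tau/2$. Writing $\tilde{a}_v' = \tilde{a}_v + E_v$ and expanding $\tilde{a}_v'^{\otimes \ell} - \tilde{a}_v^{\otimes \ell}$ as a telescoping sum of $\ell$ tensor-product terms, the bounds $\|\tilde{a}_v\|,\|\tilde{a}_v'\| = O(1)$ (which hold deterministically outside an event of negligible probability, using $\|a_v\| \in [1/2,1]$, the smallness of $\rho$, and $\|E\|_F \le \epsilon_0$) give $\|\tilde{a}_v'^{\otimes \ell} - \tilde{a}_v^{\otimes \ell}\|_2 \le O(\ell)\,\|E_v\|_2$. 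Applying $\Pi_{W^\perp}$, which annihilates $\sum_{v \ne u}\alpha_v \tilde{a}_v^{\otimes \ell}$, and using $\|\Pi_{W^\perp}\zeta\|_2 \le \|\zeta\|_2 \le \|\zeta\|_1$, we would obtain
\[
  \|\Pi_{W^\perp}\tilde{a}_u^{\otimes \ell}\|_2 \;\le\; \tau/2 \;+\; O(\ell)\!\!\sum_{v \in V_i}\!\|E_v\|_2 \;\le\; \tau/2 \;+\; O_\ell(\sqrt{m})\,\epsilon_0,
\]
using Cauchy--Schwarz and $\|E\|_F \le \epsilon_0$ in the last step. Choosing $\epsilon_0 = \poly_\ell(\rho/m)$ small enough makes the right-hand side strictly less than $\tau$, contradicting the lower bound from Theorem~\ref{thm:newdecoupling}.

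The main obstacle is the dependence of the adversarial noise $E$ on the random perturbations: since $E$ may be chosen after seeing all the Gaussians, one cannot hope to apply anti-concentration directly to the noisy tensors $\tilde{a}_v'^{\otimes \ell}$. The resolution, which drives the whole plan, is to separate the two layers---invoke Theorem~\ref{thm:newdecoupling} on the clean objects, whose subspace $W$ is independent of $u$'s perturbation, and then pay only a $\poly_\ell(n,\sqrt{m})\cdot\epsilon_0$ overhead to return to the noisy system. The polynomial relationship $\epsilon_0 = \poly_\ell(\rho/m)$ in the theorem's hypothesis is precisely tuned so that this overhead fits inside the $\tau/2$ budget, turning the anti-concentration guarantee into a deterministic rejection of every outlier.
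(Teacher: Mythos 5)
Your proposal is correct and follows essentially the same route as the paper: apply Theorem~\ref{thm:newdecoupling} to the \emph{noise-free} tensored vectors to get a $\tau$ lower bound on the outlier column's distance to the span of the other clean columns (equivalently, its projection onto $W^\perp$), then show that passing to the noisy vectors $\tilde{a}_v'^{\otimes\ell}$ moves any $1$-bounded combination by at most $\poly_\ell(m)\cdot\epsilon_0 < \tau/2$, and union bound. The only cosmetic difference is that the paper phrases the argument via leave-one-out distances and bounds the noise contribution by $b\ell(1+O(\rho\sqrt{\log m}))^\ell\epsilon_0$ rather than your Cauchy--Schwarz $O_\ell(\sqrt{m})\epsilon_0$; both fit the same $\epsilon_0=\poly_\ell(\rho/m)$ budget.
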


\newcommand{\su}[1]{^{(#1)}}
\newcommand{\dist}{\text{dist}}
\newcommand{\spn}{\text{span}}

\begin{proof}

The proof relies crucially on the choice of the batch size. Let us fix some batch $V_j$.  Note that by the way the points are generated, each point in $V_j$ is $\widetilde{a_i}'$, for some $a_i$ that is either an inlier or an outlier.

Let us first consider only the perturbations (i.e., without the noise addition step). Recall that we denoted these vectors by $\widetilde{a}_i$. Let us additionally denote by $B\su{j}$ the matrix whose columns are $\widetilde{a}_i^{\ot \ell}$ for all $i$ in the phase $j$. Consider any $i$ corresponding to an outlier. Now, because the batch size is only $(1-\frac{\delta}{3}) \binom{n+\ell-1}{\ell}$, we have (using Theorem~\ref{thm:newdecoupling}) that the projection of the column $B\su{j}_i$ orthogonal to the span of the remaining columns (which we denote by $B\su{j}_{- i}$) is large enough, with very high probability. Formally,

\begin{equation}\label{eq:orth-proj-outlier}
\Pr[ \dist (B\su{j}_i, \spn (B\su{j}_{-i})) \ge \tau ] \ge 1- \exp(-g(n)).
\end{equation}

Indeed, taking a union bound, we have that the inequality $\dist(B\su{j}_i, \spn(B\su{j}_{-i})) \ge \tau$ holds for all outliers $i$ (and their corresponding batch $j$) with probability $\ge 1 - m^2 \exp(-g(n))$.\cnote{3.31: should be $m^2$}

We need to show that moving from the vectors $\widetilde{a}_i$ to $\widetilde{a}_i'$ maintains the distance. For this, the following simple observation will be useful.

\begin{observation}\label{obs:length-perturb}
If $a_i$ is an outlier, then 
\[ \Pr[ \norm{\widetilde{a}_i} \ge 1+ 2\rho ] \le \exp(-n/2). \]
On the other hand if $a_i$ is an inlier, 
\[ \Pr[ \norm{\widetilde{a}_i} \ge 1+ 4\rho \sqrt{\log m} ] \le \exp(-4d \log m). \]
\end{observation}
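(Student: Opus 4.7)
The plan is to observe that both bounds follow from standard concentration of the $\chi^2$ distribution applied to the Gaussian perturbation, combined with the triangle inequality using $\|a_i\| \le 1$. Since the statement is essentially a tail bound on the norm of a Gaussian vector, no new techniques from the rest of the paper are needed.

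For the outlier case, write $\widetilde{a}_i = a_i + g$ where $g \sim N(0,\rho^2/n)^n$, so by triangle inequality $\|\widetilde{a}_i\| \le 1 + \|g\|$ and it suffices to bound $\Pr[\|g\| \ge 2\rho]$. The random variable $\|g\|^2 \cdot n/\rho^2$ is distributed as $\chi^2_n$, and by the Laurent--Massart inequality $\Pr[\chi^2_n \ge n + 2\sqrt{nt} + 2t] \le e^{-t}$. Setting $t = n/2$ gives the tail bound $\chi^2_n \ge (2+\sqrt{2})n$, which implies $\Pr[\|g\|^2 \ge 4\rho^2] \le \exp(-n/2)$ since $2+\sqrt{2} < 4$.

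For the inlier case the argument is essentially identical: the perturbation is $B_T v$ with $v \sim N(0,\rho^2/d)^d$, and since $B_T$ has orthonormal columns $\|B_T v\| = \|v\|$. Now $\|v\|^2 \cdot d/\rho^2 \sim \chi^2_d$, and applying Laurent--Massart with $t = 4 d \log m$ yields the tail threshold $d + 4d\sqrt{\log m} + 8d\log m$, which is bounded above by $16 d\log m$ whenever $\log m \ge 1$. Translating back, $\Pr[\|v\|^2 \ge 16\rho^2 \log m] \le \exp(-4d\log m)$, so $\|v\| \le 4\rho\sqrt{\log m}$ with the desired probability, and the triangle inequality finishes the claim.

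The only mild obstacle is being careful with the scaling of the Gaussians (the inlier perturbation is isotropic in a $d$-dimensional subspace with variance $\rho^2/d$ per coordinate, while the outlier perturbation is isotropic in all $n$ dimensions with variance $\rho^2/n$); once that is in place, both statements are one-line applications of $\chi^2$ concentration. No decoupling machinery or results from Sections 3--4 are invoked here.
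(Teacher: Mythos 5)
Your proof is correct and matches the paper's intent: the paper simply asserts both bounds as "simple consequences" of the unit-length assumption and the Gaussian perturbation scales, and your argument fills in exactly those details via standard $\chi^2$ concentration plus the triangle inequality. The constants check out in both cases ($(2+\sqrt{2})<4$ for the outliers, and $d+4d\sqrt{\log m}+8d\log m\le 16d\log m$ for the inliers when $\log m\ge 1$), so nothing further is needed.
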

Both the inequalities are simple consequences of the fact that the vectors $a_i$ were unit length to start with, and are perturbed by $\mathcal{N}(0, \rho^2/n)$ and $\mathcal{N}(0, \rho^2/d)$ respectively. 

Now let us consider the vectors with noise added, $\widetilde{a}_i'$. Note that $\norm{\widetilde{a}_i - \widetilde{a}_i'} \le \eps_0$. Since $\norm{a_i} \le 1$ and since $i$ is an outlier, we have (using Observation~\ref{obs:length-perturb}), $\norm{\widetilde{a}_i'} \le 1+2\rho + \eps_0$, with probability $\ge 1- \exp(-n/2)$.
Thus for the flattened vectors $\widetilde{a}_i^{\ot \ell}$, with the same probability,
\begin{align}
\norm{\widetilde{a}_i^{\ot \ell} - (\widetilde{a}_i')^{\ot \ell}} &= \norm{\left( \widetilde{a}_i^{\ot \ell} - \widetilde{a}_i^{\ot (\ell-1)} \ot \widetilde{a}_i'\right) + \left( \widetilde{a}_i^{\ot (\ell-1)} \ot \widetilde{a}_i' - \widetilde{a}_i^{\ot (\ell-2)} \ot \widetilde{a}_i'^{\ot 2} \right) + \dots} \notag\\
&\le \ell (\max\{ \norm{\widetilde{a}_i}, \norm{\widetilde{a}_i'} \})^{\ell-1} \eps_0 \notag\\
&\le \ell (1+2\rho + \eps_0)^{\ell} \eps_0.
\label{eq:norm-diff-power}
\end{align}

Thus, for any $1$-bounded linear combination of the $b$ vectors in the batch (which may contain both inliers and outliers), $\widetilde{a}_i'^{\ot \ell}$ is at a Euclidean distance $\le b \ell (1+\eps_0 + 4\rho \sqrt{\log m})^{\ell} \eps_0$ to the corresponding linear combination of the $\ell$th powers of the vectors in the batch prior to the addition of noise (i.e., the columns of $B\su{j}_{-i}$). Thus if $b \ell (1+\eps_0 + 4\rho\sqrt{\log m})^{\ell} \eps_0 < \tau/2$, then $\widetilde{a}_i'^{\ot \ell}$ cannot be expressed as a $1$-bounded combination of the other lifted vectors in the batch with Euclidean error $< \tau/2$, let alone $\ell_1$ error.

This means that none of the outliers are added to the set $S$, with probability at least $1 - m [ \exp(-g(n)) - \exp(-4d \log m)]$.
\end{proof}

Next, we turn to proving that sufficiently many inliers are added to $S$. The following simple lemma will help us show that restricting to $1$-bounded combinations does not hurt us.

\begin{lemma}\label{lem:small-combinations}
Let $u_1, u_2, \dots, u_{d+c}$ be vectors that all lie in a $d$-dimensional subspace of $\R^n$. Then at least $c$ of the $u_i$ can be expressed as $1$-bounded linear combinations of $\{u_j\}_{j \ne i}$.
\end{lemma}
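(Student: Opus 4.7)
The plan is a straightforward inductive/iterative argument based on choosing linear dependencies and normalizing by the largest coefficient.

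\medskip

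First I would observe the one-shot version: since $d+c$ vectors lie in a subspace of dimension $d$, they are linearly dependent, so there exist scalars $\alpha_1,\dots,\alpha_{d+c}$, not all zero, with $\sum_i \alpha_i u_i = 0$. Let $i^*$ be an index achieving $|\alpha_{i^*}| = \max_i |\alpha_i|$, and rescale so that $\alpha_{i^*}=1$. Then
\[
u_{i^*} \;=\; -\sum_{j\neq i^*}\alpha_j\,u_j,
\]
and by choice of $i^*$ every $|\alpha_j|\leq 1$, so $u_{i^*}$ is expressible as a $1$-bounded linear combination of the remaining $u_j$.

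\medskip

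To upgrade this from one vector to $c$ vectors, I would iterate. Suppose inductively that we have identified distinct indices $i_1,\dots,i_k$ (with $k<c$) such that each $u_{i_t}$ is a $1$-bounded combination of $\{u_j : j\neq i_t\}$. The remaining $d+c-k$ vectors $\{u_j : j \notin \{i_1,\dots,i_k\}\}$ still lie in the same $d$-dimensional subspace, and since $d+c-k>d$, they admit a nontrivial linear dependence. Applying the one-shot argument to this smaller collection produces an index $i_{k+1}\notin\{i_1,\dots,i_k\}$ together with a representation of $u_{i_{k+1}}$ as a $1$-bounded combination of $\{u_j : j\notin\{i_1,\dots,i_{k+1}\}\}$. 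Padding with zero coefficients on $u_{i_1},\dots,u_{i_k}$ yields a $1$-bounded representation of $u_{i_{k+1}}$ in terms of all of $\{u_j : j\neq i_{k+1}\}$, which is what we need.

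\medskip

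Iterating until $k=c$ produces the required $c$ distinct indices. There isn't really a substantive obstacle here: the only small point to verify is that the indices produced by successive applications are distinct, which is automatic because at each step we restrict the dependency to vectors not previously chosen. The argument uses nothing about the ambient geometry beyond the dimension bound.
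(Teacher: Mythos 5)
Your proof is correct and follows essentially the same argument as the paper: find a nontrivial linear dependence, normalize by the largest coefficient to extract one $1$-bounded representation, then remove that vector and induct on the remaining $d+c-1$ vectors. The only difference is that you spell out the padding-with-zeros step that the paper leaves implicit.
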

\begin{proof}
As the vectors lie in a $d$-dimensional subspace, there exists a non-zero linear combination of the vectors that adds up to zero. Suppose $\sum_i \alpha_i u_i = 0$. Choose the $i$ with the largest value of $|\alpha_i|$. This $u_i$ can clearly be expressed as a $1$-bounded linear combination of $\{u_j\}_{j \ne i}$. 

Now, remove the $u_i$ from the set of vectors. We are left with $d+c-1$ vectors, and we can use the same argument inductively to show that we can find $c-1$ other vectors with the desired property. This completes the proof.
\end{proof}




The next lemma now proves that the set $C$ at the end of the algorithm is large enough.
\begin{lemma}
\label{robust:lemma:alotinlier}
For the values of the parameters chosen above, we have that at the end of the algorithm,
\[ |C|\geq \frac{\delta/3}{1+\delta/3}\alpha m, \text{ with probability at least $1- \exp(-4 d \log m)$}. \]
\end{lemma}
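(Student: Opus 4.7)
The plan is a short counting argument in the noise-free world combined with a noise-propagation argument to transfer the conclusion to the actual noisy input. Writing $D_d := \binom{d+\ell-1}{\ell}$ and $I_j := |V_j \cap I_{\mathrm{in}}|$ for the number of inliers in batch $V_j$, I first work only with the perturbed (but not yet $E$-corrupted) vectors $\tilde a_i$. Because every inlier lies in $T$, the tensors $\{\tilde a_i^{\otimes \ell} : i \in V_j \cap I_{\mathrm{in}}\}$ all lie in the symmetric subspace $\mathrm{Sym}^{\ell}(T) \subseteq \R^{n^\ell}$, which has dimension at most $D_d$. Lemma~\ref{lem:small-combinations} applied to these $I_j$ vectors therefore yields a set of at least $(I_j - D_d)^+$ indices $u \in V_j \cap I_{\mathrm{in}}$ such that $\tilde a_u^{\otimes \ell}$ is an \emph{exact} $1$-bounded linear combination of $\{\tilde a_v^{\otimes \ell} : v \in V_j \setminus \{u\}\}$ (setting the outlier coefficients to $0$).

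Next I transport each such identification to the noisy vectors $\tilde a_v'^{\otimes \ell}$. Reusing the same coefficients $(\alpha_v)_{v \in V_j \setminus\{u\}}$ with $|\alpha_v| \le 1$, the residual equals $(\tilde a_u'^{\otimes \ell} - \tilde a_u^{\otimes \ell}) - \sum_v \alpha_v (\tilde a_v'^{\otimes \ell} - \tilde a_v^{\otimes \ell})$. By Observation~\ref{obs:length-perturb}, every inlier satisfies $\|\tilde a_i\| \le 1 + 4\rho\sqrt{\log m}$ except on an event of probability $\exp(-4d \log m)$; a union bound over the at most $m$ inliers loses only an extra $\log m$ in the exponent. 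On this high-probability event, the telescoping identity used in the proof of Lemma~\ref{robust:lemma:nooutlier} bounds each tensor-power difference in $\ell_2$ by $\ell(1 + 4\rho\sqrt{\log m} + \epsilon_0)^{\ell-1}\epsilon_0$, so the total $\ell_2$-error of the representation is $O_\ell(b \cdot \epsilon_0)$. Converting to $\ell_1$ via $\|\cdot\|_1 \le n^{\ell/2}\|\cdot\|_2$ gives an $\ell_1$-error of $O_\ell(n^{3\ell/2} \epsilon_0)$, which is at most $\tau/2 = \Omega_\ell(\rho^\ell/n^\ell)$ provided $\epsilon_0$ is chosen as a sufficiently small $\poly_\ell(\rho/m)$. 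Thus every such $u$ is added to $S$ by Step~\ref{alg:step:S}.

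The combinatorial step is then immediate. Summing across batches,
\[
  |C| \;\ge\; \sum_{j=1}^{r} (I_j - D_d)^+ \;\ge\; \sum_{j=1}^{r} (I_j - D_d) \;=\; \alpha m - r D_d,
\]
where $r = m/b$. Substituting $b = (1-\delta/3)\binom{n+\ell-1}{\ell}$ and $\alpha \ge (1+\delta) D_d / \binom{n+\ell-1}{\ell}$, and using the elementary inequality $(1+\delta)(1-\delta/3) \ge 1 + \delta/3$ valid for $\delta \in (0,1)$, a one-line calculation yields
\[
  |C| \;\ge\; \alpha m \Bigl(1 - \tfrac{1}{(1+\delta)(1-\delta/3)}\Bigr) \;\ge\; \alpha m \cdot \frac{\delta/3}{1+\delta/3},
\]
which is the claimed bound.

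I expect the main subtlety to be the $\ell_1$ versus $\ell_2$ bookkeeping in the noise-transport step: one has to be careful that the $n^{\ell/2}$ blowup in converting Euclidean error into $\ell_1$ error does not force a choice of $\epsilon_0$ incompatible with the statement $\epsilon_0 = \poly_\ell(\rho/m)$. The combinatorial half is robust to \emph{any} partition of the points into batches, because the bound is obtained by summation and depends only on the totals $\sum_j I_j = \alpha m$ and $r = m/b$; no anti-concentration input is needed for this part.
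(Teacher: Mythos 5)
Your proposal is correct and follows essentially the same route as the paper: Lemma~\ref{lem:small-combinations} applied to the inlier tensors in the $\binom{d+\ell-1}{\ell}$-dimensional symmetric subspace, the telescoping/norm-conversion argument to transfer the $1$-bounded representation to the noisy vectors with $\ell_1$ error below $\tau/2$, and the same summation $|C|\ge \alpha m - (m/b)\binom{d+\ell-1}{\ell}\ge \frac{\delta/3}{1+\delta/3}\alpha m$. No substantive differences.
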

\bnote{Is this probability high enough?}\cnote{I think so.}
\begin{proof}
We start with the following corollary to Lemma~\ref{lem:small-combinations}.  Let us consider the $j$th batch. 

\vspace{4pt}
\noindent {\em Observation.} Let $n_j$ be the number of inliers in the $j$th batch. If $n_j \ge \binom{d+\ell-1}{\ell}+k$, then the size of $S$ found in Step~\ref{alg:step:S} of the algorithm is at least $k$. 
\begin{proof}[Proof of Observation]
Define $B\su{j}$ as in the proof of Lemma~\ref{robust:lemma:nooutlier}. Now, since the inliers are all perturbed within the target subspace, we have that the vectors $\widetilde{a}_i^{\ot \ell}$ corresponding to the inliers all live in a space of dimension $\binom{d+\ell-1}{\ell}$.  Thus by Lemma~\ref{lem:small-combinations}, at least $k$ of the vectors $B\su{j}_i$ can be written as $1$-bounded linear combinations of the vectors $B\su{j}_{-i}$. 

For inliers $i$, using the fact that $a_i$ are perturbed by $\calN(0, \rho^2/d)$, we have
\[ \Pr[ \norm{\widetilde{a}_i} \geq (1+4\rho \sqrt{\log m}) ] \le \exp( - 4d \log m).\]
Using~\eqref{eq:norm-diff-power} again, we have that $\widetilde{a}_i'^{\ot \ell}$ can be expressed as a $1$-bounded linear combination of the other vectors in the batch, with  Euclidean error bounded by $b \ell \cdot (1+5\rho \sqrt{\log m})^{\ell} \eps_0$. We know $\ell_1$ norm is a $\sqrt{n^\ell}$-approximation of $\ell_2$ norm. By assumption, the $\ell_1$ norm of the error is $< \tau/2$, thereby completing the proof of the observation.
\end{proof}

Now, note that we have $\sum_j n_j \ge \alpha m$, by assumption.  This implies that 
\[ \sum_{j=1}^{m/b} \max \left\{ 0, n_j - \binom{d+\ell-1}{\ell} \right\} \ge \alpha m - \frac{m}{b} \binom{d+\ell-1}{\ell} \ge \frac{\delta/3}{1+\delta/3}  \alpha m.\]
The last inequality follows from our choice of $\alpha$ and the batch size $b$. Thus the size of $S$ in the end satisfies the desired lower bound.
\end{proof}

Finally, we prove that using any set of $2d$ inliers, we can obtain a good enough approximation of the space $T$, with high probability (over the choice of the perturbations). The probability will be high enough that we can take a union bound over all $2d$-sized subsets of $[m]$.  
\anote{11/25: shouldn't the bound of $\norm{E}_F \le poly(\rho/n)$ suffice?}
\begin{lemma}\label{robust:lemma:robustsvd}
Let $I \subseteq I_{in}$ be any (fixed) set of size $2d$. Then if $\norm{E}_F\leq \poly(\rho/m)$, the subspace $U$ corresponding to the top $d$ singular value of $\tilde{A}'_I$ will satisfy
\[ \norm{\sin \Theta(U,T)}_F\leq{\poly(m,1/\rho) }\cdot \norm{E}_F\] 
with probability at least $1- e^{-4d\log{m}}$.
\end{lemma}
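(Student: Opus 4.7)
}

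The key structural observation is that every column of $\tilde{A}_I$ lies in $T$, because every inlier is generated in $T$ and is perturbed within $T$. Consequently $\tilde{A}_I$ has rank at most $d$, its column space is contained in $T$, and $\sigma_{d+1}(\tilde{A}_I)=0$. Since $\tilde{A}'_I = \tilde{A}_I + E_I$ where $\|E_I\|_F \le \|E\|_F$, the plan is to apply Wedin's $\sin\Theta$ theorem to conclude that the top-$d$ left singular subspace $U$ of $\tilde{A}'_I$ is close to $T$, provided that $\sigma_d(\tilde{A}_I)$ is non-negligible.

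First I would reduce to a square-like smoothed matrix. Let $B_T \in \R^{n\times d}$ be an orthonormal basis of $T$, and for each $i \in I$ write $\tilde{a}_i = B_T \hat{a}_i$, where $\hat{a}_i = \hat{a}_i^{(0)} + \hat{g}_i$ with $\hat{g}_i \sim \calN(0,\rho^2/d)^d$ (these are exactly the perturbations from the input model, expressed in the basis $B_T$). Form the matrix $\hat{A}_I \in \R^{d \times 2d}$ with columns $\hat{a}_i$, $i \in I$. Because $B_T$ has orthonormal columns, $\sigma_d(\tilde{A}_I) = \sigma_d(\hat{A}_I)$ and the column space of $\tilde{A}_I$ equals $B_T \cdot \mathrm{colspan}(\hat{A}_I)$, which will be all of $T$ once $\hat{A}_I$ has rank $d$.

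Next I would lower bound $\sigma_d(\hat{A}_I)$. Since $\hat{A}_I$ is a $\rho$-smoothed $d \times 2d$ matrix with Gaussian perturbation of variance $\rho^2/d$ per entry, standard smoothed analysis / small-ball estimates for the least singular value of rectangular Gaussian matrices (e.g.\ the Rudelson--Vershynin style bound for rectangular matrices, or the Sankar--Spielman--Teng technique extended to rectangular case) yield
\begin{equation*}
  \Pr\bigl[\sigma_d(\hat{A}_I) \le t\bigr] \;\le\; \bigl( C t \sqrt{d}/\rho \bigr)^{d+1}
\end{equation*}
for a universal constant $C$. Choosing $t = \rho/(C m^{4} \sqrt{d})$ makes the right-hand side at most $m^{-4(d+1)} \le e^{-4 d \log m}$. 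Hence $\sigma_d(\tilde{A}_I) = \sigma_d(\hat{A}_I) \ge \rho / \poly(m)$ with probability at least $1 - e^{-4d \log m}$.

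Finally I would invoke Wedin's $\sin\Theta$ theorem. With $\sigma_{d+1}(\tilde{A}_I)=0$ and $\sigma_d(\tilde{A}_I) \ge \rho/\poly(m)$, the theorem gives
\begin{equation*}
  \|\sin\Theta(U,T)\|_F \;\le\; \frac{\|E_I\|_F}{\sigma_d(\tilde{A}_I) - \|E_I\|_2} \;\le\; \frac{\|E\|_F}{\rho/\poly(m) - \|E\|_F},
\end{equation*}
which, under the hypothesis $\|E\|_F \le \poly(\rho/m)$ chosen small enough that the denominator is at least $\rho/(2\poly(m))$, is bounded by $\poly(m,1/\rho) \cdot \|E\|_F$, as desired. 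The main technical step is Step 2, the high-confidence ($m^{-4d}$) smoothed lower bound on the least singular value of the rectangular matrix $\hat{A}_I$; once that is in hand, Wedin's inequality converts it directly into the required subspace angle bound.
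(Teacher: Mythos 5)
Your proposal is correct and follows essentially the same route as the paper: observe that $\tilde{A}_I$ has rank $\le d$ with column space $T$, reduce to a $d\times 2d$ smoothed Gaussian matrix in the basis $B_T$, lower bound $\sigma_d$ with failure probability $m^{-\Omega(d)}$, and finish with Wedin's theorem. The only difference is that where you invoke a rectangular least-singular-value bound as a black box, the paper proves the estimate $\Pr[\sigma_d(\hat{A}_I)\le t]\lesssim (t\sqrt{d}/\rho)^{d+1}$ directly via the leave-one-out distance: for each row, the orthogonal complement of the remaining rows has dimension $\ge d+1$, and Gaussian anticoncentration along $d+1$ orthonormal directions gives the claimed exponent (this also handles the arbitrary shift $A_I$ cleanly, which the centered Rudelson--Vershynin statement does not literally cover).
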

\begin{proof}
We start by considering the matrix $\tilde{A}_I$ (the matrix without addition of error).  This matrix has rank $\le d$ (as all the columns lie in the subspace $T$). The first step is to argue that $\sigma_d (\tilde{A}_I)$ is large enough. This implies that the space of the top $d$ SVD directions is precisely $T$.  Then by using Wedin's theorem~\cite{Wed72}, the top $d$ SVD space $U$ of $\tilde{A}_I'$ satisfies
\begin{align}
\norm{\sin \Theta(U,T)}_F\leq \frac{2\sqrt{d}\norm{E}_F}{\sigma_d(\tilde{A_I})-\norm{E}_F}.
\end{align}
Hence it suffices to show $\sigma_d(\tilde{A})$ is at least inverse-polynomial with high probability. 


Recall that $\tA_I = A_I + G_I$, where $G_I$ is a random matrix. \anote{11/25: Added} Without loss of generality we can assume that $T$ is spanned by the first $d$ co-ordinate basis; in this case every non-zero entry of $G_I$ is independently sampled from $\mathcal{N}(0,\frac{\rho^2}{d})$. We can thus regard $A_I, G_I$ as being $d \times 2d$ matrices. 
 Recall that leave-one-out distance is a good approximation of least singular value, it suffices to show $\ell((A_I+G_I)^T)$ is at least inverse-polynomial with high probability. Let $A_j,G_j$ denote the $j$th row of $A_I,G_I$ correspondingly. Consider $j\in[d]$, fix all other rows except $j$th. Let $W$ be the subspace of $\R^{2d}$ that is orthogonal to $\text{span}(\{A_k+G_k:k\in[d],k\neq j\})$, \cnote{4.4: shall we say something here like: let us assume dim(W)=d+1, since dim(W)>d+1 can only help us. ?}and let $w_1, w_2, \dots, w_{d+1}$ be an orthonormal basis for $W$. Then for any $t>0$, if the projection of $(A_j + G_j)$ to $W$ is $<t$ (equivalent to the leave-one-out distance $<t$), then for all $1 \le i \le d+1$, we must have $|\iprod{w_i, A_j + G_j}| \le t$. Using the anti-concentration of a Gaussian and the orthogonality of the $w_i$, this probability can be bounded by $(t/\rho)^{d+1}$. Choosing $t = \rho/m^4$, this can be made $< (1/m^4)^{d+1}$, and thus after taking a union bound over the $m$ choices of $j$, we have that the leave-one-out distance is $> \rho/m^4$ (and thus $\sigma_d > \rho/m^5$) with probability $\ge 1 - \exp(-4d \log m)$
\end{proof}
We can now complete the proof of the theorem.
\begin{proof}[Proof of Theorem~\ref{thm:robustrecovery:maintheorem}]

Suppose that $\norm{E}_F\leq\epsilon_0$ is small enough. Now by Lemma~\ref{robust:lemma:nooutlier}, we have that $C\subseteq I_{in}$ with probability at least $1- \exp(-g(n) +\log m$.
%
By Lemma~\ref{robust:lemma:alotinlier} and our assumption that $m$ is at least $\Omega(d/(\delta \alpha))$, we know $|C|\geq 2d$ with probability $1-e^{-4d\log m}$. Finally, by Lemma~\ref{robust:lemma:robustsvd} and a union bound over all $2d$ sized subsets of $[m]$, we have that with probability at least $1 - \exp(-\Omega(d \log m))$, 
for any subset of inliers with size $2d$, the subspace $T'$ corresponding to the top-$d$ singular value will satisfy $\norm{\sin\Theta(T,T')}_F\leq\norm{E}_F/\poly(m)$.
\end{proof}

\subsection{Batches when $m$ is not a multiple of $b$}\label{sec:non-divisible}
he case of $m$ not being a multiple of $b$ needs some care because we cannot simply ignore say the last few points (most of the inliers may be in that portion). But we can handle it as follows: let $m'$ be the largest multiple of $b$ that is $<m$. Clearly $m' > m/2$. Now for $1 \le j \le n$, define $\calD_j = \{x_{j}, x_{j+1}, \dots, x_{j+m'-1}\}$ (with the understanding that $x_{n+t} = x_t$). This is a set of $m'$ points for every choice of $j$. Each $\calD_j$ is a possible input to the algorithm, and it has at least $m' > m/2$ points, and additionally the property that $b|m'$.

At least one of the $\calD_j$ has $\ge \alpha$ fraction of its points being inliers (by averaging). Thus the procedure above (and the guarantees) can be applied to recover the space. To ensure that no outlier is chosen in step~\ref{alg:step:S} of the algorithm (Lemma~\ref{robust:lemma:nooutlier}), we take an additional union bound to ensure that Lemma~\ref{robust:lemma:nooutlier} holds for all $\calD_j$.

\section{Learning Hidden Markov Models}\label{sec:HMM}


We consider the setup of Hidden Markov Models considered in \cite{AMR09, AHK12}. 
A hidden state sequence $Z_1, Z_2, \ldots, Z_m \in [r]$ forms a stationary Markov chain with transition matrix $P$ and initial distribution $w = \{w_k\}_{k \in [r]}$, assumed to be the stationary distribution. The observations $\{X_t\}_{t \in [m]}$ are vectors in $\R^n$.  The observation matrix of the HMM is denoted by $\mathcal{O} \in \R^{n\times r}$; the columns of $\mathcal{O}$ represent the conditional means of the observation $X_t \in \R^n$ conditioned on the hidden state $Z_t$ i.e., $\E[X_t \vert Z_t=i] = \mathcal{O}_i$, where $\mathcal{O}_i$ represents the $i$th column of $\mathcal{O}$. We also assume that $X_t$ has a subgaussian distribution about its mean (e.g., $X_t$ is distributed as a multivariate Gaussian with mean $\mathcal{O}_i$ when the hidden state $Z_t=i$).
\anote{11/28: edited above.}
In the smoothed analysis setting, the model is generated using a randomly perturbed observation matrix $\tOcal$, obtained by adding independent Gaussian random vectors drawn from  $N(0,\rho^2/n)^n$ to each column of $\mathcal{O}$. We remark that some prior works \cite{AMR09, Sharanetal} consider the more restrictive discrete setting where the observations are discrete over an alphabet of size $n$.\footnote{These observations can be represented using the $n$ standard basis vectors for the $n$ alphabets and column $\mathcal{O}_i$ gives the probability distribution conditioned on the current state being $i \in [r]$.} While our smoothed analysis model with small Gaussian perturbations is natural for the more general continuous setting, it may not be an appropriate smoothed analysis model for the discrete setting (for example, the perturbed vector $\mathcal{O}_i$ could have negative entries). 
\anote{11/28: added some detail.}

Using a trick from \cite{AMR09, AHK12}, we will translate the problem into the setting of multi-view models. Let $m = 2\ell + 1$ for some $\ell$ to be chosen later, and use the hidden state $Z_{\ell + 1}$ as the latent variable. \anote{11/28 added:}
In what follows, we will abuse notation and also represent the states using the standard basis vectors $e_1, e_2, \dots, e_r \in \R^r$: for each $j \in [r], \ell \in [m]$,  $Z_\ell=e_j \in \R^r$ iff the state at time $\ell$ is $j$. Our three views are obtained by looking at the past, present, and future observations: the first view is $X_\ell  \otimes X_{\ell-1} \otimes \ldots \otimes X_1$, the second is $X_{\ell + 1}$ and the third is $X_{\ell + 2} \otimes X_{\ell+3} \otimes \ldots X_{2\ell + 1}$. We can access these views by viewing the moment tensor $X_1 \otimes \ldots \otimes X_{2\ell+1}$ as a 3-tensor of shape $n^\ell \times n \times n^\ell$. The conditional expectations of these three views are given by matrices $A$, $B$, and $C$ of dimensions $n^\ell \times r$, $n \times r$, and $n^\ell \times r$ respectively. Explicitly, these matrices satisfy
\begin{align*}
\E[X_\ell \otimes \ldots \otimes X_1 | Z_{\ell + 1}] &= AZ_{\ell+1},\\
\E[X_{\ell+1} | Z_{\ell+1}] &= BZ_{\ell+1},\\ \E[X_{\ell+2}\otimes \ldots \otimes X_{2\ell+1} | Z_{\ell+1}] &= CZ_{\ell+1}.
\end{align*}
Let $P' = \diag(w)P^T\diag(w)^{-1}$, which is the reverse transition matrix $Z_i \to Z_{i-1}$, and let $X\odot Y$ denote the Khatri-Rao product of $X$ and $Y$, given in terms of its columns by $(X\odot Y)_i = X_i \otimes Y_i$. 
Then we can write down $A$, $B$, and $C$ in terms of the transition and observation matrices as follows. This fact is straightforward to check, so we leave the details to \cite{AMR09}.
\begin{align}
A &= (( \ldots (\tOcal P') \odot \tOcal)P')\odot \tOcal) \ldots P')\odot \tOcal)P'\\
B &= \tOcal\\
C &= (( \ldots (\tOcal P) \odot \tOcal)P)\odot \tOcal) \ldots P)\odot \tOcal)P,
\end{align}
where $\tOcal$ and $P$ or $P'$ appear $\ell$ times each in $A$ and $C$. Our goal is to upper bound the condition numbers of $A$ and $C$. Once we do this, we will be able to use a argument similar to that in \cite{BCV} to obtain $P$ and $\tOcal$ up to an inverse polynomial error.

The proof of this theorem will use a simple lemma relating the minimum singular value of a matrix $A$ to that of a matrix obtained by adding together rows of $A$.

\begin{lemma} \label{lem:addingmatrixrows}
Let $n_1, n_2, n_3$ be positive integers with $n_2 \ge n_3$. Let $A = (A_{(i_1,i_2),j}) \in \R^{n_1n_2 \times n_3}$ be a matrix, and let $B \in \R^{n_2 \times n_3}$ be the matrix whose $i_2$th row is $\sum_{i_1} A_{[(i_1, i_2),:]}$. Then $\sigma_{n_3}(A) \ge \frac{1}{\sqrt{n_1}} \sigma_{n_3}(B)$.
\end{lemma}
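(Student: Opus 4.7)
The plan is to realize $B$ as a linear image of $A$ and track how the singular values must respond. Concretely, I would define the summation matrix $S \in \R^{n_2 \times n_1 n_2}$ by $S_{i_2, (i_1', i_2')} = 1$ if $i_2 = i_2'$ and $0$ otherwise. By construction $B = SA$, since the $i_2$th row of $SA$ equals $\sum_{i_1} A_{(i_1, i_2), :}$.

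The key observation is then a norm computation on $S$. Each row of $S$ has exactly $n_1$ ones and the rows have disjoint support, so $SS^\transpose = n_1 I_{n_2}$. Hence $\|S\|_{op} = \sqrt{n_1}$. Applying this to any vector $v \in \R^{n_3}$ gives $\|Bv\|_2 = \|S A v\|_2 \le \sqrt{n_1} \, \|Av\|_2$, i.e.
\begin{equation*}
\|Av\|_2 \ge \frac{1}{\sqrt{n_1}} \|Bv\|_2 \qquad \text{for all } v \in \R^{n_3}.
\end{equation*}

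To conclude I would invoke the variational characterization $\sigma_{n_3}(M) = \min_{\|v\|=1} \|Mv\|_2$ (valid since $M$ has $n_3$ columns), so taking the minimum of both sides over unit $v$ yields $\sigma_{n_3}(A) \ge \tfrac{1}{\sqrt{n_1}} \sigma_{n_3}(B)$. There is no real obstacle in this lemma; the only point worth being careful about is verifying the operator norm of $S$ (via $SS^\transpose$ rather than $S^\transpose S$, which is a projector scaled by $n_1$ on the block-constant subspace and is not the identity). Everything else is a one-line application of submultiplicativity and the min-max principle.
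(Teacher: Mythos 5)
Your proof is correct and follows essentially the same route as the paper: both write $B$ as (summation matrix)$\times A$, bound the operator norm of that summation matrix by $\sqrt{n_1}$ (you via $SS^\transpose = n_1 I$, the paper via Cauchy--Schwarz), and conclude by the variational characterization of the smallest singular value. No gaps.
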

\begin{proof}
We can write $B = MA$, where $M \in \R^{n_2 \times n_1n_2}$ is a matrix whose $i$th row consists of $n_1(i-1)$ zeros, then $n_1$ ones, then $n_1(n_2-i)$ zeros. For any $v = (v_{ij}) \in \R^{n_1n_2}$, applying the Cauchy-Schwarz inequality gives
\[
\|Mv\|^2 = \sum_{i=1}^{n_2} (M_{[i,:]} \cdot v)^2 = \sum_{i=1}^{n_2}\left(\sum_{j=1}^{n_1} v_{ij}\right)^2 \le n_1 \|v\|^2.
\]
Therefore $\sigma_{max}(M) \le \sqrt{n_1}$. Since $\sigma_{min}(B) \le \sigma_{max}(M)\sigma_{min}(A)$, we have $$\sigma_{min}(A) \ge \frac{1}{\sqrt{n_1}}\sigma_{min}(B).$$
\end{proof}

\begin{theorem}
Let $\ell \in \Z_+$ be a constant. Suppose we are given a Hidden Markov Model in the setting described above, satisfying the following conditions:
\begin{enumerate}
\item $P\in \R^{r\times r}$ is $d$-sparse, where $d < O(\min\{n/\ell^2, n/r^{1/\ell}\})$ and $n = \Omega(r^{1/\ell})$. In addition, we assume $\sigma_{min}(P) \ge \gamma_1$. 
\item The columns of $\Ocal \in \R^{n\times r}$ are polynomially bounded (i.e.~the lengths are bounded by some polynomial in $n$) and are perturbed by independent Gaussian noise $N(0,\rho^2/n)^n$ to obtain $\tOcal$, with columns $\{\tOcal_i\}$.
\item The stationary distribution $w$ of $P$ has $w_i > \gamma_2$ for all $i \in [r]$.
\end{enumerate}
Then there is an algorithm that recovers $P$ and $\tOcal$ up to $\epsilon$ error (in the Frobenius norm) with probability at least $1-\exp(-\Omega_\ell(n))$,  using samples of $m = 2\ell + 1$ consecutive observations of the Markov chain. The algorithm runs in time $(n/(\rho\gamma_1\gamma_2\epsilon))^{O(\ell)}$.
\end{theorem}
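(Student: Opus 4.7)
The plan is to follow the standard three-view reduction for HMMs: grouping $m = 2\ell+1$ consecutive observations with $Z_{\ell+1}$ as the latent variable yields a multi-view model with conditional-mean matrices $A$, $B = \tOcal$, and $C$ as described in the setup. Once we establish that $\sigma_{\min}(A), \sigma_{\min}(C) \ge 1/\poly_\ell(n, 1/\rho, 1/\gamma_1, 1/\gamma_2)$ with probability at least $1 - \exp(-\Omega_\ell(n))$, the rest follows from standard multi-view tensor decomposition (e.g.\ Jennrich's algorithm as analyzed in \cite{AHK12, BCV}) applied to an empirical estimate of the third-order moment tensor $\E[X_1^{\ot\ell} \otimes X_{\ell+1} \otimes X_{2\ell+1}^{\ot\ell}]$, followed by a linear-system step to recover $P$ from pairwise moments of the form $\E[X_t \otimes X_{t+1}] = \tOcal \diag(w) P^\top \tOcal^\top$. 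The $\epsilon$-accuracy guarantee and runtime $(n/(\rho\gamma_1\gamma_2\epsilon))^{O(\ell)}$ follow from the usual perturbation analyses, with all polynomial dependencies driven by the least-singular-value bounds.

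The main task is therefore to lower-bound $\sigma_{\min}(A)$ (and analogously $\sigma_{\min}(C)$). I will write $A = M Q$, where $M \in \R^{n^\ell \times R}$ is the monomial matrix whose columns are the rank-one tensor products $\tOcal_{i_1} \otimes \tOcal_{i_2} \otimes \cdots \otimes \tOcal_{i_\ell}$ indexed by the length-$\ell$ paths $(i_1, \dots, i_\ell)$ in the (reverse) transition graph (so $R \le r d^{\ell-1} \le n^\ell$ under the hypothesis $d \le O(n/r^{1/\ell})$), and $Q \in \R^{R \times r}$ is the coefficient matrix whose $(\text{path}, i)$-entry is the probability of that path conditioned on $Z_{\ell+1} = i$ under the reverse chain with transition matrix $P'$. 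Then $\sigma_{\min}(A) \ge \sigma_{\min}(M)\cdot\sigma_{\min}(Q)$, and I bound each factor separately.

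The bound on $\sigma_{\min}(M)$ is obtained from Theorem~\ref{thm:indepmonomials}. The key step is an overlap count: since $P$ is $d$-sparse, two length-$\ell$ paths that differ in exactly $s$ positions can be enumerated by first choosing the $\binom{\ell}{s}$ differing positions and then, in each differing segment, choosing transitions consistent with the adjacent agreeing endpoints with at most $d$ choices per transition, giving $\Delta_s \le \binom{\ell}{s} d^s$. Substituting into the hypothesis of Theorem~\ref{thm:indepmonomials} yields
\begin{equation*}
    \sum_{s=1}^\ell \Delta_s \Big(\frac{n}{\ell}\Big)^{\ell-s} \le \Big(\frac{n}{\ell}\Big)^\ell\bigl((1 + d\ell/n)^\ell - 1\bigr),
\end{equation*}
which is at most $c(n/\ell)^\ell$ for some constant $c<1$ whenever $d \le O(n/\ell^2)$. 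Theorem~\ref{thm:indepmonomials} then gives $\sigma_{\min}(M) \ge \Omega_\ell\!\bigl((\rho/n)^\ell/\sqrt{R}\bigr)$ with probability at least $1 - \exp(-\Omega_\ell(n))$ (since $\log R \le \ell \log n = o(n)$ is absorbed into the exponent). To bound $\sigma_{\min}(Q)$, I proceed inductively on $\ell$: at $\ell=1$, $Q = P' = \diag(w)P^\top\diag(w)^{-1}$, which has $\sigma_{\min}(Q) \ge \gamma_1\gamma_2$; for $\ell > 1$, I iteratively ``fold'' rows of $A$ using Lemma~\ref{lem:addingmatrixrows} (which corresponds to marginalizing out one observation step) to reduce to the smaller-$\ell$ case, losing only factors of $\gamma_2$ and $\sqrt{d}$ per step.

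The main obstacle is the overlap count for Theorem~\ref{thm:indepmonomials}: a naive count over paths (without using sparsity of $P$) would give $\Delta_s$ as large as $\binom{\ell}{s} r^s$, which would push the hypothesis well past any constant $c$; crucially, the $d$-sparsity of $P$ replaces $r$ by $d$, and together with $d \le O(n/\ell^2)$ this yields a geometric series summing to a constant less than $1$. Once the condition-number bounds for $A$ and $C$ are in place, the rest---Lemma~\ref{lem:addingmatrixrows}-based analysis of $Q$, robust tensor decomposition to recover $\tOcal$ up to permutation and scaling, and linear-system recovery of $P$---is fairly standard and the probability and accuracy claims are dominated by the $\exp(-\Omega_\ell(n))$ bound from Theorem~\ref{thm:indepmonomials}.
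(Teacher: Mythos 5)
Your condition-number analysis matches the paper's almost exactly: the same factorization $C = M F$ into a path-indexed monomial matrix times a coefficient matrix, the same use of Theorem~\ref{thm:indepmonomials} with the sparsity-driven overlap count, and the same row-folding via Lemma~\ref{lem:addingmatrixrows} to reduce the coefficient matrix to $(P^\ell)^T$ (the paper does this in one summation over $i_1,\dots,i_{\ell-1}$ rather than inductively, but that is cosmetic). However, your final step for recovering $P$ has a genuine gap. You propose solving a linear system built from the pairwise moment $\E[X_t \otimes X_{t+1}] = \tOcal\,\diag(w)\,P^\top\tOcal^\top$. In the overcomplete regime the whole point is that $r$ can be polynomially larger than $n$, so $\tOcal \in \R^{n\times r}$ has a nontrivial kernel and this $n\times n$ matrix (with $n^2 \ll r^2$ entries) does not determine the $r\times r$ matrix $P$ even given $\tOcal$ exactly; the system is hopelessly underdetermined. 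The paper circumvents this by running the construction with window length increased by one to obtain $D = (C\odot\tOcal)P$, and then inverting the \emph{well-conditioned, tall} matrix $C\odot\tOcal \in \R^{n^{\ell+1}\times r}$ (whose least singular value is controlled by $\sigma_{\min}(D)/\sigma_{\max}(P)$) to solve for $P$. Some step of this kind, using an $\Omega(r)$-dimensional lifted object rather than raw pairwise moments, is essential.

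A second, smaller issue: your overlap bound $\Delta_s \le \binom{\ell}{s}d^s$ is only valid for $s < \ell$, since the counting argument anchors the path at a position where the two columns agree. For $s = \ell$ there is no anchor, and the correct bound is $\Delta_\ell \le R \le r\,d^{\ell-1}$, which carries an extra factor of $r$. Your geometric-series verification of the hypothesis of Theorem~\ref{thm:indepmonomials} therefore undercounts the $s=\ell$ term; controlling $r\,d^{\ell}$ against $(n/\ell)^\ell$ is precisely where the second hypothesis $d \le O(n/r^{1/\ell})$ is used (the condition $d \le O(n/\ell^2)$ alone handles only the $s<\ell$ terms). You invoke that hypothesis when bounding $R$, so the fix is immediate, but as written the verification is incorrect.
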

\begin{proof}
We will show that $C$ is well-conditioned. First note that since the columns of $\tOcal$ (and therefore of $C$) are polynomially bounded, $\sigma_{max}(C)$ is also bounded by some polynomial in $n$ and $r$. Therefore we only need to give a lower bound on $\sigma_{min}(C)$. Since $\sigma_{min}(P') \ge \gamma_2\cdot\sigma_{min}(P)$, the proof for $A$ is identical. We can write $C = M(\tOcal, P)\cdot F(P)$, where $M \in \R^{n^\ell \times R}$ is a matrix whose columns are order-$\ell$ tensor products of $\{\tOcal_i\}$ and $F(P) \in \R^{R\times r}$ is a matrix of coefficients. We will show that each of these factors is well-conditioned, which will give us a bound on the condition number of $C$.

First we work with $M$. The columns of $M$ are all of the tensor products of $\{\tOcal_i\}$ that appear in $C$. Specifically, if the columns of $\tOcal$ are $\{\tOcal_i\}_{i \in [r]}$, then the columns of $M$ are all tensor products of the form 
\begin{equation}
\tOcal_{i_1} \otimes \ldots \otimes \tOcal_{i_\ell},
\end{equation}
where $P_{i_s,i_{s+1}} \neq 0$ for all $s = 1, \ldots, \ell-1$. The key here is that while the noise coming from the $\rho$-perturbations of $\{\Ocal_i\}$ is not independent column to column, any column of $M$ has noise that is highly correlated with only a few other columns. 

In order to apply Theorem~\ref{thm:indepmonomials}, we need to find $\Delta_1, \ldots, \Delta_\ell$. Fix a column $M_i$ of $M$. For $s < \ell$, we have
\begin{equation}
\Delta_s(M_i) \le \binom{\ell}{s}d^s.
\end{equation}
To show why, we describe a way of generating all columns of $M$ that differ from $M_i$ in $s$ factors. First, choose a set $S \subseteq [\ell]$ with $|S| = s$, which will specify the places 
at which the new column will differ from $M_i$. Begin at one place at which the new column will not differ, which is possible because $s < \ell$. Fill in the remaining factors by progressing by step forwards and backwards until each factor is chosen. Each time a place in $S$ is encountered, we have at most $d$ choices due to the sparsity of $P$.
\begin{remark}
Note that not all of these choices may correspond to a path through the state space of the Markov chain. Thus additional conditions limiting the number of short cycles in the graph of the Markov chain could lead to smaller upper bounds on $\{\Delta_s\}$.
\end{remark}
For $s = \ell$, we have $\Delta_\ell(M_i) \le R \le r \cdot d^\ell$ since all of the $\ell$ factors are arbitrary as long as they determine a path in the Markov chain.

Now the condition of Theorem~\ref{thm:indepmonomials} becomes
\begin{equation}
rd^\ell + \sum_{s=1}^{\ell-1} \binom{\ell}{s}d^s\left(\frac{n}{\ell}\right)^{\ell-s} \le c\left(\frac{n}{\ell}\right)^\ell \qquad \text{for } c \in (0,1),
\end{equation}
which holds by the restrictions on $d$ and $r$. Therefore we conclude that $\sigma_{min}(M) \ge \Omega_\ell(1)\cdot(\rho/n)^\ell/\sqrt{R}$ with probability at least $1 - \exp(-\Omega_\ell(n) + \log R) \ge 1 - \exp(-\Omega_\ell(n) + \log n^\ell) = 1 - \exp(-\Omega_\ell(n)).$

Next, we show that $F$ is well-conditioned. To simplify notation, we write as if $R = r^\ell$ (in which case $M$ would have many unused columns and $F$ would have many zero rows and columns). Index the rows of $F$ by a tuple $(i_1, \ldots, i_\ell)$. We have
\begin{equation}
    F_{(i_1, \ldots, i_\ell), j} = P_{ji_1}P_{i_1i_2}\cdots P_{i_{\ell-1}i_\ell}.
\end{equation}
In other words, the coefficient of $\tOcal_{i_1} \otimes \ldots \otimes \tOcal_{i_\ell}$ in column $j$ of $C$ is the probability, given that you begin at state $j$, of traveling through states $i_1, \ldots, i_\ell$. 

We want to give a lower bound for the least singular value of $F$. Lemma~\ref{lem:addingmatrixrows} shows that it is enough to bound the least singular value of a matrix obtained by adding together rows of $F$. Using this idea, we sum over all rows with the same $i_\ell$ to obtain a matrix $F' \in \R^{r \times r}$ with entries
\begin{equation}
    F'_{i,j} = \sum_{i_1, \ldots, i_{\ell-1}} P_{ji_1}P_{i_1i_2}\cdots P_{i_{\ell-1}i}.
\end{equation}
Thus we have $F' = (P^\ell)^T$, which has $\sigma_{min}(F') \ge \gamma_1^\ell$. Therefore Lemma~\ref{lem:addingmatrixrows} gives $\sigma_{min}(F) \ge \gamma_1^\ell/r^{\ell/2}$.

These two results show that 
$$\sigma_{min}(C) \ge \Omega_\ell(1)\cdot(\rho\gamma_1)^\ell/(n\sqrt{rd})^\ell r^{1/2} \ge \Omega_\ell(1)\cdot\left(\frac{\rho\gamma_1}{\sqrt{n^3r}}\right)^\ell$$
with probability at least $1 - \exp(-\Omega_\ell(n)).$

As mentioned above, we also get $\sigma_{min}(A) \ge \Omega_\ell(1)\cdot(\rho\gamma_1\gamma_2/\sqrt{n^3r})^\ell$ with the same probability. In order to recover $P$ and $\tOcal$, we use an algorithm similar to Algorithm 1 from Sharan et al.~\cite{Sharanetal}. First, we can estimate the moment tensor $X_1 \otimes \ldots \otimes X_{2\ell+1}$ to sufficient accuracy using $\poly_\ell(n, 1/\epsilon)$ many samples since each observation vector has a conditional distribution which is subgaussian. This follows from standard large deviation bounds, for example see Lemma C.1 in \cite{BCV}. Next, we can  obtain $A$, $B$, and $C$ up to an error $\delta = \poly(\epsilon, n, \rho)$ using a tensor decomposition algorithm such as in \cite{BCMV}.
Since $B = \tOcal$, it only remains to find $P$. To do this, we use a similar trick to \cite{AMR09}. We will use the fact that $C$ and $P$ are both well-conditioned. First, let $D = (C \odot \tOcal)P$. Note that we can obtain $D$ by following the entire procedure again but increasing $\ell$ by one. Since we already have $\tOcal$ up to a small error, we can also find $C \odot \tOcal$. Now $\sigma_{min}(C \odot \tOcal) \ge \sigma_{min}(D)/\sigma_{max}(P)$, and $\sigma_{max}(P) \le \sqrt{r}$. Therefore we can recover $P$ from $D$ and $C \odot \tOcal$ up to the required inverse polynomial error. 

\end{proof}

\section{Higher Order Tensor Decompositions}\label{sec:foobi}

In this section, we describe an algorithm to decompose $2\ell$'th order tensors of rank up to $n^{\ell}$. Let us start by recalling the problem: suppose $A_1, \dots, A_R$ are vectors in $\R^n$. Consider the $2\ell$'th order moment tensor
\[ M_{2\ell} = \sum_{i=1}^R A_i^{\ot 2\ell}. \]

The tensor decomposition problem asks to find the vectors $A_i$ to a desired precision (up to a re-ordering), given only the tensor $M_{2\ell}$. The question of {\em robust recovery} asks to find the vectors $A_i$ to a desired precision given access to a {\em noisy} version of $M_{2\ell}$, specifically, given only the tensor $T = M_{2\ell} +\err$.  The aim is to show that recovery is possible, assuming that $\norm{\err}$ is bounded by some polynomial in $n$ and the desired precision for recovering the $A_i$. We give an algorithm for robust recovery, under certain condition number assumptions on the $A_i$. Then using the methods developed earlier in the paper, we show that these assumptions hold in a smoothed analysis model. 

\subsection{Robust decomposition assuming non-degeneracy}\label{sec:foobi-robust}
We will now consider a generalization of the algorithm of Cardoso~\cite{Cardoso}, and prove robust recovery guarantees under certain non-degeneracy assumptions.  As stated in the introduction, our contribution is along two directions: the first is to extend the algorithms of~\cite{Cardoso} and~\cite{dlVKKV05} to the case of $2\ell$'th order tensors. Second (and more importantly), we give a robustness analysis. 

We now define an operator, and then a matrix whose condition number is important for our argument. Given $\ell$'th order tensors $X, Y$, we define the operator $\Phi$ as $\Phi(X, Y) = \Psi(X, Y) + \Psi(Y, X)$, where $\Psi: \R^{\ell} \times \R^{\ell} \mapsto \R^{2\ell}$ is defined by:
\begin{equation}\label{eq:def-psi}
\Psi(X, Y) (i_1, i_2, \dots, i_\ell, j_1, j_2, \dots, j_\ell) = X_{i_1 \dots i_{\ell-1} i_\ell} Y_{j_1 \dots j_{\ell-1} j_\ell} - X_{i_1 \dots i_{\ell-1} j_\ell} Y_{j_1 \dots j_{\ell-1} i_\ell}
\end{equation}
One of the nice properties of $\Phi$ above is that $\Phi(X, X) = 0$ for a {\em symmetric} tensor\footnote{An $\ell$'th order tensor $T$ is said to be symmetric if $T_{i_1 i_2 \dots i_\ell} = T_{\pi(i_1) \pi(i_2) \dots \pi(i_\ell)}$ for any permutation $\pi$.} $X$ iff $X = \bu^{\otimes \ell}$, for some $\bu \in \R^n$ (and for this reason,~\cite{Cardoso} who introduced such an operator for $\ell=2$ and subsequent works refer to this as a rank-1 ``detector''). The algorithm and its analysis only use the easy direction of the above statement, namely $\Phi (\bu^{\otimes \ell}, \bu^{\otimes \ell}) = 0$ for any $\bu \in \R^n$, and thus we do not prove the property above.

The following matrix plays a crucial role in the analysis: consider the $\binom{R}{2}$ vectors of the form $\Phi(A_i^{\otimes \ell}, A_j^{\otimes \ell})$, for $i < j$. Let $M_\Phi$ be the matrix with all of these vectors as columns.  Thus $M_{\Phi}$ is of dimensions $n^{2\ell} \times \binom{R}{2}$.

\nvgap
\paragraph{Relevant condition numbers.}  Our robustness analysis will depend on (a) the condition number of the matrix $U:= A^{\odot \ell}$, which we will denote by $\kappa_U$, and (b) the condition number of the matrix $M_\Phi$ described above, which we will denote by $\kappa_M$. For convenience, let us also define $\bu_i = A_i^{\ot \ell}$, flattened. From the definition of $U$ above, we also have $M_{2\ell}$ equal to $UU^T$, when matricized. 

The following is our main result of the section. 

\begin{theorem}\label{thm:foobi-robust}
Given the tensor $T = M_{2\ell} + \err$, an accuracy parameter $\eps$, and the guarantee that $\norm{\err}_{F} \le \eps^c / (\kappa_U \kappa_M)^{c'}$ for some constants $c, c'$, there is an algorithm that outputs, with failure probability $1-\gamma$, a set of vectors $\{B_i\}_{i=1}^R$ such that 
\[ \min_{\pi} \sum_i \norm{A_i - B_{\pi(i)}} \le \eps.\]
Furthermore, this algorithm runs in time $\poly(n^\ell, \kappa_U, \kappa_M, \log(1/\gamma))$.
\end{theorem}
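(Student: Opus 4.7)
The plan is to analyze a generalization of Cardoso's FOOBI algorithm proceeding in three stages: (i) compute approximations $\tilde{B}_1, \ldots, \tilde{B}_R$ to a basis of $S := \mathrm{span}\{\bu_1, \ldots, \bu_R\}$ by taking the top-$R$ left singular vectors of the $n^\ell \times n^\ell$ matricization of $T$; (ii) form the matrix $\tilde{N}$ whose columns are $\Phi(\tilde{B}_i, \tilde{B}_j)$ for $i \le j$, compute its approximate right null space, and view it as a subspace of $\mathrm{Sym}^2(\R^R)$; (iii) decompose this subspace into rank-one matrices $y^{(k)}(y^{(k)})^\top$ via a simultaneous-diagonalization step, and recover each $A_k$ as the symmetric rank-one tensor $\sum_i y^{(k)}_i \tilde{B}_i$.

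For step (i), the matricization of $M_{2\ell}$ equals $UU^\top$, and so has its $R$-th singular value equal to $\sigma_R(U)^2 \ge 1/\kappa_U^2$. Since $\|\err\|_F$ is a sufficiently small polynomial in $1/(\kappa_U\kappa_M)$, Wedin's theorem bounds $\sin\Theta(\tilde{S}, S)$ and hence guarantees the existence of some orthonormal basis $B_1, \ldots, B_R$ of $S$ with each $\|\tilde{B}_i - B_i\|$ polynomially small. Critically, we do \emph{not} assert that $\tilde{B}_i$ converges to a pre-specified basis (it cannot, if $UU^\top$ has a nearly degenerate spectrum); we only use that the subspace is close, and that some rotation-alignment of bases exists.

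For step (ii), observe that the linear map $L : \mathrm{Sym}^2(\R^R) \to \R^{n^{2\ell}}$ defined by $L(Y) = \sum_{i,j} Y_{ij}\, \Phi(B_i, B_j)$ has kernel of dimension exactly $R$, spanned by $y^{(k)}(y^{(k)})^\top$ where $\bu_k = \sum_i y^{(k)}_i B_i$: the inclusion of these rank-one matrices in $\ker L$ uses only that $\Phi(\bu_k, \bu_k) = 0$, while the converse (no additional rank-one symmetric tensors lie in $S$) follows from the assumption that $\kappa_M < \infty$, which, after factoring out the $R$ known relations, gives a quantitative lower bound on $\sigma_{\min}$ of the restricted operator. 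Because $\Phi$ is bilinear in its arguments, $\|\tilde{L} - L\|_{\mathrm{op}}$ is polynomially small in $\max_i \|\tilde{B}_i - B_i\|$, so a standard singular-subspace perturbation argument shows the approximate kernel of $\tilde{L}$ (the span of right singular vectors of $\tilde{N}$ whose singular value lies below a suitable threshold) is $\sin\Theta$-close to $\ker L$.

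For step (iii), once we have an $R$-dimensional subspace close to $\mathrm{span}\{y^{(k)}(y^{(k)})^\top\}$, we contract against two random directions in $\R^R$ and solve the resulting generalized eigenvalue problem in the style of Jennrich. This step is robust provided the $y^{(k)}$ are well-conditioned, which holds because $\{y^{(k)}\}$ are precisely the coordinates of $\bu_k$ in the basis $\{B_1, \ldots, B_R\}$; the conditioning of this change-of-basis is governed by $\kappa_U$. After recovering approximations $\tilde y^{(k)}$, the tensor $\sum_i \tilde y^{(k)}_i \tilde{B}_i$ approximates $\bu_k = A_k^{\otimes \ell}$, from which $A_k$ is extracted (up to sign) by a standard rank-one symmetric tensor readout. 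The main obstacle throughout is the instability of individual $\tilde{B}_i$; the resolution is to treat $L$ and its kernel as basis-independent objects and observe that every perturbation bound in steps (ii) and (iii) depends only on quantities like $\sin\Theta(\tilde{S}, S)$ and operator norms of bilinear maps, which are invariant under rotations inside $S$. Carefully tracking error propagation through this chain then produces the polynomial dependence on $\eps, \kappa_U, \kappa_M$ claimed in the theorem.
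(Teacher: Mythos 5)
Your proposal is correct and follows the same overall FOOBI framework as the paper (SVD of the matricization, the rank-1 detector $\Phi$ applied to pairs of basis elements, the $R$-dimensional null space of the resulting linear system, extraction of its rank-one elements), and you handle the central difficulty --- the instability of individual singular vectors --- exactly as the paper does, by invoking only subspace closeness plus the existence of an aligning rotation. The execution of the middle and final steps is genuinely different, though. The paper does not work with an orthonormal basis of the top-$R$ subspace: it takes $H = E\Lambda^{1/2}$, a square-root factor of the matricized $UU^\transpose$, so that the change of basis $Q$ with $U = HQ$ is \emph{orthogonal}. This buys three things at once: the null space is spanned by the orthonormal family $\{Q_sQ_s^\transpose\}$, so a \emph{single} random Gaussian element of the (approximate) null space is $\sum_s \alpha_s Q_sQ_s^\transpose$ with i.i.d.\ $\alpha_s$ and an ordinary symmetric eigendecomposition recovers the $Q_s$ (the only probabilistic event needed is an eigenvalue gap of order $1/R^2$); the quantitative null-space bound ($\sigma_{R+1}$ of the system matrix is at least $\sigma_{\min}(M_\Phi)$) comes out clean with no extra condition-number loss; and $U = HQ$ is automatically correctly scaled. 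Your route pays for the orthonormal basis in three corresponding places: the coordinate vectors $y^{(k)}$ of $\bu_k$ are not orthogonal, so you need a Jennrich-style simultaneous diagonalization of two random null-space elements, with conditioning (and the quantitative kernel bound) degraded by a factor polynomial in $\kappa_U$; and the rank-one elements $y^{(k)}(y^{(k)})^\transpose$ only determine $\bu_k$ up to scale, so your ``standard readout'' needs an explicit extra step (e.g., a least-squares fit of the known matricization against the recovered directions) to recover the norms $\norm{\bu_k}$ --- this is routine and well-conditioned given $\kappa_U$, but you should state it, since the theorem asks for the $A_i$ themselves. In exchange, you avoid the paper's operator-monotonicity lemma bounding the perturbation of a matrix square root (its Lemma on $\norm{ZQ - H}_F$), needing only Wedin/Davis--Kahan throughout. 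Both arguments yield the claimed polynomial dependence on $\eps$, $\kappa_U$, $\kappa_M$.
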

\pnote{I'm fairly certain there's no $\eps$ dependence in the running time, but someone else should make sure as well.}

\nvgap
\paragraph{Remark.}  We note that the above statement does not explicitly require a bound on the rank $R$.  However, the finiteness of the condition numbers $\kappa_U$ and $\kappa_M$ implies that $R \le n^{\ell}/2$. Our theorem~\ref{thm:foobi-smooth} shows that when $R \le n^{\ell}/2$, the condition numbers are both polynomial in $n$ in a smoothed analysis model. Also, we do not explicitly compute $c, c'$. From following the proof na\"ively, we get them to be around 8, but they can likely be improved.

\subsubsection{Outline of the proof and techniques}
We will start (section~\ref{sec:foobi-no-noise}) by presenting the FOOBI procedure for arbitrary $\ell$. The algorithm proceeds by considering the top eigenvectors of the matricized version of $M_{2\ell}$, and tries to find {\em product vectors} (i.e. vectors of the form $x^{\ot \ell}$) in their span. This is done via writing a linear system involving the basis vectors. 

In section~\ref{sec:foobi-robust-sub}, we show that the entire procedure can be carried out even if $M_{2\ell}$ is only known up to a small error. The technical difficulty in the proof arises for the following reason: while a small perturbation in $M_{2\ell}$ does not affect the top-$R$ SVD of the (matricized) $M_{2\ell}$, if we have no guarantees on the gaps between the top $R$ eigenvalues, the eigen\emph{vectors} of the perturbed matrix can be quite different from those of $M_{2\ell}$. Now the FOOBI procedure performs non-trivial operations on these eigenvectors when setting up the linear system we mentioned in the previous paragraph. Showing that the solutions are close despite the systems being different is thus a technical issue we need to overcome.

\subsubsection{Warm-up: the case of $\err = 0$}\label{sec:foobi-no-noise}

Let us start by describing the algorithm in the zero error case. This case illustrates the main ideas behind the algorithm and generalizes the FOOBI procedure to arbitrary $\ell$.

The algorithm starts by computing the SVD of the matricized $M_{2\ell}$ (i.e., $UU^T$). Thus we obtain matrices $E$ and $\Lambda$ such that $UU^T = E \Lambda E^T$. Let $H$ denote the matrix $E \Lambda^{1/2}$.  Then we have $HH^T = UU^T$, and thus there exists an orthogonal $R \times R$ matrix $Q$ such that $U = HQ$.  Thus, finding $U$ now reduces to finding the orthogonal matrix $Q$. 

This is done using in a clever manner using the rank-1 detecting device $\Phi$. Intuitively, if we wish to find one of the columns of $U$, we may hope to find a linear combination $\sum_j \alpha_j H_j$ of the $\{H_j\}$ such that $\Phi (\sum_j \alpha_j H_j, \sum_j \alpha_j H_j) = 0$.  Each column of $Q$ would provide a candidate solution $\alpha$.  However, this is a quadratic system of equations in $\alpha_i$, and it is not clear how to solve the system directly. 

The main idea in~\cite{Cardoso} is to find an alternate way of computing $Q$. The first observation is that $\Phi$ is bi-linear (i.e., linear in its arguments $X, Y$). Thus, we have $ \Phi(\sum_j \alpha_j H_j, \sum_j \alpha_j H_j) = \sum_{i,j \in [R]} \alpha_i \alpha_j \Phi (H_i, H_j)$.  Now, consider the {\em linear} system of equations
\begin{equation}\label{eq:foobi-system}
\sum_{i,j \in [R]} W_{ij} \Phi(H_i, H_j) = 0.
\end{equation}

This is a system of $n^{2\ell}$ equations in $R^2$ variables.  The reasoning above shows that for every column $Q_i$ of $Q$, we have that $W = Q_i Q_i^T$ is a solution to \eqref{eq:foobi-system}.  Because of linearity, this means that for any diagonal matrix $\bD$, $Q \bD Q^T$ is a solution to the linear system as well. The main observation of~\cite{Cardoso} is now that any symmetric solution $W$ (i.e. one that satisfies $W_{ij} = W_{ji}$) is of this form! Thus, the matrix $Q$ can be computed by simply finding a ``typical'' symmetric solution $W$ and computing its eigen-decomposition. Let us now formalize the above.

\begin{lemma}\cite{Cardoso}\label{lem:lathauwer}
The space of symmetric solutions to the system of equations~\eqref{eq:foobi-system} has dimension precisely $R$, and any solution is of the form $W = Q \bD Q^T$, for some diagonal matrix $\bD$.
\end{lemma}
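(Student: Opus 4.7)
My plan is to prove both containments: every $W$ of the form $Q\bD Q^T$ for diagonal $\bD$ is a symmetric solution of~\eqref{eq:foobi-system} (giving dimension at least $R$), and conversely every symmetric solution has this form (giving dimension at most $R$).

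For the forward direction, I would rely on two elementary observations about $\Phi$: (i) it is bilinear in its two arguments, and (ii) $\Phi(u^{\otimes \ell}, u^{\otimes \ell}) = 0$ for every $u \in \R^n$, which is immediate from the definition~\eqref{eq:def-psi} since swapping $i_\ell$ and $j_\ell$ when $X = Y = u^{\otimes \ell}$ leaves the expression unchanged. Combined with the factorization $U = HQ$---which gives $U_k = \sum_j Q_{jk} H_j = A_k^{\otimes \ell}$ (flattened)---bilinearity yields
\begin{align*}
\sum_{i,j} (Q\bD Q^T)_{ij}\, \Phi(H_i, H_j) = \sum_k \bD_{kk}\, \Phi(U_k, U_k) = 0.
\end{align*}
Linear independence of these $R$ solutions (one per diagonal entry) is clear, since $\bD \mapsto Q\bD Q^T$ is injective as $Q$ is orthogonal.

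For the converse, given an arbitrary symmetric solution $W$, I would set $M := Q^T W Q$, so that $W = Q M Q^T$ with $M$ symmetric. Substituting and applying bilinearity as above converts $\sum_{i,j} W_{ij}\, \Phi(H_i, H_j) = 0$ into $\sum_{k,l} M_{kl}\, \Phi(U_k, U_l) = 0$. Since $\Phi$ is visibly symmetric in its arguments (from $\Phi(X,Y) = \Psi(X,Y) + \Psi(Y,X)$) and vanishes on the diagonal $k=l$, this collapses to $\sum_{k<l} 2 M_{kl}\, \Phi(U_k, U_l) = 0$. The vectors $\{\Phi(U_k, U_l) : k < l\}$ are precisely the columns of $M_\Phi$, which has full column rank by hypothesis (its condition number $\kappa_M$ is finite). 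Hence $M_{kl} = 0$ for all $k < l$, so $M$ is diagonal, completing the proof.

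The only nontrivial input is the full column rank of $M_\Phi$, which is built into the section's hypotheses and will later be established in the smoothed analysis setting via Theorem~\ref{thm:indepmonomials}. Everything else reduces to bilinearity, the symmetry $\Phi(X,Y) = \Phi(Y,X)$, and the rank-$1$ vanishing property $\Phi(u^{\otimes \ell}, u^{\otimes \ell}) = 0$, so I do not anticipate any real technical obstacle here---the lemma is essentially a structural consequence of these three facts together with the change of basis $U = HQ$.
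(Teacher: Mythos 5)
Your proof is correct and follows essentially the same route as the paper's: bilinearity of $\Phi$ plus the change of basis $U = HQ$ converts the system into $\sum_{k,l} M_{kl}\,\Phi(U_k,U_l)=0$ with $M = Q^T W Q$ (the paper writes these coefficients as $\iprod{W, Q_k Q_l^T}$, which is the same thing), and then linear independence of the columns of $M_\Phi$ (i.e.\ $\kappa_M < \infty$) together with $\Phi(U_k,U_k)=0$ forces $M$ to be diagonal. The forward inclusion and the dimension count are also handled identically, so there is nothing to add.
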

\begin{proof}
Consider any symmetric solution $W$.  Because of bi-linearity, using the fact that $H Q = U$, or $H = UQ^T$, we have that $H_i = \sum_{s} U_s (Q^T)_{si} = \sum_s U_s Q_{is}$.  Thus for any $i,j$,
\[ \Phi(H_i, H_j) = \sum_{s,t} Q_{is} Q_{jt} \cdot \Phi(U_s, U_t). \]

Thus, 
\begin{equation}\label{eq:foobi-system-zero}
\sum_{i,j} W_{ij} \Phi(H_i, H_j) = \sum_{s, t} \Phi(U_s, U_t) \cdot \left( \sum_{i,j} W_{ij} Q_{is} Q_{jt} \right) = \sum_{s,t} \Phi(U_s, U_t) \iprod{W, Q_s Q_t^T} .
\end{equation}

Since $\kappa_M < \infty$, we have that $\{\Phi(U_s, U_t) : s < t\}$ is linearly independent. Now, since $\Phi(U_s, U_t) = \Phi(U_t, U_s)$, and since $\Phi(U_s, U_t) \ne 0$ for all $s \ne t$ (the latter is a simple computation, using the fact that $A_s \ne A_t$), we must have that
\[ \text{ for all } s \ne t,~~ \iprod{W, Q_s Q_t^T} = 0. \]

Now, since $Q$ is an orthogonal matrix, we have that $\{ Q_s Q_t^T \}_{s, t \in [R]}$ forms an orthonormal basis for all $R \times R$ matrices.  The above equality thus means that $W$ lies only in the span of $\{ Q_s Q_s^T \}_{s \in [R]}$. This implies that $W = Q \bD Q^T$, for some diagonal matrix $\bD$.

Plugging back into~\eqref{eq:foobi-system-zero}, we see that any $W$ of this form satisfies the equation. As the $Q_s Q_s^T$ are all orthogonal, we have found a solution space of dimension precisely $R$.
\end{proof}

To handle the robust case, we also need a slight extension of the lemma above.  Let $H_{\Phi}$ denote a matrix that has $R(R+1)/2$ columns, described as follows. The columns correspond to pairs $i,j \in [R]$, for $i\le j$.  For $i = j$, the corresponding column is $\Phi(H_i, H_i)$ and for $i< j$, the corresponding column is $\sqrt{2} \cdot \Phi(H_i, H_j)$. We note that the null space of $H_{\Phi}$ can be mapped in a one-one manner to symmetric $R\times R$ matrices $W$.  For any $z = (z_{ij})_{i \le j}$, define the symmetric $R \times R$ matrix $\psi(z)$ to have $\psi(z)_{ii} = z_{ii}$ and $\psi(z)_{ij} = \psi(z)_{ji} = \frac{z_{ij}}{\sqrt{2}}$.  The point of this definition is that $\iprod{z, z'} = \iprod{ \psi(z), \psi(z') }$.  Note that $\psi^{-1}$ is well-defined, and that it takes symmetric matrices to $R(R+1)/2$-dimensional vectors (and preserves dot-products).

Further, we have 
\begin{align}
H_{\Phi} z &= \sum_{i} \Phi(H_i, H_i) z_{ii} + \sum_{i < j} \sqrt{2} \cdot \Phi(H_i, H_j) z_{ij} \notag \\
&= \sum_{i} \Phi(H_i, H_i) z_{ii} + \sum_{i<j} 2 \Phi(H_i, H_j) \psi(z)_{ij} \notag\\  
&= \sum_{i,j} \Phi (H_i, H_j) \psi(z)_{ij}.\label{eq:psi-main}
\end{align}

Using this correspondence, Lemma~\ref{lem:lathauwer} implies that $H_{\Phi}$ has a null space of dimension precisely $R$ (corresponding to the span of $\psi^{-1} (Q_s Q_s^T)$, for $s \in [R]$). We now claim something slightly stronger.

\begin{lemma}\label{lem:key-sigma-min}
Let $\lambda$ denote the $(R+1)$th smallest singular value of $H_\Phi$. We have that $\lambda \ge \sigma_{\min} (M_{\Phi})$.  Recall that $M_{\Phi}$ was defined to be the matrix with columns $\Phi (A_i^{\otimes \ell}, A_j^{\otimes \ell})$, for $i < j$.
\end{lemma}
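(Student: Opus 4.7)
The plan is to use the characterization of the null space of $H_\Phi$ from Lemma~\ref{lem:lathauwer} together with a change of variables that converts $H_\Phi z$ into $M_\Phi y$ for an appropriate $y$, and then apply the Courant--Fischer min-max characterization of singular values.

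First I would use Lemma~\ref{lem:lathauwer} via the correspondence $z \leftrightarrow \psi(z)$ to show that the null space $N$ of $H_\Phi$ is precisely $\{\psi^{-1}(Q\mathbf{D}Q^T) : \mathbf{D} \text{ diagonal}\}$, which is $R$-dimensional. Since $\psi$ is an isometry (as noted in the discussion preceding (\ref{eq:psi-main})), and conjugation by the orthogonal $Q$ preserves the Frobenius norm, the map $z \mapsto W' := Q^T \psi(z) Q$ is a linear isometry from $\mathbb{R}^{R(R+1)/2}$ onto the space of symmetric $R \times R$ matrices. Under this map, $N$ corresponds exactly to diagonal symmetric matrices; hence $N^\perp$ corresponds to symmetric matrices $W'$ with zero diagonal.

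Next I would rewrite $H_\Phi z$ in terms of $M_\Phi$. Using $H = UQ^T$, so that $H_i = \sum_s U_s Q_{is}$, bilinearity of $\Phi$ gives $\Phi(H_i, H_j) = \sum_{s,t} Q_{is} Q_{jt} \Phi(U_s, U_t)$. Plugging into (\ref{eq:psi-main}),
\begin{equation*}
H_\Phi z \;=\; \sum_{i,j} \Phi(H_i, H_j)\,\psi(z)_{ij} \;=\; \sum_{s,t} \Phi(U_s, U_t)\,W'_{st}.
\end{equation*}
Since $\Phi(U_s, U_s) = 0$ (as $U_s = A_s^{\otimes \ell}$ is a symmetric product tensor) and $\Phi(U_s, U_t) = \Phi(U_t, U_s)$, this collapses to $H_\Phi z = \sum_{s<t} 2 W'_{st}\,\Phi(U_s, U_t) = M_\Phi y$, where $y$ is the $\binom{R}{2}$-dimensional vector with coordinates $y_{(s,t)} = 2 W'_{st}$ for $s < t$.

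Finally I would combine these two observations. For any $z \in N^\perp$, $W'$ has zero diagonal, so
\begin{equation*}
\|y\|^2 \;=\; 4\sum_{s<t}(W'_{st})^2 \;=\; 2\|W'\|_F^2 \;=\; 2\|\psi(z)\|_F^2 \;=\; 2\|z\|^2.
\end{equation*}
Consequently $\|H_\Phi z\| = \|M_\Phi y\| \geq \sigma_{\min}(M_\Phi)\,\|y\| = \sqrt{2}\,\sigma_{\min}(M_\Phi)\,\|z\|$. Taking $V = N^\perp$, which has dimension $R(R+1)/2 - R = \binom{R}{2}$, in the Courant--Fischer characterization
\begin{equation*}
\lambda \;=\; \max_{\dim V = \binom{R}{2}} \;\min_{0 \neq z \in V}\; \frac{\|H_\Phi z\|}{\|z\|}
\end{equation*}
yields $\lambda \geq \sqrt{2}\,\sigma_{\min}(M_\Phi) \geq \sigma_{\min}(M_\Phi)$. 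The only mildly delicate step is making sure the null-space/orthogonal-complement bookkeeping (keeping track of diagonal versus off-diagonal components, and the factors of $\sqrt{2}$ arising from $\psi$ and from the pairing $s<t$) is consistent; everything else is a direct computation.
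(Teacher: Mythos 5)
Your proof is correct and follows essentially the same route as the paper's: your change of variables $W' = Q^T\psi(z)Q$ is exactly the paper's expansion $\psi(z)=\sum_{s\ne t}\alpha_{st}Q_sQ_t^T$ (with $W'_{st}=\alpha_{st}$), and both arguments restrict to the orthogonal complement of the $R$-dimensional null space, use the isometry of $\psi$ and the bilinearity of $\Phi$ to reduce to $M_\Phi$, and obtain the same bound $\lambda\ge\sqrt{2}\,\sigma_{\min}(M_\Phi)$. The bookkeeping of the $\sqrt{2}$ factors is consistent with the paper's remark that the conclusion holds ``with a slack of a factor $\sqrt{2}$.''
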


\begin{proof}
Consider any $z$ orthogonal to span$\{ \psi^{-1} (Q_s Q_s^T) : s \in [R] \}$.  Then, $\psi(z)$ is orthogonal to $Q_s Q_s^T$ for all $s$, as $\psi$ preserves dot-products. Thus, using our earlier observation that $\{Q_s Q_t^T\}$ forms an orthonormal basis for all $R\times R$ matrices, we can write
\[ \psi(z) = \sum_{s \ne t} \alpha_{st} Q_s Q_t^T. \]

Since $\psi(z)$ is symmetric, we also have $\alpha_{st} = \alpha_{ts}$.  Now, using the expansion~\eqref{eq:foobi-system-zero} with $W_{ij} = \psi(z)_{ij}$, we have 
\[ \sum_{i,j} \psi(z)_{ij} \Phi(H_i, H_j) = \sum_{s, t} \alpha_{st} \Phi (U_s, U_t) = 2\sum_{s < t} \alpha_{st} \Phi (U_s, U_t). \]

Combining this with \eqref{eq:psi-main}, and the definition of the smallest singular value, we obtain
\[ \norm{ H_{\Phi} z }_F^2 \ge 4 \left( \sum_{s < t} \alpha_{st}^2 \right) \sigma_{\min} (M_{\Phi})^2. \]

Finally, since $\norm{z}^2 = \norm{\psi(z)}_F^2 = 2 \sum_{s < t} \alpha_{st}^2$, the desired conclusion follows (indeed with a slack of a factor $\sqrt{2}$).
\end{proof}

The following theorem then gives the algorithm to recover $Q$, in the case $\err = 0$.
\begin{theorem}\label{thm:foobi-algo-noerror}
Let $S$ be the subspace (of $\R^{R \times R}$) of all symmetric solutions to the system of equations $\sum_{ij} W_{ij} \Phi (H_i, H_j) = 0$. Let $Z$ be a uniformly random Gaussian vector in this subspace of unit variance in each direction. Then with probability at least $9/10$, we have that
\[ Z = \sum_{i} \alpha_i Q_i Q_i^T, \text{ where } \min_{i \ne j} |\alpha_i - \alpha_j| \ge \frac{1}{20 R^2}.\]
Thus the SVD of $Z$ efficiently recovers the $Q_i$, with probability $\ge 9/10$.
\end{theorem}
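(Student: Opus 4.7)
\medskip

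\noindent\textbf{Proof proposal for Theorem~\ref{thm:foobi-algo-noerror}.}
The plan is to leverage Lemma~\ref{lem:lathauwer} to pin down the distribution of $Z$ explicitly, then apply a simple Gaussian anti-concentration argument to the coefficients. By Lemma~\ref{lem:lathauwer}, every element of $S$ has the form $W = \sum_{i=1}^R \alpha_i Q_i Q_i^T$ for some $\alpha \in \R^R$, and $\dim S = R$. Since $Q$ is an orthogonal $R \times R$ matrix, the matrices $\{Q_i Q_i^T\}_{i=1}^R$ satisfy $\langle Q_i Q_i^T, Q_j Q_j^T \rangle_F = (Q_i^T Q_j)^2 = \delta_{ij}$, so they form an orthonormal basis of $S$ with respect to the Frobenius inner product. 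Consequently, a standard Gaussian $Z$ on $S$ (unit variance in each direction) is exactly $Z = \sum_{i=1}^R \alpha_i Q_i Q_i^T$ where $\alpha_1, \dots, \alpha_R$ are i.i.d.\ $N(0,1)$.

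Next I would control the gap between the coefficients. For any fixed pair $i \ne j$, the difference $\alpha_i - \alpha_j$ is distributed as $N(0,2)$, whose density is bounded above by $1/(2\sqrt{\pi})$. Therefore
\[
\Pr\bigl[\,|\alpha_i - \alpha_j| < \eta\,\bigr] \;\le\; \frac{\eta}{\sqrt{\pi}}.
\]
Setting $\eta = 1/(20 R^2)$ and taking a union bound over the at most $\binom{R}{2} \le R^2/2$ unordered pairs gives
\[
\Pr\Bigl[\,\min_{i \ne j} |\alpha_i - \alpha_j| < \tfrac{1}{20 R^2}\,\Bigr] \;\le\; \frac{R^2}{2} \cdot \frac{1}{20 R^2 \sqrt{\pi}} \;\le\; \tfrac{1}{10}.
\]
This establishes the quantitative gap claimed in the theorem with probability at least $9/10$.

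Finally, to recover the $Q_i$, observe that on the success event the symmetric matrix $Z = Q\,\diag(\alpha)\,Q^T$ has $R$ distinct eigenvalues $\alpha_1, \dots, \alpha_R$ separated by at least $1/(20R^2)$. Its eigendecomposition therefore uniquely determines the columns $Q_i$ up to signs (and can be computed in polynomial time by any standard symmetric eigensolver). Since the ``SVD'' of a symmetric matrix is just its eigendecomposition followed by absorbing signs into the singular vectors, the $Q_i$ are recovered as promised.

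I do not expect a real obstacle here; the only subtlety is to be careful that the natural basis $\{Q_i Q_i^T\}$ of $S$ is \emph{orthonormal} so that the ``uniform Gaussian on $S$'' really does produce i.i.d.\ coefficients (if one instead picked a non-orthonormal basis of $S$, the $\alpha_i$ would be correlated and the density bound above would need a matching covariance correction). Everything else is a one-dimensional Gaussian anti-concentration estimate combined with a union bound.
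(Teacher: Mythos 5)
Your proposal is correct and follows essentially the same route as the paper's proof: identify $S$ as the span of the orthonormal family $\{Q_iQ_i^T\}$ via Lemma~\ref{lem:lathauwer}, deduce that the coefficients $\alpha_i$ are i.i.d.\ $N(0,1)$, and apply one-dimensional Gaussian anti-concentration to $\alpha_i-\alpha_j\sim N(0,2)$ followed by a union bound over pairs. Your explicit remark that $\{Q_iQ_i^T\}$ is orthonormal in the Frobenius inner product (so the coefficients really are independent) is a detail the paper states only implicitly, and your density bound is marginally sharper, but neither changes the argument.
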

\begin{proof}
From the lemmas above, we have that the space $S$ is precisely the span of $Q_s Q_s^T$, for $s \in [R]$.  These are all orthogonal vectors.  Thus a random Gaussian vector in this space with unit variance in each direction is of the form $\sum_i \alpha_i Q_i Q_i^T$, where the $\alpha_i$ are independent and distributed as the univariate Gaussian $\mathcal{N}(0, 1)$.  

Now, for any $i, j$, we have that $\alpha_i - \alpha_j$ is distributed as $\mathcal{N}(0, 2)$, and thus
\[ \Pr \left[ |\alpha_i - \alpha_j| \le \frac{1}{20 R^2} \right] \le \frac{1}{20 R^2}. \]

Taking a union bound over all pairs $i, j$ now gives the result. 
\end{proof}

This completes the algorithm for the case $\err = 0$.  Let us now see how to extend this analysis to the case in which $\err \ne 0$.

\subsubsection{A robust analysis}\label{sec:foobi-robust-sub}
We will now prove an approximate recovery bound by following the above analysis, when $\err$ is non-zero (but still small enough, as in the statement of Theorem~\ref{thm:foobi-robust}). As is common in such analyses, we will use the classic Davis-Kahan Sin-$\theta$ theorem. We start by recalling the theorem. To do so, we need some notation.  

Suppose $V_1$ and $V_2$ are two $n \times d$ matrices with orthonormal columns.  Then the matrix of {\em principal angles} between the column spans of $V_1$ and $V_2$ is denoted by $\Theta(V_1, V_2)$, and is defined to be the diagonal matrix whose entries are $\arccos (\lambda_i)$, where $\lambda_i$ are the singular values of $V_1^T V_2$. 

\begin{theorem}[Sin-$\theta$ theorem,~\cite{Davis1970}]\label{thm:davis-kahan}
Let $\Sigma$ and $\Sigma' \in \R^{n \times n}$ be symmetric, with eigenvalues $\lambda_1 \ge \lambda_2 \ge \dots \lambda_n$ and $\lambda_1' \ge \lambda_2' \ge \dots \ge \lambda_n'$.  Let $1 \le r \le s \le n$, and let $d = s-r+1$.  Let $V$ be a matrix with columns being the eigenvectors corresponding to $\lambda_r \dots \lambda_s$ in $\Sigma$, and suppose $V'$ is similarly defined. Let 
\[ \delta := \inf \{ |\lambda' -\lambda| : \lambda \in [\lambda_s, \lambda_r], \lambda' \in (-\infty, \lambda_{s+1}'] \cup [\lambda_{r-1}', \infty) \}, \]
which we assume is $>0$.  Then we have
\[ \norm{\sin \Theta(V, V')}_F \le \frac{\norm{\Sigma - \Sigma'}_F}{\delta}.\]
Furthermore, there exists an orthogonal matrix $O'$ such that
\begin{equation}\label{eq:davis-kahan-basis}
\norm{V - V'O'}_F \le \frac{\sqrt{2} \norm{\Sigma - \Sigma'}_F}{\delta}.
\end{equation}
\end{theorem}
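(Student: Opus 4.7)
The plan is to follow the classical Sylvester-equation argument. Let $V'_\perp$ denote the $n\times (n-d)$ matrix whose columns form an orthonormal basis of eigenvectors of $\Sigma'$ for the eigenvalues $\{\lambda'_j : j \notin \{r,\ldots,s\}\}$, and let $\Lambda_V := \mathrm{diag}(\lambda_r,\ldots,\lambda_s)$ and $\Lambda'_\perp$ be the diagonal matrix of the corresponding "outside'' eigenvalues of $\Sigma'$. Since $V$ and $V'_\perp$ carry the eigen-identities $V^T \Sigma = \Lambda_V V^T$ and $\Sigma' V'_\perp = V'_\perp \Lambda'_\perp$, subtracting the two expressions for $V^T \Sigma V'_\perp$ and $V^T \Sigma' V'_\perp$ yields the Sylvester identity
\begin{equation*}
\Lambda_V X - X \Lambda'_\perp \;=\; V^T (\Sigma - \Sigma') V'_\perp, \qquad X := V^T V'_\perp.
\end{equation*}

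Next I would use the spectral gap to invert this identity. Since $\Lambda_V$ and $\Lambda'_\perp$ are diagonal with spectra separated by at least $\delta$ (this is exactly the hypothesis of the theorem, written in terms of eigenvalues of $\Sigma$ inside the window and eigenvalues of $\Sigma'$ outside), solving entrywise in the common eigenbasis gives $X_{ij} = [V^T(\Sigma-\Sigma')V'_\perp]_{ij}/((\Lambda_V)_{ii} - (\Lambda'_\perp)_{jj})$, hence
\begin{equation*}
\|X\|_F \;\le\; \tfrac{1}{\delta}\,\|V^T(\Sigma-\Sigma')V'_\perp\|_F \;\le\; \tfrac{1}{\delta}\,\|\Sigma - \Sigma'\|_F,
\end{equation*}
using that $V$ and $V'_\perp$ have orthonormal columns. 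The first claim then follows from the standard identification $\|\sin\Theta(V,V')\|_F = \|V^T V'_\perp\|_F$, since the squared singular values of $V^T V'_\perp$ are the $\sin^2$ of the principal angles.

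For the basis bound, I would take an SVD $V^T V' = W_1 \mathrm{diag}(\cos\theta_i) W_2^T$ and set $O' := W_2 W_1^T$. A direct computation gives
\begin{equation*}
\|V - V'O'\|_F^2 \;=\; 2d - 2\,\mathrm{tr}(V^T V' O') \;=\; 2\sum_i (1-\cos\theta_i) \;\le\; 2\sum_i \sin^2\theta_i \;=\; 2\,\|\sin\Theta(V,V')\|_F^2,
\end{equation*}
using the elementary inequality $1-\cos\theta = 2\sin^2(\theta/2) \le \sin^2\theta$ valid for $\theta\in[0,\pi/2]$. Combining with the first bound yields \eqref{eq:davis-kahan-basis}.

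The main obstacle is not technical but notational: one must match up the asymmetric form of $\delta$ correctly, namely the gap between eigenvalues of $\Sigma$ in $[\lambda_s,\lambda_r]$ and eigenvalues of $\Sigma'$ \emph{outside} $[\lambda'_s,\lambda'_r]$. The Sylvester set-up above is tailored precisely for this: $\Lambda_V$ sees only eigenvalues of $\Sigma$ in the window while $\Lambda'_\perp$ sees only eigenvalues of $\Sigma'$ outside the window, so the separation of their spectra is exactly the hypothesized $\delta$. A naive set-up that put both $\Lambda$'s from the same matrix would produce a weaker, symmetric bound and miss the stated form.
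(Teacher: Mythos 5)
The paper does not prove this statement; it is quoted verbatim from the literature (Davis--Kahan, with the precise form taken from Yu--Wang--Samworth), so there is no in-paper proof to compare against. Your argument is correct and is the standard one behind that cited statement: the Sylvester identity $\Lambda_V X - X\Lambda'_\perp = V^T(\Sigma-\Sigma')V'_\perp$ with $X=V^TV'_\perp$, inverted entrywise using the gap $\delta$, together with the identity $\norm{\sin\Theta(V,V')}_F=\norm{V^TV'_\perp}_F$, and the orthogonal Procrustes step $1-\cos\theta\le\sin^2\theta$ for the basis bound --- all steps check out.
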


We note that the precise statement above is from~\cite{Yu2015}. Our proof will follow the outline of the $\err = 0$ case.  The first step is to symmetrize the matricized version of $T$, so that we can take the SVD.  We have the following simple observation.

\begin{lemma}
Let $A \in \R^{n \times n}$, and define $A' = (A+A^T)/2$.  Let $B \in \R^n$ be symmetric. Then $\norm{A' - B}_F \le \norm{A - B}_F$.
\end{lemma}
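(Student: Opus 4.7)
The plan is to decompose $A$ into its symmetric and antisymmetric components and use the fact that these components are orthogonal with respect to the Frobenius inner product. Write $A = A' + A''$, where $A' = (A+A^T)/2$ is the symmetric part and $A'' = (A - A^T)/2$ is the antisymmetric part. Then
\[ A - B = (A' - B) + A''. \]
Since $B$ is symmetric by assumption, $A' - B$ is symmetric, while $A''$ is antisymmetric.

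The key claim is that any symmetric matrix $S$ and any antisymmetric matrix $N$ are orthogonal in the Frobenius inner product. Indeed,
\[ \iprod{S, N}_F = \sum_{i,j} S_{ij} N_{ij} = \sum_{i,j} S_{ji} (-N_{ji}) = -\iprod{S, N}_F, \]
using $S_{ij} = S_{ji}$ and $N_{ij} = -N_{ji}$, which forces the inner product to be zero.

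Applying this to $S = A' - B$ and $N = A''$ gives the Pythagorean identity
\[ \norm{A - B}_F^2 = \norm{A' - B}_F^2 + \norm{A''}_F^2 \ge \norm{A' - B}_F^2, \]
from which the desired inequality follows by taking square roots. This is essentially the observation that $A \mapsto (A+A^T)/2$ is the orthogonal projection onto the subspace of symmetric matrices, so it is the best symmetric approximation to $A$ in Frobenius norm; in particular, it is at least as close to any symmetric $B$ as $A$ itself. There is no real obstacle here: the only thing to verify carefully is the orthogonality of symmetric and antisymmetric matrices, which is a direct index computation.
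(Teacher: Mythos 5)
Your proof is correct and matches the paper's argument in essence: the paper simply observes that $A \mapsto (A+A^T)/2$ is the projection onto the (convex) subspace of symmetric matrices and that projections do not increase distances, while you verify this projection property explicitly via the orthogonal symmetric/antisymmetric decomposition and the resulting Pythagorean identity. No gap; your version just spells out the details the paper leaves implicit.
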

\begin{proof}
The lemma follows from observing that $A'$ is the projection of $A$ onto the linear space of symmetric $n \times n$ matrices, together with the fact that projections to convex sets only reduces the distance.
\end{proof}

Let $T'$ be the symmetric version of the matricized version of $T$.  Then we have $\norm{T' - UU^T}_F \le \norm{\err}_F$.  Likewise, let $\wh{T}$ be the projection of $T'$ onto the PSD cone (we obtain $\wh{T}$ by computing the SVD and zero'ing out all the negative eigenvalues).  By the same reasoning, we  have $\norm{\wh{T} - UU^T}_F \le \norm{\err}_F$.  For convenience, in what follows, we denote $\norm{\err}_F$ by $\eta$. 

Next, we need a simple lemma that relates the error in a square root to the error in a matrix.

\begin{lemma}\label{lem:err-square-root}
Let $Z$ and $H$ be $n \times d$ matrices with $d \le n$, and suppose $\norm{ZZ^T - HH^T} \le \delta$.  Then there exists an orthogonal matrix $Q$ such that
\[ \norm{ZQ - H}_F \le (d \delta)^{1/2} + \frac{2\delta d \sigma_1 (H)}{\sigma_d (H)^2}. \]
\end{lemma}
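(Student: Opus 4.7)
The plan is to construct $Q$ via an orthogonal Procrustes argument and then bound the resulting error using a combination of Weyl's inequality and Davis--Kahan perturbation. The most direct candidate is $Q := Q_0$, where $Z^{\transpose}H = Q_0 R$ is the polar decomposition ($Q_0$ orthogonal $d \times d$, $R$ positive semidefinite). With this choice, a direct expansion gives the identity
\[
\norm{ZQ_0 - H}_F^2 \;=\; \norm{Z}_F^2 + \norm{H}_F^2 - 2\,\norm{Z^{\transpose}H}_{*},
\]
where $\norm{\cdot}_*$ denotes the nuclear norm. The proof then reduces to (i) bounding the Frobenius-norm mismatch $\norm{Z}_F^2 - \norm{H}_F^2$, and (ii) lower-bounding $\norm{Z^{\transpose}H}_*$.

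For (i): since $\norm{Z}_F^2 - \norm{H}_F^2 = \tr(ZZ^{\transpose} - HH^{\transpose})$, and both $ZZ^{\transpose}, HH^{\transpose}$ have rank at most $d$, the magnitude of this trace is at most $2d\delta$. This is the source of the $(d\delta)^{1/2}$ term on taking a square root. For (ii), the key identity is $H^{\transpose}ZZ^{\transpose}H = (H^{\transpose}H)^2 + H^{\transpose}EH$ with $E := ZZ^{\transpose} - HH^{\transpose}$, and $\norm{H^{\transpose}EH}_{op} \le \sigma_1(H)^2 \delta$. Applying Weyl's inequality to the eigenvalues of $H^{\transpose}ZZ^{\transpose}H$ gives $|\sigma_i(Z^{\transpose}H)^2 - \sigma_i(H)^4| \le \sigma_1(H)^2 \delta$ for each $i$; dividing by $\sigma_i(Z^{\transpose}H) + \sigma_i(H)^2$ (which is at least of order $\sigma_d(H)^2$ in the regime of interest) yields the per-$i$ estimate $\sigma_i(Z^{\transpose}H) \ge \sigma_i(H)^2 - \sigma_1(H)^2 \delta/\sigma_d(H)^2$, and summing over $i \in [d]$ produces $\norm{Z^{\transpose}H}_* \ge \norm{H}_F^2 - d\sigma_1(H)^2\delta/\sigma_d(H)^2$.

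Combining (i) and (ii) gives $\norm{ZQ_0 - H}_F^2 \le 2d\delta + 2d\sigma_1(H)^2\delta/\sigma_d(H)^2$, which after a square root and $\sqrt{a+b} \le \sqrt{a} + \sqrt{b}$ produces a bound of the form $(d\delta)^{1/2}$ plus a term on the order of $\sqrt{d\delta}\,\sigma_1(H)/\sigma_d(H)$. The main obstacle in matching the lemma statement exactly is that this Procrustes-only route yields only a $\sqrt\delta$ dependence in the second summand, whereas the lemma posits the stronger linear-in-$\delta$ bound $2d\delta\sigma_1(H)/\sigma_d(H)^2$ (sharper in the small-$\delta$ regime). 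To sharpen the second term, I would complement the Procrustes choice with a Davis--Kahan $\sin\Theta$ step applied to the top-$d$ eigenspaces of $ZZ^{\transpose}$ and $HH^{\transpose}$: since both are PSD of rank at most $d$ with zero $(d+1)$-st eigenvalue and Weyl gives $\sigma_d(Z)^2 \ge \sigma_d(H)^2 - \delta$, the relevant eigengap is at least $\sigma_d(H)^2 - \delta$, producing an orthogonal $d \times d$ rotation $\widetilde{O}$ with $\norm{U_Z \widetilde{O} - U_H}_F$ bounded by a constant times $\sqrt{d}\,\delta/\sigma_d(H)^2$ (the $\sqrt{d}$ coming from $\norm{E}_F \le \sqrt{2d}\,\norm{E}_{op}$). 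Setting $Q := V_Z \widetilde{O} V_H^{\transpose}$ turns $\norm{ZQ - H}_F$ into $\norm{U_Z \Sigma_Z \widetilde{O} - U_H \Sigma_H}_F$, which can be split as a subspace-alignment error (bounded by Davis--Kahan times $\sigma_1(H)$, giving the linear-in-$\delta$ contribution) plus a singular-value matching error (bounded via Hoffman--Wielandt by $\sqrt{d\delta}$). The technical difficulty, and main obstacle of the proof, is that $\widetilde{O}$ does not commute with the diagonal $\Sigma_Z$, so the decomposition has to be handled as a telescoping sum, with $\norm{AB}_F \le \norm{A}_F\norm{B}_{op}$ applied carefully on each piece.
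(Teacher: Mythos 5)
Your first route (Procrustes via the polar decomposition of $Z^TH$) is correct as far as it goes, and you correctly diagnose its limitation: the nuclear-norm lower bound only yields $\norm{ZQ_0-H}_F^2 \le 2d\delta + 2d\delta\,\sigma_1(H)^2/\sigma_d(H)^2$, hence a second summand of order $\sqrt{d\delta}\,\sigma_1(H)/\sigma_d(H)$ rather than the linear-in-$\delta$ term claimed in the lemma. (Worth noting: since the lemma is only ever invoked to produce a ``poly-bounded'' function of $\eta$, i.e.\ $\eta^c\cdot\poly(\cdot)$ for some $c>0$, your weaker $\sqrt\delta$ bound would actually suffice for every downstream use in Section~\ref{sec:foobi}; it just does not prove the lemma as stated.) The genuine gap is in your proposed repair. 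After setting $Q = V_Z\widetilde O V_H^T$ you must control $\norm{U_Z\Sigma_Z\widetilde O - U_H\Sigma_H}_F$, and the telescoping you describe produces, besides the benign terms $(U_Z\widetilde O - U_H)\Sigma_H$ and $(\Sigma_Z-\Sigma_H)\widetilde O$, the commutator $\Sigma_H\widetilde O - \widetilde O\Sigma_H$, whose $(i,j)$ entry is $(\sigma_i(H)-\sigma_j(H))\widetilde O_{ij}$. Neither Davis--Kahan (which controls only $U_Z\widetilde O\approx U_H$, not the interaction of $\widetilde O$ with the spectrum) nor Hoffman--Wielandt bounds this object; if $H$ has well-separated singular values and $\widetilde O$ mixes the corresponding directions, the commutator is order $\sigma_1(H)$, not order $\delta$. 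You flag this as ``the main obstacle'' but do not resolve it, and resolving it requires an additional idea, e.g.\ the identity $\Sigma_Z^2(U_Z^TU_H)-(U_Z^TU_H)\Sigma_H^2 = U_Z^T(ZZ^T-HH^T)U_H$ together with a comparison of $\widetilde O$ to $U_Z^TU_H$.

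The paper sidesteps the commutator entirely by working with the matrix square roots: operator monotonicity of $t\mapsto t^{1/2}$ on PSD matrices (a Powers--St{\o}rmer-type bound) gives $\norm{(ZZ^T)^{1/2}-(HH^T)^{1/2}}_F \lesssim (d\delta)^{1/2}$ directly -- this is where the first summand comes from -- and Davis--Kahan is then used only to align the eigenvector factors $N_1,N_2$ of the two square roots, which contributes the linear-in-$\delta$ term after multiplying by $\norm{H}_F\le\sqrt d\,\sigma_1(H)$. Your own decomposition can be completed the same way: write $U_Z\Sigma_Z\widetilde O = (ZZ^T)^{1/2}\,U_Z\widetilde O$ and split $U_Z\Sigma_Z\widetilde O - U_H\Sigma_H = (ZZ^T)^{1/2}(U_Z\widetilde O - U_H) + \big((ZZ^T)^{1/2}-(HH^T)^{1/2}\big)U_H$; the first piece is $\le \sigma_1(Z)\cdot\norm{U_Z\widetilde O-U_H}_F = O(\delta\,\sigma_1(H)/\sigma_d(H)^2)$ by Davis--Kahan and the second is $O((d\delta)^{1/2})$ by operator monotonicity. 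Without that square-root step (or an equivalent), the argument as written does not establish the lemma.
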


\begin{proof}
Let $ZZ^T = M_1 \Sigma_1 N_1^T$, and let $HH^T = M_2 \Sigma_2 N_2^T$, where $M_i, N_i$ are $n \times d$ matrices with orthonormal columns.  Now, the theory of operator-monotone functions acting on PSD matrices (see e.g.~\cite{Bhatia}, Theorem X.1.1) implies that
\[ \norm{M_1 \Sigma_1^{1/2} N_1^T - M_2 \Sigma_2^{1/2} N_2^T}_F \le \delta^{1/2}. \]

Now we may apply the Sin-$\theta$ theorem (with $r=1$ and $s=d$ in the statement above) to conclude that there exists an orthogonal matrix $Q_1$ such that $\norm{N_1 Q_1 - N_2}_F \le \frac{\sqrt{2} ~ \delta}{\sigma_d (H)^2}$.  Thus, writing $N_2 = N_1 Q_1 + \Delta$, the LHS above becomes
\[ \norm{M_1 \Sigma_1^{1/2} N_1^T - M_2 \Sigma_2^{1/2} Q_1^T N_1^T - M_2 \Sigma_2^{1/2} \Delta^T}_F. \] 

Now, we have $\norm{M_2 \Sigma_2^{1/2} \Delta^T}_F \le \norm{ M_2 \Sigma_2^{1/2}}_F \norm{\Delta}_F$.  The first term is simply $(\tr (\Sigma_2))^{1/2} \le d^{1/2} \sigma_1 (H)$.  Using this, we obtain
\begin{align*}
\norm{ (M_1 & \Sigma_1^{1/2} - M_2 \Sigma_2^{1/2} Q_1^T) N_1^T }_F \\
	& = \norm{M_1 \Sigma_1^{1/2} N_1^T - M_2 \Sigma_2^{1/2} N_2^T + M_2 \Sigma_2^{1/2} \Delta^T}_F \\
	&\le \delta^{1/2} + \frac{ 2\delta d^{1/2} \sigma_1 (H)}{\sigma_d (H)^2}.
\end{align*}

We can now appeal to the simple fact that for a matrix $X$, for any $N_2$ with orthonormal columns, we have $\norm{X}_F = \norm{X N_2^T N_2} \le \norm{X N_2^T}_F d^{1/2}$.  This gives us
\[ \norm{M_1 \Sigma_1^{1/2} - M_2 \Sigma_2^{1/2} Q_1^T}_F \le (d \delta)^{1/2} + \frac{2\delta d \sigma_1 (H)}{\sigma_d (H)^2}. \]

Thus, since $Z = M_1 \Sigma_1^{1/2} Q'$ for an orthogonal matrix $Q'$ and likewise for $H$, and because the product of orthogonal matrices is orthogonal, we have the desired result.
\end{proof}

In what follows, to simplify the notation, we introduce the following definition.
\begin{definition}[Poly-bounded function]
We say that a function $f$ of a parameter $\eta$ is {\em poly-bounded} if $f(\eta)$ is of the form $\eta^c \cdot \poly(n, R, \kappa_U, \kappa_{M})$, where $c > 0$ is a constant.
\end{definition}

Intuitively speaking, by choosing $\eta$ to be ``polynomially small'' in $n, R$ and the condition numbers $\kappa_U, \kappa_{M}$, we can make $f(\eta)$ arbitrarily small.  

Now, the lemma above gives the following as a corollary.
\begin{corollary}
Let $\wh{E} \wh{\Lambda} \wh{E}^T$ be the rank-$R$ SVD of $\wh{T}$, and let $UU^T = E \Lambda E^T$ be the SVD as before. Define $\wh{H} = \wh{E} \wh{\Lambda}^{1/2}$ and $H = E\Lambda^{1/2}$.  Then there exists an orthogonal matrix $P$ such that $\norm{\wh{H}P - H}_F \le f_1 (\eta)$ for some poly-bounded function $f_1$.
\end{corollary}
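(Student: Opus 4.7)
The plan is to derive this essentially as a direct consequence of Lemma~\ref{lem:err-square-root} applied to $Z = \wh{H}$ and $H$ as defined in the statement. The two preliminary facts needed are that $\wh{H}\wh{H}^T$ is close to $UU^T$ in Frobenius norm, and that $\sigma_R(H)$ and $\sigma_1(H)$ are controlled (the former below, the latter above) in terms of $\kappa_U$.

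First I would bound $\|\wh{H}\wh{H}^T - UU^T\|_F$. By construction in the paragraph preceding the corollary, $\wh{T}$ is the projection of the symmetrized matricization of $T$ onto the PSD cone, and satisfies $\|\wh{T} - UU^T\|_F \le \eta$. Since $\wh{H}\wh{H}^T = \wh{E}\wh{\Lambda}\wh{E}^T$ is the best rank-$R$ approximation to $\wh{T}$ in Frobenius norm (Eckart--Young), and since $UU^T$ itself has rank at most $R$, we get $\|\wh{H}\wh{H}^T - \wh{T}\|_F \le \|UU^T - \wh{T}\|_F \le \eta$, and hence by the triangle inequality
\[
\|\wh{H}\wh{H}^T - UU^T\|_F \le 2\eta.
\]

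Next I would apply Lemma~\ref{lem:err-square-root} with $Z = \wh{H}$, $d = R$, and $\delta = 2\eta$. This produces an orthogonal matrix $P$ such that
\[
\|\wh{H}P - H\|_F \le (2R\eta)^{1/2} + \frac{4 R \eta \,\sigma_1(H)}{\sigma_R(H)^2}.
\]
Since $HH^T = UU^T$, the singular values of $H$ coincide with those of $U$. In particular $\sigma_1(H) = \sigma_1(U)$ (which is at most a polynomial in $n, R$ because the $A_i$ are assumed to have polynomially bounded length, so $U = A^{\odot \ell}$ has polynomially bounded entries), and $\sigma_R(H) = \sigma_R(U) \ge \sigma_1(U)/\kappa_U$. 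Plugging in these bounds, both terms on the right-hand side become of the form $\eta^c \cdot \poly(n, R, \kappa_U)$ with $c \in \{1/2, 1\}$, hence poly-bounded in the sense of the definition given in the excerpt. Taking $f_1(\eta)$ to be the sum of the two terms completes the argument.

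There is no real obstacle here: the only delicate point is the use of the Eckart--Young-type argument to pass from the error bound on $\wh{T} - UU^T$ to the error bound on the rank-$R$ truncations, which is why we are allowed to lose only a factor of $2$. The rest is a direct substitution into Lemma~\ref{lem:err-square-root}, together with the trivial observation that the singular values of $H$ and $U$ agree.
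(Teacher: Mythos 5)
Your proposal is correct and follows essentially the same route as the paper: bound $\norm{\wh{H}\wh{H}^T - UU^T}_F \le 2\eta$ via Eckart--Young plus the triangle inequality, then invoke Lemma~\ref{lem:err-square-root} with $Z=\wh{H}$ and $\delta = 2\eta$. Your additional step of checking that the resulting bound is poly-bounded by relating $\sigma_1(H)$ and $\sigma_R(H)$ to $\kappa_U$ is left implicit in the paper but is a welcome clarification.
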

\begin{proof}
The desired conclusion follows from Lemma~\ref{lem:err-square-root} if we show that $\norm{ \wh{E} \wh{\Lambda} \wh{E}^T - E \Lambda E^T} \le 2\eta$.  This follows from the fact that $\norm{\wh{T} - \wh{E} \wh\Lambda \wh{E}^T} \le \eta$ (which is true because the SVD gives the closest rank-$k$ matrix to $\wh{T}$ --- and $UU^T$ is at distance at most $\eta$), together with the triangle inequality.
\end{proof}

Informally speaking, we have shown that $\wh{H} P \approx_\eta H$, for an orthogonal matrix $P$. We wish to now use our machinery from Section~\ref{sec:foobi-no-noise} to find the matrix $U$, which will then allow us to obtain the vectors in the decomposition. 

Let us define $H' = HP^T$, where $P$ is as above.  Thus we have $H = H' P$ (and thus $H' \approx_\eta \wh{H}$, informally).  Further, if $Q$ is the orthogonal matrix such that $U = HQ$ (as in Section~\ref{sec:foobi-no-noise}), we have $U = H' PQ$.

\paragraph{Outline of the remainder.}  We first sketch the rest of the argument.  The key idea is the following: suppose we run the whole analysis in Section~\ref{sec:foobi-no-noise} using the matrices $H'$ and $PQ$ instead of $H$ and $Q$, we obtain that the set of symmetric solutions to the system of equations $\sum_{i, j \in [R]} W_{ij} \Phi (H'_i, H'_j)$ is precisely the span of the matrices $(PQ)_s (PQ)_s^T$.  Thus, a random matrix in the space of symmetric solutions can be diagonalized to obtain $(PQ)_s$.  Using $U = H' (PQ)$, one can reconstruct $U$.  Now, we have access to $\wh{H}$ and not $H'$.  However, we can relate the space of symmetric {\em approximate solutions} to the perturbed system to the original one in a clean way.  Taking a random matrix in this space, and utilizing the ``gap'' in Theorem~\ref{thm:foobi-algo-noerror}, we obtain the matrix $PQ$ approximately.  This is then used to find $\wh{U}$ that approximates $U$, completing the argument.
  
  
\begin{lemma}
For any $i, j \in [R]$, we have
\[ \norm{ \Phi (H_i', H_j') - \Phi (\wh{H}_i, \wh{H}_j)} \le O \left( \norm{H_i' - \wh{H}_i} \norm{H_j'} + \norm{H_j' - \wh{H}_j} \norm{\wh{H}_i} \right).\]
\end{lemma}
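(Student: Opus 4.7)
The plan is to exploit the bilinearity of $\Phi$ (inherited from $\Psi$) together with a simple operator-type bound of the form $\norm{\Phi(X,Y)}_F \le O(\norm{X}_F \norm{Y}_F)$, and then telescope via the triangle inequality.

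First, I would observe that the definition~\eqref{eq:def-psi} makes $\Psi$ bilinear in $(X,Y)$, so $\Phi(X,Y) = \Psi(X,Y)+\Psi(Y,X)$ is bilinear as well. This lets me decompose the difference as
\[
\Phi(H_i',H_j') - \Phi(\wh{H}_i,\wh{H}_j) \;=\; \Phi(H_i'-\wh{H}_i,\, H_j') \;+\; \Phi(\wh{H}_i,\, H_j'-\wh{H}_j),
\]
by adding and subtracting the mixed term $\Phi(\wh{H}_i, H_j')$. By the triangle inequality it suffices to bound the two summands on the right.

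Second, I would establish the clean bound $\norm{\Phi(X,Y)}_F \le 2\sqrt{2}\,\norm{X}_F \norm{Y}_F$ (the precise constant is unimportant; it can be absorbed into the $O(\cdot)$). The two terms making up $\Psi(X,Y)$ at entry $(i_1,\ldots,i_\ell,j_1,\ldots,j_\ell)$ are $X_{i_1\ldots i_\ell}Y_{j_1\ldots j_\ell}$ and $X_{i_1\ldots i_{\ell-1} j_\ell}Y_{j_1\ldots j_{\ell-1} i_\ell}$. The first corresponds to the entries of the flattened outer product $X\otimes Y$, and the second is obtained from it by a coordinate permutation (swapping $i_\ell$ and $j_\ell$). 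Hence both have Frobenius norm equal to $\norm{X}_F\norm{Y}_F$, so $\norm{\Psi(X,Y)}_F \le \sqrt{2}\,\norm{X}_F\norm{Y}_F$ and consequently $\norm{\Phi(X,Y)}_F \le 2\sqrt{2}\,\norm{X}_F\norm{Y}_F$.

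Combining the two steps,
\[
\norm{\Phi(H_i',H_j') - \Phi(\wh{H}_i,\wh{H}_j)}_F \;\le\; 2\sqrt{2}\,\bigl(\norm{H_i'-\wh{H}_i}\,\norm{H_j'} + \norm{\wh{H}_i}\,\norm{H_j'-\wh{H}_j}\bigr),
\]
which is exactly the claimed bound (after collapsing the constant into $O(\cdot)$). There is no real obstacle here — the only thing to be careful about is verifying the operator bound on $\Phi$ by recognizing the second term in~\eqref{eq:def-psi} as a permutation of the outer product $X\otimes Y$; everything else is bilinearity plus a single add-and-subtract.
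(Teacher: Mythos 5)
Your proof is correct and follows essentially the same route as the paper's: add and subtract the mixed term $\Phi(\wh{H}_i, H_j')$, invoke bilinearity, and bound each resulting term by a constant times the product of Frobenius norms. You actually spell out the operator-type bound $\norm{\Phi(X,Y)}_F \le O(\norm{X}_F\norm{Y}_F)$ that the paper merely asserts (your intermediate constant $\sqrt{2}$ for $\Psi$ should be $2$ from the triangle inequality, but this is absorbed into the $O(\cdot)$ and does not affect the conclusion).
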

\begin{proof}
\begin{align*}
 \norm{ \Phi (H_i', H_j') - \Phi (\wh{H}_i, \wh{H}_j)}  \le  \norm{ \Phi (H_i', H_j') - \Phi (\wh{H}_i, H_j')} + \norm{\Phi(\wh{H}_i, H_j' ) - \Phi(\wh{H}_i, \wh{H}_j)}. 
\end{align*}
The first term can be bounded by $2 \norm{H_j'} \norm{H_i' - \wh{H}_i}$, and so also the second term is bounded by $2 \norm{\wh{H}_i} \norm{H_i' - \wh{H}_i}$, which implies the lemma.
\end{proof}

Next, as in Section~\ref{sec:foobi-no-noise}, define the $R(R+1)/2$ dimensional matrices $\wh{H}_{\Phi}$ and $H'_{\Phi}$.  Specifically, these matrices have columns corresponding to pairs $1 \le i \le j \le R$, and for $i=j$, the corresponding column of $H'_{\Phi}$ is $\Phi(H_i', H_i')$ and for $i \ne j$, the column is $\sqrt{2} \cdot \Phi(H_i', H_j')$.   A simple corollary of the lemma above is that
\begin{equation}\label{eq:hat-H-frob}
\norm{ \wh{H}_{\Phi} - H'_{\Phi}}_F \le O\left( \norm{\wh{H} - H'}_F \cdot (\norm{\wh{H}}_F + \norm{H'}_F ) \right) = f_2(\eta),
\end{equation}
for some poly-bounded function $f_2$.  This follows from the lemma and corollary above, together with an application of the Cauchy-Schwarz inequality.  Next, we show the following.

\begin{lemma}
For $1 \le r \le R$, we have $\sigma_r (\wh{H}_{\Phi}) \le f_2(\eta) $.  Also, we have $\sigma_{R+1} (\wh{H}_{\Phi}) \ge \sigma_{\min} (M_{\Phi}) - f_2 (\eta)$.
\end{lemma}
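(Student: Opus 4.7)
The plan is to reduce the statement to a Weyl-type perturbation bound on $H'_{\Phi}$, for which the noiseless analysis of Lemma~\ref{lem:lathauwer} and Lemma~\ref{lem:key-sigma-min} applies almost verbatim. First I would observe that $H'_{\Phi}$ plays exactly the role of $H_{\Phi}$ in Section~\ref{sec:foobi-no-noise}, with $H'$ in place of $H$ and $PQ$ in place of $Q$: indeed, $H = H'P$ and $U = HQ = H'(PQ)$, and $PQ$ is orthogonal since both $P$ and $Q$ are. Moreover, the matrix $M_{\Phi}$ (whose columns are $\Phi(A_i^{\ot\ell},A_j^{\ot\ell})$ for $i<j$) depends only on the original vectors $A_i$ and is unchanged by this re-parameterization.

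Running the argument of Lemma~\ref{lem:lathauwer} with $(H',PQ)$ instead of $(H,Q)$, the symmetric solution space to $\sum_{i,j} W_{ij}\Phi(H_i',H_j')=0$ is exactly the $R$-dimensional span of $\{(PQ)_s(PQ)_s^T:s\in[R]\}$. Translating via $\psi$ as in Lemma~\ref{lem:key-sigma-min}, this means that the null space of $H'_{\Phi}$ has dimension exactly $R$, and for any unit vector $z$ orthogonal to this null space one has $\|H'_{\Phi}z\|_F \ge \sigma_{\min}(M_{\Phi})$. In particular, when the singular values of $H'_{\Phi}$ are listed in increasing order, the $R$ smallest are zero and the $(R+1)$st is at least $\sigma_{\min}(M_{\Phi})$.

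Next I would invoke Weyl's inequality for singular values, which states that for any matrices $X,Y$ of the same shape and every index $r$,
\begin{equation*}
\bigl| \sigma_r(X) - \sigma_r(Y) \bigr| \;\le\; \|X-Y\|_{\mathrm{op}} \;\le\; \|X-Y\|_F.
\end{equation*}
Applying this with $X=\wh{H}_{\Phi}$ and $Y=H'_{\Phi}$ and using~\eqref{eq:hat-H-frob}, which already establishes $\|\wh{H}_{\Phi}-H'_{\Phi}\|_F \le f_2(\eta)$, the $R$ smallest singular values of $\wh{H}_{\Phi}$ are each bounded by $0 + f_2(\eta) = f_2(\eta)$, while the $(R+1)$st smallest is at least $\sigma_{\min}(M_{\Phi}) - f_2(\eta)$. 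These are precisely the two assertions in the lemma.

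The only subtlety worth double-checking is that no non-trivial obstacle hides in the transfer from $H_{\Phi}$ to $H'_{\Phi}$, and this is simply because $P$ is orthogonal, so the arguments of Section~\ref{sec:foobi-no-noise} go through unchanged after renaming $Q \mapsto PQ$. The heavy lifting is already done by~\eqref{eq:hat-H-frob}, so beyond noting this invariance under $P$, the proof is a one-line perturbation argument via Weyl's inequality.
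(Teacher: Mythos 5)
Your proposal is correct and is essentially the paper's own proof: the paper likewise applies Lemma~\ref{lem:key-sigma-min} to $H'$ (implicitly using that $U = H'(PQ)$ with $PQ$ orthogonal, which you spell out explicitly) to get $R$ zero singular values and an $(R+1)$st smallest singular value at least $\sigma_{\min}(M_\Phi)$, and then concludes via Weyl's inequality together with the bound $\norm{\wh{H}_\Phi - H'_\Phi}_F \le f_2(\eta)$ from~\eqref{eq:hat-H-frob}. No gaps.
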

\begin{proof}
The main idea, as mentioned in the outline, is to apply Lemma~\ref{lem:key-sigma-min} to $H'$. If $\lambda'$ denotes the $(R+1)$th smallest singular value of $H_\Phi'$, then this lemma implies that $H'_{\Phi}$ has $R$ zero singular values and $\lambda' \ge \sigma_{\min} (M_{\Phi})$.  Weyl's inequality\footnote{Recall that the inequality bounds the change in eigenvalues due to a perturbation of a matrix by the spectral norm (and hence also the Frobenius norm) of the perturbation.} now immediately implies the lemma.
\end{proof}

From now on, suppose that $\eta$ is chosen small enough that $f_2(\eta) < \frac{\sigma_{\min} (M_{\Phi})}{2}$.  Next, let us define the spaces $S'$ and $\wh{S}$ as in Theorem~\ref{thm:foobi-algo-noerror}: let $S'$ be the space of all symmetric solutions to the linear system
\[ \sum_{i,j} W_{ij} \Phi (H_i', H_j') =0.\]
Likewise, let $\wh{S}$ be the space of symmetric matrices $\psi(z)$ (see Section~\ref{sec:foobi-no-noise} for the definition of $\psi$), where $z$ is in the span of the $R$ smallest singular values of $\wh{H}_{\Phi}$.  The analog of Theorem~\ref{thm:foobi-algo-noerror} is the following.

\begin{theorem}\label{thm:foobi-algo-error}
Let $Z$ be a uniformly random Gaussian vector in $\wh{S}$, and suppose that $Z = G \Sigma G^T$ is the SVD of $Z$.  Then with probability $\ge 9/10$, we have $\norm{G - PQ}_F \le f_3 (\eta)$, for some poly-bounded function $f_3$.
\end{theorem}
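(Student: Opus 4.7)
The plan is to apply the Davis--Kahan Sin-$\theta$ theorem (Theorem~\ref{thm:davis-kahan}) twice: first at the level of subspaces to transfer what we know about the ideal null space $S'$ to the approximate one $\wh{S}$, and then at the level of individual eigenvectors to deduce closeness of $G$ to $PQ$ from closeness of $Z$ to a ``clean'' Gaussian $Z' \in S'$. Throughout, the argument exploits (i) the perturbation bound $\norm{\wh{H}_\Phi - H'_\Phi}_F \le f_2(\eta)$ from~\eqref{eq:hat-H-frob}, and (ii) the robust spectral gap of order $\sigma_{\min}(M_\Phi)$ between the $R$ ``almost-null'' singular values of $\wh{H}_\Phi$ and the rest, established in the lemma preceding the statement.

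Pick matrices $V', \wh{V}$ whose columns form orthonormal bases (in the $\psi^{-1}$-identified $\binom{R+1}{2}$-dimensional space) of $S'$ and $\wh{S}$. Applying Theorem~\ref{thm:davis-kahan} to the symmetric matrices $(H'_\Phi)^T H'_\Phi$ and $\wh{H}_\Phi^T \wh{H}_\Phi$, with $r,s$ ranging over the bottom $R$ eigenvalues and perturbation bounded in Frobenius norm by $(\norm{\wh{H}_\Phi}_{\mathrm{op}} + \norm{H'_\Phi}_{\mathrm{op}}) f_2(\eta)$, yields an orthogonal matrix $O$ such that $\norm{\wh{V} - V' O}_F \le f_4(\eta)$ for a poly-bounded $f_4$. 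Writing $Z = \wh{V}\alpha$ with $\alpha \sim \mathcal{N}(0, I_R)$ and setting $\beta := O\alpha$ (also a standard Gaussian) gives the decomposition $Z = Z' + E$ in which $Z' := V'\beta$ is a uniformly random Gaussian in $S'$ and $\norm{E}_F \le f_4(\eta)\,\norm{\alpha}_2$. Since $\norm{\alpha}_2 = O(\sqrt{R})$ with probability $\ge 29/30$, the perturbation satisfies $\norm{Z - Z'}_F \le f_5(\eta)$ for a poly-bounded $f_5$. Moreover, by Lemma~\ref{lem:lathauwer} applied with $H$ replaced by $H'$ (using $U = H'(PQ)$), the matrix $Z'$ has the orthonormal spectral decomposition $Z' = \sum_s \beta_s (PQ)_s (PQ)_s^T$ with i.i.d.\ $\beta_s \sim \mathcal{N}(0,1)$. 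The anti-concentration union bound from Theorem~\ref{thm:foobi-algo-noerror} gives $\min_{i \ne j} |\beta_i - \beta_j| \ge 1/(20R^2)$ with probability $\ge 29/30$, so on this event $Z'$ has simple spectrum with minimum eigengap $1/(20R^2)$.

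Conditioning on the intersection of the good events (a final union bound absorbs the failure probabilities into $1/10$), a per-eigenvector application of Theorem~\ref{thm:davis-kahan} to the symmetric matrices $Z'$ and $Z$ is now available: by Weyl's inequality, the separation between the $i$th eigenvalue of $Z'$ and every other eigenvalue of $Z$ is at least $1/(20R^2) - f_5(\eta) \ge 1/(40R^2)$ once $f_5(\eta) < 1/(40R^2)$. Hence~\eqref{eq:davis-kahan-basis} yields, for each $i$, a sign $\epsilon_i \in \{\pm 1\}$ and a permutation $\pi$ so that $\norm{G_i - \epsilon_i (PQ)_{\pi(i)}} \le \sqrt{2}\,f_5(\eta)/(1/(40R^2))$. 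Summing the squared errors over $i$ gives $\norm{G - (PQ) D\Pi}_F \le f_3(\eta)$ for the signed permutation $D\Pi$ and a poly-bounded $f_3$; since the reconstruction $U = H'(PQ)$ is invariant under absorbing $D\Pi$ into $PQ$, the theorem follows after relabeling. The principal obstacle is this last step: eigenvectors of a symmetric matrix are only defined up to signs and cannot be tracked across eigenvalue crossings, which is precisely why the Gaussian-gap estimate on $\beta_s$ must be invoked \emph{before} incorporating the perturbation $E$. The remaining work is bookkeeping the cascading polynomial dependencies on $R$, $\kappa_M$, and $\kappa_U$ that arise through the two Davis--Kahan applications in order to define $f_3$.
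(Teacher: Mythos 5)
Your proposal is correct and follows essentially the same route as the paper's proof: a subspace-level application of the Sin-$\theta$ theorem to $(H'_\Phi)^T H'_\Phi$ and $\wh{H}_\Phi^T\wh{H}_\Phi$ to align $\wh{S}$ with $S'$ via an orthogonal $O$, transferring the Gaussian vector to a clean $Z'\in S'$ whose spectral decomposition is $\sum_s\beta_s(PQ)_s(PQ)_s^T$, invoking the eigengap of Theorem~\ref{thm:foobi-algo-noerror}, and finishing with a per-eigenvector Sin-$\theta$ application. You are in fact somewhat more careful than the paper in tracking the $\norm{\alpha}_2=O(\sqrt{R})$ factor, the Weyl-based eigenvalue separation, and the residual sign/permutation ambiguity of the eigenvectors (which the paper elides and which is harmless for the downstream recovery of $U$).
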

\begin{proof}
The first step is to show that the spaces $S'$ and $\wh{S}$ are close.  This is done via the Sin-$\theta$ theorem, applied to the matrices $(H'_{\Phi})^T H'_{\Phi}$ and $\wh{H}_{\Phi}^T \wh{H}_{\Phi}$.  Let $T'$ and $\wh{T}$ be the spans of the smallest $R$ singular vectors of the two matrices.  By Theorem~\ref{thm:davis-kahan} and the bounds on $\sigma_{R+1}$, we have that there exist orthonormal bases $\Upsilon$ and $\wh{\Upsilon}$ for these spaces such that for some orthonormal matrix $Q'$, 
\[ \norm{ \Upsilon Q' - \wh{\Upsilon} }_F  \le \frac{ \norm{ (H'_{\Phi})^T H'_{\Phi} - \wh{H}_{\Phi}^T \wh{H}_{\Phi} }_F } {\sigma_{\min} (M_\Phi)^2 }.  \]

Now, appealing to the simple fact that for any two matrices $X, Y$, $\norm{X^T X - Y^T Y}_F \le \norm{ X^T (X-Y) + (X^T -Y^T) Y}_F \le \norm{X - Y}_F (\norm{X}_F + \norm{Y}_F)$, we can bound the quantity above by $f_4 (\eta)$ for some poly-bounded function $f_4$.

We can now obtain bases for $\wh{S}$ and $S'$ by simply applying $\psi$ to the columns of the matrices $\wh{\Upsilon}$ and $\Upsilon$ respectively. Let us abuse notation slightly and call these bases $\wh{S}$ and $S'$ as well. By properties of $\psi$, we have that
\begin{equation}\label{eq:S-rotate}
\norm{S' Q' - \wh{S}}_F \le  \norm{ \Upsilon Q' - \wh{\Upsilon} }_F \le f_4(\eta).
\end{equation}

Next, note that a random unit Gaussian vector in the space $\wh{S}$ can be viewed as first picking $v \in \mathcal{N}(0,1)^R$ and taking $\wh{S} v$.  Now, using Theorem~\ref{thm:foobi-algo-noerror} if we consider the matrix $S' v$ (which is a random Gaussian vector in the space $S'$), with probability at least $9/10$, we have an eigenvalue gap of at least $\frac{1}{20 R^2}$.  Thus, using this and~\eqref{eq:S-rotate}, together with the Sin-$\theta$ theorem (used this time with precisely one eigenvector, and thus the rotation matrix disappears), we have that $\norm{ G_i - (PQ)_i } \le 20 R^2 f_4 (\eta)$.  Summing over all $i$ (after taking the square), the theorem follows.
\end{proof}

We can now complete the proof of the main theorem of this section.

\begin{proof}[Proof of Theorem~\ref{thm:foobi-robust}]
Theorem~\ref{thm:foobi-algo-error} shows that the matrix $G$ gives a good approximation to the rotation matrix $(PQ)$ with probability $9/10$. (Since this probability is over the randomness in the algorithm, we can achieve a probability of $1-\gamma$ by running the algorithm $O(\log 1/\gamma)$ times.) We now show that $\wh{H} G \approx H' PQ$: 
\begin{equation}
\norm{\wh{H} G - H' PQ}_F \le \norm{\wh{H} (G - PQ) + (\wh{H} - H') PQ}_F \le f_5 (\eta).
\end{equation}

Note now that $H' PQ$ is precisely $U$! Thus the matrix $\wh{U} := \wh{H} G$ (which we can compute as discussed above) is an approximation up to an error $f_5 (\eta)$.  Finally, to obtain a column $U_i$ of $U$, we reshape $\wh{U}$ into an $n \times n^{\ell -1}$ matrix, apply an SVD, and output the top left-singular-vector. This yields an error $f_6(\eta)$, for some poly-bounded function of $\eta$.
\end{proof}
\pnote{added sentence about failure probability $1-\gamma$}


\subsection{Smoothed analysis}\label{sec:foobi-smooth}
Finally, we show that Theorem~\ref{thm:foobi-robust} can be used with our earlier results to show the following.

\begin{theorem}\label{thm:foobi-smooth}
Suppose $T = \sum_{i \in [R]} \tA_i^{\ot 2\ell} + E$, where $\{A_i\}$ have polynomially bounded length. Given an accuracy parameter $\eps$ and any $0 < \delta < 1/\ell^2$, with probability at least $1 - \exp(-\Omega_{\ell}(n))$ over the perturbation in $\tA$, there is an efficient algorithm that outputs a set of vectors $\{B_i\}_{i=1}^R$ such that 
\[ \min_{\pi} \sum_i \norm{\tA_i - B_{\pi(i)}} \le \eps, \] as long as $R \le \delta n^{\ell}$, and $\norm{E}_F \le \poly(1/n, \rho, \eps)$, for an appropriate polynomial in the arguments.
\end{theorem}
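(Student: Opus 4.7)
The plan is to reduce to Theorem~\ref{thm:foobi-robust}, which delivers robust decomposition as soon as the condition numbers $\kappa_U$ (for the matrix with columns $\tA_i^{\otimes \ell}$) and $\kappa_M$ (for $M_\Phi$, whose columns are $\Phi(\tA_i^{\otimes \ell},\tA_j^{\otimes \ell})$ for $i<j$) are polynomially bounded. Since the $\tA_i$ have polynomially bounded length, both top singular values are automatically polynomial, so it suffices to prove inverse-polynomial lower bounds on $\sigma_R(U)$ and $\sigma_{\binom{R}{2}}(M_\Phi)$ with failure probability $\exp(-\Omega_\ell(n))$. Once these are in hand, Theorem~\ref{thm:foobi-robust} applied with an appropriately small polynomial error tolerance $\|E\|_F \le \poly(1/n,\rho,\epsilon)$ yields the desired recovery guarantee.

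For $\sigma_R(U)$ I would apply Theorem~\ref{thm:columnpolynomials} with the polynomial map $f(x)=x^{\otimes\ell}$. The associated coefficient matrix is essentially the identity on the $D=\binom{n+\ell-1}{\ell}$-dimensional space of symmetric tensors, so $\sigma_{R+\delta'D}(U_f)=1$ as long as $R\le (1-\delta')D$; the hypothesis $R\le\delta n^\ell$ with $\delta<1/\ell^2$ leaves room for some $\delta'=\Omega_\ell(1)$. The theorem then gives $\sigma_R(U)\ge \Omega_\ell((\rho/n)^\ell)/\sqrt{R}$ with failure probability $\exp(-\Omega_\ell(n))$.

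For $\sigma_{\binom{R}{2}}(M_\Phi)$ I would factor $M_\Phi = MP$, where each column of $M\in\R^{n^{2\ell}\times N}$ is one of the order-$2\ell$ monomials appearing in the expansion of $\Phi(\tA_i^{\otimes\ell},\tA_j^{\otimes\ell})$ and $P$ records the coefficients. Unpacking $\Phi=\Psi(\cdot,\cdot)+\Psi(\cdot,\cdot)^{\mathrm{sym}}$ from \eqref{eq:def-psi}, each pair $i<j$ contributes exactly four distinct monomials, namely $\tA_i^{\otimes\ell}\otimes\tA_j^{\otimes\ell}$, the ``swap'' $\tA_i^{\otimes(\ell-1)}\otimes\tA_j\otimes\tA_j^{\otimes(\ell-1)}\otimes\tA_i$, and their $i\leftrightarrow j$ analogs, with coefficients $\pm1$. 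Hence $N=4\binom{R}{2}$, and since the four-entry supports of different columns of $P$ are pairwise disjoint we get $P^TP=4I$, so $\sigma_{\min}(P)=2$ and $\sigma_{\min}(MP)\ge 2\sigma_{\min}(M)$. The task thus reduces to lower bounding $\sigma_{\min}(M)$, which I would do via Theorem~\ref{thm:indepmonomials2} applied with its parameter replaced by $2\ell$.

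The main obstacle is the combinatorial step of bounding the overlap counts $\Delta_s$ for $M$. A careful case analysis on how the four monomial types interact across different pairs gives $\Delta_s = O(1)$ for $s<\ell$ (low disagreement is only possible between the three monomials sharing the same pair $(i_0,j_0)$), $\Delta_s = O(R)$ for $\ell\le s\le 2\ell-1$ (contributions from monomials sharing exactly one of $\{\tA_{i_0},\tA_{j_0}\}$ with the fixed column), and $\Delta_{2\ell} = O(R^2)$ (monomials on disjoint pairs). Plugging $R\le\delta n^\ell$ with $\delta$ small enough in terms of $\ell$ verifies the hypothesis $\sum_s \Delta_s(n/(2\ell))^{2\ell-s}\le c(n/(2\ell))^{2\ell}$; Theorem~\ref{thm:indepmonomials2} then yields $\sigma_{\min}(M)\ge\Omega_\ell((\rho/n)^{2\ell})/\sqrt{N}$ with failure probability $\exp(-\Omega_\ell(n))$. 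Union bounding over the two condition-number estimates and appealing to Theorem~\ref{thm:foobi-robust} completes the proof.
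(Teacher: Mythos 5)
Your proposal is correct and follows essentially the same route as the paper: reduce to Theorem~\ref{thm:foobi-robust} by lower-bounding $\sigma_{\min}(U)$ via the symmetric tensor-power case of Theorem~\ref{thm:columnpolynomials}, and lower-bounding $\sigma_{\min}(M_\Phi)$ by factoring it through the matrix of order-$2\ell$ monomials and applying Theorem~\ref{thm:indepmonomials2} with the same overlap profile ($\Delta_s=O(1)$ for $s<\ell$, $O(R)$ for intermediate $s$, $O(R^2)$ at $s=2\ell$). Your explicit observation that the coefficient matrix $P$ has disjoint $\pm1$ supports with $P^TP=4I$ just makes precise what the paper handles via its $M'=MP$ remark.
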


The proof of this theorem goes via the robust decomposition algorithm presented in Theorem~\ref{thm:foobi-robust}. In order to use the theorem, we need to bound the two condition numbers $\kappa_U$ and $\kappa_M$. Since the columns of $A$ are polynomially bounded, the columns of $U$ and $M_\Phi$ are as well, so $\sigma_{max}(U)$, $\sigma_{max}(M_\Phi)$ are bounded by some polynomial in $n$. Therefore we only need to give lower bounds on $\sigma_{min}(U)$ and $\sigma_{min}(M_\Phi)$. We now use Theorem~\ref{thm:indepmonomials} to prove that these quantities are both polynomially bounded with high probability in a smoothed analysis setting. This would complete the proof of Theorem~\ref{thm:foobi-smooth}. 

\begin{lemma}
Let $U = \widetilde{A}^{\odot \ell}$, and $M_{\Phi}$ be the matrix whose columns are indexed by pairs $i, j \le R$, and whose $\{i,j\}$'th column is $\Phi(\widetilde{A}_i^{\otimes \ell}, \widetilde{A}_j^{\otimes \ell})$. Then for $R \le n^{\ell}/\ell^2$, with probability at least $1-\exp(-\Omega_{\ell}(n))$, we have both $\sigma_R (U)$ and $\sigma_{R (R-1)/2} (M_{\Phi})$ to be $\ge \poly(1/n, \rho)$.
\end{lemma}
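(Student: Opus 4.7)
The plan is to prove the two lower bounds separately, each holding with probability at least $1-\exp(-\Omega_\ell(n))$, and then combine them by a union bound.

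For $\sigma_R(U)$, observe that the columns of $U$ are exactly $\widetilde{A}_i^{\otimes \ell}$, which are symmetric tensors of order $\ell$. This is a direct application of Theorem~\ref{thm:columnpolynomials} (equivalently Theorem~\ref{ithm:sym}) with $f(x)=x^{\otimes \ell}$, whose coefficient matrix is essentially the identity on the $\binom{n+\ell-1}{\ell}$-dimensional subspace of symmetric tensors. Under the hypothesis $R \le n^\ell/\ell^2$, absorbing $\ell$-dependent constants yields $R \le (1-\delta)\binom{n+\ell-1}{\ell}$ for some $\delta\in(0,1)$ depending only on $\ell$, so $\sigma_{R+\delta\binom{n+\ell-1}{\ell}}(U)=\Omega(1)$ and the theorem gives $\sigma_R(U)\ge \Omega_\ell(\rho^\ell/n^\ell)/\sqrt{R}$, which is inverse polynomial, with failure probability $\exp(-\Omega_\ell(n))$.

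For $\sigma_{\binom{R}{2}}(M_\Phi)$, the key idea is to reduce to a monomial matrix and apply Theorem~\ref{thm:indepmonomials}. A direct expansion using the definition of $\Psi$ (equation~\eqref{eq:def-psi}) shows
\[
\Phi(\widetilde{A}_i^{\otimes \ell}, \widetilde{A}_j^{\otimes \ell}) = (\widetilde{A}_i \otimes \widetilde{A}_j)^{\otimes \ell} - (\widetilde{A}_i \otimes \widetilde{A}_j)^{\otimes (\ell-1)}\otimes (\widetilde{A}_j \otimes \widetilde{A}_i) + (\widetilde{A}_j \otimes \widetilde{A}_i)^{\otimes \ell} - (\widetilde{A}_j \otimes \widetilde{A}_i)^{\otimes (\ell-1)}\otimes (\widetilde{A}_i \otimes \widetilde{A}_j).
\]
Hence $M_\Phi = \widetilde{M}\cdot P$, where $\widetilde{M}\in\R^{n^{2\ell}\times 4\binom{R}{2}}$ collects these four tensor monomials over all $\binom{R}{2}$ pairs and $P$ is the coefficient matrix with one block $(+1,-1,+1,-1)^\transpose$ per pair. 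Since the four monomials associated with distinct pairs use distinct patterns of indices, the $\binom{R}{2}$ columns of $P$ have disjoint supports, so $\sigma_{\min}(P)=2$ and $\sigma_{\min}(M_\Phi)\ge 2\,\sigma_{\min}(\widetilde{M})$.

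It remains to apply Theorem~\ref{thm:indepmonomials} at tensor order $2\ell$ to $\widetilde{M}$. A case analysis of the four patterns shows that for a fixed column coming from pair $\{i,j\}$: the three other columns from the same pair differ in exactly $2$, $2\ell-2$, or $2\ell$ positions; for each of the $2(R-2)$ pairs sharing one index with $\{i,j\}$, the four columns differ in exactly $\ell$, $\ell+1$, $2\ell-1$, or $2\ell$ positions; and all columns from disjoint pairs differ in all $2\ell$ positions. Thus $\Delta_s=O_\ell(1)$ for $s\in\{2,2\ell-2\}$, $\Delta_s=O_\ell(R)$ for $s\in\{\ell,\ell+1,2\ell-1\}$, $\Delta_{2\ell}=O_\ell(R^2)$, and $\Delta_s=0$ otherwise. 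Substituting into $\sum_s \Delta_s (n/(2\ell))^{2\ell-s}$, the dominant terms are $O_\ell(R\, n^\ell)$ and $O_\ell(R^2)$, both bounded by $c\cdot(n/(2\ell))^{2\ell}$ for some $c<1$ whenever $R\le n^\ell/\ell^2$ (with the $\ell^2$ absorbing the $\ell$-dependent constants from $(2\ell)^{2\ell}$). Theorem~\ref{thm:indepmonomials} then gives $\sigma_{\min}(\widetilde{M})\ge \Omega_\ell(\rho^{2\ell}/n^{2\ell})/R$ with failure probability $\exp(-\Omega_\ell(n))$, completing the proof.

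\paragraph{Main obstacle.} The main technical hurdle is the overlap count for Theorem~\ref{thm:indepmonomials}: the terms $\Delta_\ell\cdot(n/(2\ell))^\ell = O_\ell(R n^\ell)$ and $\Delta_{2\ell}=O_\ell(R^2)$ both approach the threshold $(n/(2\ell))^{2\ell}$ as $R$ approaches $n^\ell$, so the $\ell$-dependent constants implicit in the hypothesis $R\le n^\ell/\ell^2$ must be chosen with care; getting tight constants may require a more refined decoupling that exploits the factored form $\Phi(\widetilde{A}_i^{\otimes \ell},\widetilde{A}_j^{\otimes \ell})=w_{ij}\otimes v_{ij}$ with $v_{ij}=\widetilde{A}_i\otimes \widetilde{A}_j-\widetilde{A}_j\otimes \widetilde{A}_i$, rather than the black-box application sketched above.
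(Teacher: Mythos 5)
Your proof is correct and follows essentially the same route as the paper: the bound on $\sigma_R(U)$ is quoted from the earlier symmetric-tensor results, and $M_\Phi$ is handled by expanding each column $\Phi(\widetilde{A}_i^{\otimes\ell},\widetilde{A}_j^{\otimes\ell})$ into four tensor monomials, bounding the overlap counts $\Delta_s$ (your values match the paper's: $O_\ell(1)$ at $s\in\{2,2\ell-2\}$, $O_\ell(R)$ at $s\in\{\ell,\ell+1,2\ell-1\}$, $O_\ell(R^2)$ at $s=2\ell$), and invoking Theorem~\ref{thm:indepmonomials}. The only presentational difference is that you make the factorization $M_\Phi=\widetilde{M}P$ with $\sigma_{\min}(P)=2$ explicit, whereas the paper phrases the same step as proving the "stronger statement" that the monomial matrix itself is well-conditioned; your caveat about the $\ell$-dependent constants in the threshold $R\le n^\ell/\ell^2$ is also a fair reading of the paper's intent.
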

\begin{proof}
The desired inequality for the matrix $U$ was already shown in earlier sections. Let us thus consider $M_{\Phi}$.  We can write the $\{i,j\}$'th column as
\[ (M_{\Phi})_{i,j} = (\ta_i^{\ot \ell} \ot \ta_j^{\ot \ell}) - (\ta_i^{\ot (\ell-1)} \ot \ta_j \ot \ta_j^{\ot (\ell -1)} \ot \ta_i) + (\ta_j^{\ot \ell} \ot \ta_i^{\ot \ell}) - (\ta_j^{\ot (\ell-1)} \ot \ta_i \ot \ta_i^{\ot (\ell -1)} \ot \ta_j).  \]

We will show a stronger statement, namely that a matrix with four different columns (corresponding to each term above) for each pair $\{i,j\}$ has $\sigma_{\min} \ge \poly(1/n, \rho)$.  In this matrix, which we call $M_{\Phi}'$, we have two columns for every (ordered) pair $(i,j)$.  The first column is $\ta_i^{\ot \ell} \ot \ta_j^{\ot \ell}$ and the second is $\ta_i^{\ot (\ell-1)} \ot \ta_j \ot \ta_j^{\ot (\ell -1)} \ot \ta_i$.

For any of the columns, we thus have
\begin{align*}
& \Delta_2 = 1 \quad \text{ (same $i,j$, different terms)}\\
&\Delta_{\ell} = R-1 \quad \text{ (same $i$, different $j$)}\\
&\Delta_{\ell+1} = R-1 \quad \text{ (same $i$, different $j$, different terms)}\\
&\Delta_{2\ell-2} = 1 \quad \text{ ($i$ and $j$ swapped, different terms)}\\
&\Delta_{2\ell-1} = R-1 \quad \text{ (same $i$, different $j$, different terms)}\\
&\Delta_{2\ell} = R^2.
\end{align*}
\pnote{Fixed a couple things: $\Delta_0 = 0$ since we don't count the fixed column itself, so I removed this line. $\Delta_{\ell+2}$ changed to $\Delta_{\ell+1}$. $\Delta_{2\ell-2}$ is actually $1$, and added $\Delta_{2\ell - 1} = R-1$. None of these affect the conclusion at all.}

The rest of the $\Delta$ values are zero. Thus, we observe that we can apply Theorem~\ref{thm:indepmonomials2} (with $c=\Omega(1)$), as the dominant terms are the ones corresponding to $\Delta_2, \Delta_\ell, \Delta_{2\ell}$.  This completes the proof.
\end{proof}

\vspace{15pt}

\appendix

\noindent {\LARGE \bf Appendix}

\section{Lemma~\ref{lem:deconditioning}}
\begin{lemma}\label{lem:deconditioning}
Let $X,Y$ be two independent real random variables, for all $a,b\in \mathbb{R}$ such that $\mathbb{P}[X+Y\le b]>0$, we have
\begin{align*}
    \mathbb{P}[X\le a]\le\mathbb{P}[X\le a| X+Y\le b]
\end{align*}
\end{lemma}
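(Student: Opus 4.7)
The statement is equivalent to showing the positive correlation inequality
\[
    \Pr[X \le a,\ X+Y \le b] \;\ge\; \Pr[X \le a]\cdot \Pr[X+Y \le b],
\]
since dividing by $\Pr[X+Y\le b] > 0$ yields the claim. Both events are monotone non-increasing in the variable $X$, which is the underlying source of positive correlation, and I will exploit this structure directly.

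The first step is to condition on $Y$. Let $F$ denote the CDF of $X$. Using the independence of $X$ and $Y$,
\[
    \Pr[X \le a,\ X+Y \le b] \;=\; \int \Pr\bigl[X \le a,\ X \le b - y\bigr]\, dP_Y(y) \;=\; \int F\bigl(\min(a,\,b-y)\bigr)\, dP_Y(y),
\]
while
\[
    \Pr[X \le a]\cdot \Pr[X+Y \le b] \;=\; F(a)\cdot \int F(b-y)\, dP_Y(y).
\]
Since $F$ is non-decreasing, $F(\min(a,b-y)) = \min\bigl(F(a),\,F(b-y)\bigr)$, so it suffices to prove the pointwise inequality
\[
    \min\bigl(F(a),\,F(b-y)\bigr) \;\ge\; F(a)\cdot F(b-y).
\]

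The key observation (and the only real content of the proof) is the elementary fact that $\min(c,v) \ge cv$ whenever $c,v \in [0,1]$: if $c \le v$ then $\min(c,v) = c \ge cv$ because $v \le 1$, and if $c > v$ then $\min(c,v) = v \ge cv$ because $c \le 1$. Applying this with $c = F(a)$ and $v = F(b-y)$ (both of which lie in $[0,1]$ as values of a CDF) gives the pointwise inequality. Integrating over $y$ against $P_Y$ yields $\Pr[X \le a,\ X+Y \le b] \ge \Pr[X \le a]\cdot \Pr[X+Y \le b]$, and dividing by $\Pr[X+Y\le b]>0$ completes the proof.

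There is no real obstacle here; the lemma is a one-line consequence of the FKG-style fact that two non-increasing functions of a single real random variable are positively correlated, applied after conditioning on $Y$. The only thing to be slightly careful about is to verify that both arguments of $\min$ lie in $[0,1]$ so that the elementary inequality $\min(c,v)\ge cv$ applies, which is automatic since they are CDF values.
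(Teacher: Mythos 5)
Your proof is correct, but it takes a genuinely different (though equally elementary) route from the paper's. The paper fixes the threshold $b-a$ and argues at the level of events: it shows $\Pr[X+Y\le b\mid X\le a]\ \ge\ \Pr[Y\le b-a\mid X\le a]=\Pr[Y\le b-a]=\Pr[Y\le b-a\mid X>a]\ \ge\ \Pr[X+Y\le b\mid X>a]$, where the two inclusions come from $\{X\le a, Y\le b-a\}\subseteq\{X+Y\le b\}$ and $\{X>a, X+Y\le b\}\subseteq\{Y\le b-a\}$ and the middle equalities come from independence; averaging over the partition $\{X\le a\}\cup\{X>a\}$ gives $\Pr[X+Y\le b\mid X\le a]\ge\Pr[X+Y\le b]$, and Bayes' rule finishes. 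You instead condition on $Y=y$ and reduce the positive-correlation form $\Pr[X\le a,\ X+Y\le b]\ge\Pr[X\le a]\Pr[X+Y\le b]$ to the pointwise inequality $\min\bigl(F(a),F(b-y)\bigr)\ge F(a)F(b-y)$ for the CDF $F$ of $X$. Both arguments are sound; the paper's avoids any integration and needs no regularity beyond measurability of the events, while yours makes the FKG-type structure explicit and generalizes verbatim to showing that $\{X\le a\}$ is positively correlated with $\{X\le Z\}$ for any $Z$ independent of $X$ (here $Z=b-Y$), which is a slightly more reusable formulation. One small point worth stating explicitly in your write-up: the identity $\Pr[X\le a,\ X+Y\le b]=\int \Pr[X\le \min(a,b-y)]\,dP_Y(y)$ is exactly where independence enters, via Fubini applied to the product measure $P_X\otimes P_Y$.
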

\begin{proof}
WLOG, assume $\mathbb{P}[X\leq a]>0,\mathbb{P}[X>a]>0$, then we have 
\begin{align*}
    \mathbb{P}[X+Y\leq b | X\leq a]&\geq\mathbb{P}[Y\leq b-a|X\leq a]=\mathbb{P}[Y\leq b-a]\\
    &=\mathbb{P}[Y\leq b-a|X> a]\geq \mathbb{P}[X+Y\leq b | X> a]
\end{align*}
Hence $\mathbb{P}[X+Y\leq b | X\leq a]\geq \mathbb{P}[X+Y\leq b]$
, then
\begin{align*}
    \mathbb{P}[X\leq a|X+Y\leq b]=\frac{\mathbb{P}[X\leq a]\mathbb{P}[X+Y\leq b|X\leq a]}{\mathbb{P}[X+Y\leq b]}\geq \mathbb{P}[X\leq a]
\end{align*}
\end{proof}

\section{Proof of Theorem~\ref{thm:newvectorCW}} \label{sec:newvectorCW}
The proof of Theorem~\ref{thm:newvectorCW} is almost identical to Theorem~\ref{thm:newdecoupling}.


\begin{proof}[Proof of Theorem~\ref{thm:newvectorCW}]
By Proposition~\ref{prop:decoupling}, it suffices to show that
\begin{align*}
    \mathbb{P}[\norm{\hat{g}(u+z_0,z_1,\cdots,z_{\ell-1})}_2<c(\ell)\epsilon\eta\cdot\frac{\rho^\ell}{n^\ell}]<\epsilon^{c'(\ell)\delta n}
\end{align*}
where $z_0\sim N(0,\rho^2(\ell+1)/(2n\ell))^n$ and $z_1,z_2,\cdots,z_{\ell-1}\sim N(0,\rho^2/(2n\ell))$, $c(\ell),c'(\ell)>0$ are constants depending only on $\ell$. Let $W$ be the span of the top $\delta n^\ell$ right singular vectors of $M$. Observe that
$$   \norm{\hat{g}(u+z_0,z_1,\cdots,z_{\ell-1})}_2=\norm{M (u+z_0)\otimes z_1\otimes\cdots\otimes z_{\ell-1}}_2 \ge \eta \norm{\Pi_W ~(u+z_0)\otimes z_1\otimes\cdots\otimes z_{\ell-1}}_2.$$

The theorem then follows by applying Lemma~\ref{lem:quantitativebound} with $x_1=u,x_2=x_3=\cdots=x_\ell=0$ and $p=\epsilon^{1/\ell}$.
\anote{4/2:Shortened the above proof.}

\end{proof}

\section{ Combinatorial Proof of Theorem~\ref{thm:newdecoupling} and Theorem~\ref{thm:newvectorCW} for $\ell=2$.} \label{sec:combinatorialproof}
We now give a self-contained combinatorial proof of Theorem~\ref{thm:newdecoupling} for $\ell=2$, that uses decoupling and Lemma~\ref{lem:quantitativebound}. Let $D'$ be the dimension of the subspace $W$, and let $M_1, M_2, \dots, M_{D'}$ be a basis for $W$. Let $z \sim N(0,\rho^2)^n$ and $z_1, z_2, \dots, z_r \sim N(0, \rho^2/r)^n$ be independent Gaussian random vectors for $r = O(\sqrt{n})$. Note that $x + z_1 \pm z_2 \pm z_3 \pm \dots \pm z_r$ are all identically distributed as $\tilde{x}$. 

Consider the following process for generating $\tilde{x}=x+z$. We first generate $z_1, z_2, \dots, z_r$ and random signs $\zeta=(\zeta_2, \zeta_3, \dots, \zeta_r) \in \set{\pm 1}^{r-1}$ all independently, and return $z=z_1+\sum_{i=2}^r \zeta_2 z_2$. It is easy to see that $z \sim N(0,\rho^2)$. We will now prove that at most one of the $2^{r-1}$ signed combinations $z_1 \pm z_2 \pm \dots \pm z_r$ has a non-negligible projection onto $W$.

Consider any fixed pair $\zeta, \zeta' \in \set{\pm 1}^{r-1}$, and let $u=z_1+\sum_{i=2}^r \zeta_i z_i$ and $u'=z_1+\sum_{i=2}^r \zeta'_i z_i$. We will use the basic decoupling Lemma~\ref{lem:deg-l} to show w.h.p. at least one of $\norm{\Pi_{W} u^{\otimes 2}}_2$ or $\norm{\Pi_{W} (u')^{\otimes 2}}_2$ is non-negligible. 
Using decoupling (with $\ell=2$) in Lemma~\ref{lem:deg-l} we have for each $j \in [D']$ 
\begin{align}
\Iprod{M_j, (x+u)^{\otimes 2}} - \Iprod{M_j, (x+u')^{\otimes 2} } & = 4\Iprod{M_j, (x+u+u') \otimes (u-u')} \nonumber \\
&= 4 \Iprod{M_j, (x+v_1) \otimes v_2}, \\
\text{ where } v_1&=z_1+\sum_{2\le i \le r: \zeta_i = \zeta'_i} \zeta_i z_i, ~ v_2 = \sum_{2\le i \le r: \zeta_i \ne \zeta'_i} \zeta_i z_i. \label{eq:bipartite}
 \end{align}

Also from Lemma~\ref{lem:quantitativebound}, we have that the above decoupled product $(x+v_1) \otimes v_2$ has a non-negligible projection onto $W$; hence with probability at least $1- \exp\big( - \Omega(\delta n) \big)$,
\begin{align*}
\norm{\Pi_{W} (x+v_1) \otimes v_2}_2^2 = \sum_{j=1}^{D'} \Iprod{M_j, (x+v_1) \otimes v_2}^2 &\ge  \frac{\Omega(\rho^4)}{r^2 n^{4}}\\
\text{ i.e., }~ \exists j^* \in [D'] \text{ s.t. } \abs{\Iprod{M_{j^*}, (x+v_1) \otimes v_2}} &\ge \frac{\Omega(\rho^2)}{r n^{3}}. 
 \end{align*}
 
Applying \eqref{eq:bipartite} with the above inequality for $j^*$,
\begin{align}
    \abs{\Iprod{M_{j^*}, (x+u)^{\otimes 2}} - \Iprod{M_{j^*}, (x+u')^{\otimes 2} }} &\ge  \frac{\Omega(\rho^2)}{r n^{3}}. \nonumber \\
    \text{Hence, } \norm{\Pi_W (x+u)^{\otimes 2}}_2 + \norm{\Pi_W (x+u')^{\otimes 2} }_2 &\ge \Omega\Big( \frac{\rho^2}{r n^{3}} \Big), \label{eq:decoup:pairwise}
\end{align}
with probability at least $1-\exp(-\Omega(\delta n))$

Since $r=c_1 \delta n$ (for a sufficiently small constant $c_1 >0$), we can apply \eqref{eq:decoup:pairwise} for each of the $2^{2r-1}$ pairs of $\zeta, \zeta' \in \set{\pm 1}^{r-1}$, and union bound over them to conclude that 
with probability at least $1-\exp(-\Omega(\delta n))$
$$\forall \zeta \ne \zeta' \in \set{\pm 1}^{r-1}, ~ \max \Big\{\big|\iprod{M_j, (x+z_1+\sum_{i=2}^r \zeta_i z_i )^{\otimes 2}}\big|,\big|\iprod{M_j, (x+z_1 + \sum_{i=2}^r \zeta'_i z_i)^{\otimes 2} }\big|\Big\} \ge \frac{\Omega(\rho^2)}{r n^{3}}.$$

Hence w.h.p. at most one of the $2^{r-1}$ signed combinations $x+z_1 \pm z_2 \pm \dots \pm z_r$ has a negligible projection onto $W$. Hence, with probability at least $1-2^{-r+1}$ i.e., with probability at least $1-2^{-\Omega(\delta n)}$, $\norm{\Pi_W \tilde{x}^{\otimes 2}}_2 \ge \Omega(\rho^2)/n^{4}$.  This establishes Theorem~\ref{thm:newdecoupling}. An identical proof also works for Theorem~\ref{thm:newvectorCW} when $\ell=2$.

\section*{Acknowledgements}

We would like to thank Anindya De for several helpful discussions, particularly those that led to the algorithm for robust subspace recovery. 

\bibliographystyle{alpha}
\bibliography{aravind}

\newcommand{\etalchar}[1]{$^{#1}$}
\begin{thebibliography}{dlVKKV05}

\bibitem[ABG{\etalchar{+}}14]{ABGRV14}
Joseph Anderson, Mikhail Belkin, Navin Goyal, Luis Rademacher, and James~R.
  Voss.
\newblock The more, the merrier: the blessing of dimensionality for learning
  large {G}aussian mixtures.
\newblock In {\em Proceedings of The 27th Conference on Learning Theory, {COLT}
  2014, Barcelona, Spain, June 13-15, 2014}, pages 1135--1164, 2014.

\bibitem[ABPW17]{lmc2}
Omer Angel, S{\'e}bastien Bubeck, Yuval Peres, and Fan Wei.
\newblock Local max-cut in smoothed polynomial time.
\newblock In {\em Proceedings of the 49th Annual ACM SIGACT Symposium on Theory
  of Computing}, STOC 2017, pages 429--437, New York, NY, USA, 2017. ACM.

\bibitem[ADM{\etalchar{+}}18]{ADMPSV18}
Nima Anari, Constantinos Daskalakis, Wolfgang Maass, Christos Papadimitriou,
  Amin Saberi, and Santosh Vempala.
\newblock Smoothed analysis of discrete tensor decomposition and assemblies of
  neurons.
\newblock In {\em Advances in Neural Information Processing Systems}, pages
  10880--10890, 2018.

\bibitem[AFCC04]{DLCChigher}
Laurent Albera, Anne Ferréol, Pierre Comon, and Pascal Chevalier.
\newblock Blind identification of overcomplete mixtures of sources (biome).
\newblock {\em Linear Algebra and its Applications}, 391:3 -- 30, 2004.
\newblock Special Issue on Linear Algebra in Signal and Image Processing.

\bibitem[AGH{\etalchar{+}}12]{AGHKT12}
Anima Anandkumar, Rong Ge, Daniel Hsu, Sham~M Kakade, and Matus Telgarsky.
\newblock Tensor decompositions for learning latent variable models.
\newblock {\em arXiv preprint arXiv:1210.7559}, 2012.

\bibitem[AHK12]{AHK12}
Animashree Anandkumar, Daniel Hsu, and Sham~M Kakade.
\newblock A method of moments for mixture models and hidden markov models.
\newblock {\em arXiv preprint arXiv:1203.0683}, 2012.

\bibitem[AMR09]{AMR09}
Elizabeth~S Allman, Catherine Matias, and John~A Rhodes.
\newblock Identifiability of parameters in latent structure models with many
  observed variables.
\newblock {\em The Annals of Statistics}, 37(6A):3099--3132, 2009.

\bibitem[BCMV14]{BCMV}
Aditya Bhaskara, Moses Charikar, Ankur Moitra, and Aravindan Vijayaraghavan.
\newblock Smoothed analysis of tensor decompositions.
\newblock In {\em Proceedings of the 46th Symposium on Theory of Computing
  (STOC)}. ACM, 2014.

\bibitem[BCV14]{BCV}
Aditya Bhaskara, Moses Charikar, and Aravindan Vijayaraghavan.
\newblock Uniqueness of tensor decompositions with applications to polynomial
  identifiability.
\newblock {\em Proceedings of the Conference on Learning Theory (COLT).}, 2014.

\bibitem[Bha97]{Bhatia}
Rajendra Bhatia.
\newblock {\em Matrix Analysis}, volume 169.
\newblock Springer, 1997.

\bibitem[BKS15]{BKS15}
Boaz Barak, Jonathan~A. Kelner, and David Steurer.
\newblock Dictionary learning and tensor decomposition via the sum-of-squares
  method.
\newblock In {\em Proceedings of the Forty-Seventh Annual ACM on Symposium on
  Theory of Computing}, STOC '15, pages 143--151, New York, NY, USA, 2015. ACM.

\bibitem[BV06]{BV06}
Ren{\'{e}} Beier and Berthold V{\"{o}}cking.
\newblock Typical properties of winners and losers in discrete optimization.
\newblock {\em {SIAM} J. Comput.}, 35(4):855--881, 2006.

\bibitem[Car91]{Cardoso}
J.~. Cardoso.
\newblock Super-symmetric decomposition of the fourth-order cumulant tensor.
  blind identification of more sources than sensors.
\newblock In {\em [Proceedings] ICASSP 91: 1991 International Conference on
  Acoustics, Speech, and Signal Processing}, pages 3109--3112 vol.5, April
  1991.

\bibitem[CGLM08]{Comonetal}
P.~Comon, G.~Golub, L.~Lim, and B.~Mourrain.
\newblock Symmetric tensors and symmetric tensor rank.
\newblock {\em SIAM Journal on Matrix Analysis and Applications},
  30(3):1254--1279, 2008.

\bibitem[CW01]{CarberyWright}
Anthony Carbery and James Wright.
\newblock Distributional and {$L^q$} norm inequalities for polynomials over
  convex bodies in {$\Bbb R^n$}.
\newblock {\em Math. Res. Lett.}, 8(3):233--248, 2001.

\bibitem[DH83]{DH83}
David~L Donoho and Peter~J Huber.
\newblock The notion of breakdown point.
\newblock {\em A festschrift for Erich L. Lehmann}, 157184, 1983.

\bibitem[DK70]{Davis1970}
Chandler Davis and W.~M. Kahan.
\newblock The rotation of eigenvectors by a perturbation. {III}.
\newblock {\em {SIAM} Journal on Numerical Analysis}, 7(1):1--46, mar 1970.

\bibitem[DLCC07]{DLCC}
L.~De~Lathauwer, J.~Castaing, and J.~Cardoso.
\newblock Fourth-order cumulant-based blind identification of underdetermined
  mixtures.
\newblock {\em IEEE Trans. on Signal Processing}, 55(6):2965--2973, 2007.

\bibitem[dlPMS95]{dlP}
Victor~H. de~la Pena and S.~J. Montgomery-Smith.
\newblock Decoupling inequalities for the tail probabilities of multivariate
  $u$-statistics.
\newblock {\em Ann. Probab.}, 23(2):806--816, 04 1995.

\bibitem[dlVKKV05]{dlVKKV05}
W~Fernandez de~la Vega, Marek Karpinski, Ravi Kannan, and Santosh Vempala.
\newblock Tensor decomposition and approximation schemes for constraint
  satisfaction problems.
\newblock In {\em Proceedings of the thirty-seventh annual ACM symposium on
  Theory of computing}, pages 747--754. ACM, 2005.

\bibitem[Edd96]{Edd96}
Sean~R Eddy.
\newblock Hidden markov models.
\newblock {\em Current opinion in structural biology}, 6(3):361--365, 1996.

\bibitem[ER17]{lmc1}
Michael Etscheid and Heiko R\"{o}glin.
\newblock Smoothed analysis of local search for the maximum-cut problem.
\newblock {\em ACM Trans. Algorithms}, 13(2):25:1--25:12, March 2017.

\bibitem[GHK15]{GHK}
Rong Ge, Qingqing Huang, and Sham~M. Kakade.
\newblock Learning mixtures of {G}aussians in high dimensions.
\newblock In {\em Proceedings of the Forty-Seventh Annual {ACM} on Symposium on
  Theory of Computing, {STOC} 2015, Portland, OR, USA, June 14-17, 2015}, pages
  761--770, 2015.

\bibitem[GVX14]{GVX14}
Navin Goyal, Santosh Vempala, and Ying Xiao.
\newblock Fourier {PCA} and robust tensor decomposition.
\newblock In {\em Symposium on Theory of Computing, {STOC} 2014, New York, NY,
  USA, May 31 - June 03, 2014}, pages 584--593, 2014.

\bibitem[GY08]{GY08}
Mark Gales and Steve Young.
\newblock The application of hidden markov models in speech recognition.
\newblock {\em Foundations and Trends in Signal Processing}, 1(3):195--304,
  2008.

\bibitem[Har70]{Har70}
Richard~A Harshman.
\newblock Foundations of the parafac procedure: models and conditions for an
  explanatory multimodal factor analysis.
\newblock 1970.

\bibitem[H{\aa}s90]{Has90}
Johan H{\aa}stad.
\newblock Tensor rank is np-complete.
\newblock {\em Journal of Algorithms}, 11(4):644--654, 1990.

\bibitem[HGKD15]{Huangetal}
Qingqing Huang, Rong Ge, Sham Kakade, and Munther Dahleh.
\newblock Minimal realization problems for hidden markov models.
\newblock 2015.

\bibitem[HK12]{HK12}
Daniel Hsu and Sham~M Kakade.
\newblock Learning {G}aussian mixture models: Moment methods and spectral
  decompositions.
\newblock {\em arXiv preprint arXiv:1206.5766}, 2012.

\bibitem[HM13]{HM13}
Moritz Hardt and Ankur Moitra.
\newblock Algorithms and hardness for robust subspace recovery.
\newblock In {\em Conference on Learning Theory}, pages 354--375, 2013.

\bibitem[HSSS16]{HSSS16}
Samuel~B. Hopkins, Tselil Schramm, Jonathan Shi, and David Steurer.
\newblock Fast spectral algorithms from sum-of-squares proofs: Tensor
  decomposition and planted sparse vectors.
\newblock In {\em Proceedings of the Forty-eighth Annual ACM Symposium on
  Theory of Computing}, STOC '16, pages 178--191, New York, NY, USA, 2016. ACM.

\bibitem[Kru77]{Kru77}
Joseph~B Kruskal.
\newblock Three-way arrays: rank and uniqueness of trilinear decompositions,
  with application to arithmetic complexity and statistics.
\newblock {\em Linear algebra and its applications}, 18(2):95--138, 1977.

\bibitem[KST09]{KST09}
A.~T. Kalai, A.~Samorodnitsky, and S.~Teng.
\newblock Learning and smoothed analysis.
\newblock In {\em 2009 50th Annual IEEE Symposium on Foundations of Computer
  Science}, pages 395--404, Oct 2009.

\bibitem[Lov10]{Lovett}
Shachar Lovett.
\newblock An elementary proof of anti-concentration of polynomials in gaussian
  variables.
\newblock {\em Electronic Colloquium on Computational Complexity {(ECCC)}},
  17:182, 2010.

\bibitem[MO11]{MO11}
Ankur Moitra and Ryan O'Donnell.
\newblock Pareto optimal solutions for smoothed analysts.
\newblock In {\em Proceedings of the Forty-third Annual ACM Symposium on Theory
  of Computing}, STOC '11, pages 225--234, New York, NY, USA, 2011. ACM.

\bibitem[MSS16]{MSS}
Tengyu Ma, Jonathan Shi, and David Steurer.
\newblock Polynomial-time tensor decompositions with sum-of-squares.
\newblock In {\em Foundations of Computer Science (FOCS), 2016 IEEE 57th Annual
  Symposium on}, pages 438--446. IEEE, 2016.

\bibitem[MV10]{MV10}
Ankur Moitra and Gregory Valiant.
\newblock Settling the polynomial learnability of mixtures of {G}aussians.
\newblock In {\em Foundations of Computer Science (FOCS), 2010 51st Annual IEEE
  Symposium on}, pages 93--102. IEEE, 2010.

\bibitem[NSV03]{NSV03}
F.~{Nazarov}, M.~{Sodin}, and A.~{Vol'berg}.
\newblock {The geometric Kannan-Lov\'asz-Simonovits lemma, dimension-free
  estimates for the distribution of the values of polynomials, and the
  distribution of the zeros of random analytic functions.}
\newblock {\em {St. Petersbg. Math. J.}}, 14(2):214--234, 2003.

\bibitem[RL05]{RL05}
Peter~J Rousseeuw and Annick~M Leroy.
\newblock {\em Robust regression and outlier detection}, volume 589.
\newblock John wiley \& sons, 2005.

\bibitem[Rou84]{Rou84}
Peter~J Rousseeuw.
\newblock Least median of squares regression.
\newblock {\em Journal of the American statistical association},
  79(388):871--880, 1984.

\bibitem[RV08]{RudelsonV}
Mark Rudelson and Roman Vershynin.
\newblock The littlewood–offord problem and invertibility of random matrices.
\newblock {\em Advances in Mathematics}, 218(2):600 -- 633, 2008.

\bibitem[SKLV17]{Sharanetal}
Vatsal Sharan, Sham Kakade, Percy Liang, and Gregory Valiant.
\newblock Learning overcomplete hmms.
\newblock {\em CoRR}, abs/1711.02309, 2017.

\bibitem[ST04]{ST00}
Daniel~A. Spielman and Shang-Hua Teng.
\newblock Smoothed analysis of algorithms: Why the simplex algorithm usually
  takes polynomial time.
\newblock {\em J. ACM}, 51(3):385--463, May 2004.

\bibitem[Str83]{Strassen1983Rank}
V.~Strassen.
\newblock Rank and optimal computation of generic tensors.
\newblock {\em Linear Algebra and its Applications}, 52:645 -- 685, 1983.

\bibitem[Tao11]{Taobook}
Terence Tao.
\newblock {Topics in random matrix theory}.
\newblock {\em Book By Terry Tao}, 2011.

\bibitem[Wed72]{Wed72}
Per-{\AA}ke Wedin.
\newblock Perturbation bounds in connection with singular value decomposition.
\newblock {\em BIT Numerical Mathematics}, 12(1):99--111, 1972.

\bibitem[YWS15]{Yu2015}
Y.~Yu, T.~Wang, and R.~J. Samworth.
\newblock A useful variant of the davis{\textendash}kahan theorem for
  statisticians.
\newblock {\em Biometrika}, 102(2):315--323, apr 2015.

\end{thebibliography}

\end{document}